\newtheorem{theorem}{Theorem}
\newtheorem{lemma}[theorem]{Lemma}
\newtheorem{claim}[theorem]{Claim}
\newtheorem{observation}[theorem]{Observation}
\newtheorem{corollary}[theorem]{Corollary}
\newtheorem{remark}[theorem]{Remark}
\newtheorem{definition}[theorem]{Definition}
\newtheorem*{lemhardness}{Lemma \ref{hardness}}
\def\prob#1#2#3{\goodbreak\begin{list}{}{\labelwidth\z@ \itemindent-\leftmargin
                        \itemsep\z@  \topsep6\p@\@plus6\p@
                        \let\makelabel\descriptionlabel}
                \item[\it Name]#1
               \item[\it Instance]                #2
                \item[\it Output]#3
                \end{list}}
\def\Zmu{\mu}
\def\Rplus{{\mathbb R}_+}
\def\A{A}
\def\epsilon{\varepsilon}
\def\condV{\mathrm{cond}(\mathcal{V})}
\def\zerof{f_\mathsf{zero}} 
\def\onef{f_\mathsf{one}} 
\def\allzerof{f_\mathsf{allzero}} 
\def\allonef{f_\mathsf{allone}} 
\def\eqf{f_\mathsf{EQ}} 
\def\evenf{f_\mathsf{even}} 
\def\oddf{f_\mathsf{odd}}
\def\Hyper2Spinf{\#\mathsf{Hyper2Spin}(f,\Delta,c)} 
\newcommand{\nCSP}[1]{\#{\mathsf{CSP}}(#1)}
\newcommand{\CSP}[1]{{\mathsf{CSP}}(#1)}
\newcommand{\nCSPD}[1]{\#{\mathsf{CSP}_{\Delta}}(#1)}
\newcommand{\ACSPD}[1]{\#{\mathsf{CSP}_{\Delta,c}}(#1)}
\def\eq{\mathsf{eq}}
\def\hard{\mathsf{hard}}
\def\NP{\mathsf{NP}}
\def\RP{\mathsf{RP}}
\def\EASYk{\mathsf{EASY}(k)}
\def\Band{\mathsf{AND}}
\def\Bor{\mathsf{OR}}
\def\maj{\mathsf{Maj}}
\def\zeros{\mathbf{0}}
\def\ones{\mathbf{1}}
\def\minority{\mathsf{Minority}}
\begin{document}  
\title{The complexity of approximately counting in 2-spin systems on $k$-uniform bounded-degree hypergraphs\thanks
{ An extended abstract of this paper appears in the proceedings of SODA 2016.
  The research leading to these results has received funding from the European Research Council under
  the European Union's Seventh Framework Programme (FP7/2007-2013) ERC grant agreement no.\ 334828. The paper
  reflects only the authors' views and not the views of the ERC or the European Commission.
  The European Union is not liable for any use that may be made of the information contained therein. }
}
\author{
Andreas Galanis\thanks{Department of Computer Science, University of Oxford, Wolfson Building, Parks Road, Oxford, OX1~3QD, UK. {\tt andreas.galanis@cs.ox.ac.uk}, {\tt leslie.goldberg@cs.ox.ac.uk}}   \and
  Leslie Ann Goldberg$^\dagger$ 
  }

\begin{filecontents}{\jobname.bbl}

\end{filecontents}  
 
\begin{filecontents}{\jobname.bib}

 @article{LV,
  author    = {Michael Luby and
               Eric Vigoda},
  title     = {Fast convergence of the Glauber dynamics for sampling independent
               sets},
  journal   = {Random Struct. Algorithms},
  volume    = {15},
  number    = {3-4},
  pages     = {229--241},
  year      = {1999},
  url       = {http://onlinelibrary.wiley.com/doi/10.1002/(SICI)1098-2418(199910/12)15:3/4<229::AID-RSA3>3.0.CO;2-X/abstract},
  timestamp = {Sat, 14 Apr 2012 16:35:22 +0200},
  biburl    = {http://dblp.uni-trier.de/rec/bib/journals/rsa/LubyV99},
  bibsource = {dblp computer science bibliography, http://dblp.org}
} 
 
 @article{BoolBlocks,
  author    = {Elmar B{\"{o}}hler and
               Nadia Creignou and
               Matthias Galota and
               Steffen Reith and
               Henning Schnoor and
               Heribert Vollmer},
  title     = {Boolean Circuits as a Data Structure for Boolean Functions: Efficient
               Algorithms and Hard Problems},
  journal   = {Logical Methods in Computer Science},
  volume    = {8},
  number    = {3},
  year      = {2010},
  url       = {http://dx.doi.org/10.2168/LMCS-8(3:31)2012},
  doi       = {10.2168/LMCS-8(3:31)2012},
  timestamp = {Mon, 15 Oct 2012 13:17:50 +0200},
  biburl    = {http://dblp.uni-trier.de/rec/bib/journals/corr/abs-1009-1208},
  bibsource = {dblp computer science bibliography, http://dblp.org}
}

@inproceedings{Cai,
  author    = {Jin{-}Yi Cai and
               Xi Chen and
               Heng Guo and
               Pinyan Lu},
  title     = {Inapproximability after Uniqueness Phase Transition in Two-Spin Systems},
  booktitle = {Combinatorial Optimization and Applications - 6th International Conference,
               {COCOA} 2012, Banff, AB, Canada, August 5-9, 2012. Proceedings},
  pages     = {336--347},
  year      = {2012},
  crossref  = {DBLP:conf/cocoa/2012},
  url       = {http://dx.doi.org/10.1007/978-3-642-31770-5_30},
  doi       = {10.1007/978-3-642-31770-5_30},
  timestamp = {Tue, 14 Aug 2012 10:55:27 +0200},
  biburl    = {http://dblp.uni-trier.de/rec/bib/conf/cocoa/CaiCGL12},
  bibsource = {dblp computer science bibliography, http://dblp.org}
}

@article{CLX,
  author    = {Jin{-}Yi Cai and
               Pinyan Lu and
               Mingji Xia},
  title     = {The complexity of complex weighted Boolean {\#}{CSP}},
  journal   = {J. Comput. Syst. Sci.},
  volume    = {80},
  number    = {1},
  pages     = {217--236},
  year      = {2014},
  url       = {http://dx.doi.org/10.1016/j.jcss.2013.07.003},
  doi       = {10.1016/j.jcss.2013.07.003},
  timestamp = {Thu, 31 Oct 2013 14:37:01 +0100},
  biburl    = {http://dblp.uni-trier.de/rec/bib/journals/jcss/CaiLX14},
  bibsource = {dblp computer science bibliography, http://dblp.org}
}

@article{CH,
  author    = {Nadia Creignou and
               Miki Hermann},
  title     = {Complexity of Generalized Satisfiability Counting Problems},
  journal   = {Inf. Comput.},
  volume    = {125},
  number    = {1},
  pages     = {1--12},
  year      = {1996},
  url       = {http://dx.doi.org/10.1006/inco.1996.0016},
  doi       = {10.1006/inco.1996.0016},
  timestamp = {Wed, 06 Jul 2011 21:24:05 +0200},
  biburl    = {http://dblp.uni-trier.de/rec/bib/journals/iandc/CreignouH96},
  bibsource = {dblp computer science bibliography, http://dblp.org}
}

 @article{CKZ,
  author    = {Nadia Creignou and
               Phokion G. Kolaitis and
               Bruno Zanuttini},
  title     = {Preferred representations of Boolean relations},
  journal   = {Electronic Colloquium on Computational Complexity {(ECCC)}},
     volume={119},
  year      = {2005},
  url       = {http://eccc.hpi-web.de/eccc-reports/2005/TR05-119/index.html},
  timestamp = {Wed, 07 Dec 2011 17:05:30 +0100},
  biburl    = {http://dblp.uni-trier.de/rec/bib/journals/eccc/ECCC-TR05-119},
  bibsource = {dblp computer science bibliography, http://dblp.org}
}

@article{BDK,
  author    = {Magnus Bordewich and
               Martin E. Dyer and
               Marek Karpinski},
  title     = {Path coupling using stopping times and counting independent sets and
               colorings in hypergraphs},
  journal   = {Random Struct. Algorithms},
  volume    = {32},
  number    = {3},
  pages     = {375--399},
  year      = {2008},
  url       = {http://dx.doi.org/10.1002/rsa.20204},
  doi       = {10.1002/rsa.20204},
  timestamp = {Wed, 29 Oct 2008 19:58:57 +0100},
  biburl    = {http://dblp.uni-trier.de/rec/bib/journals/rsa/BordewichDK08},
  bibsource = {dblp computer science bibliography, http://dblp.org}
}

@article{Hubie,
  author    = {Hubie Chen},
  title     = {A rendezvous of logic, complexity, and algebra},
  journal   = {{ACM} Comput. Surv.},
  volume    = {42},
  number    = {1},
  year      = {2009},
  url       = {http://doi.acm.org/10.1145/1592451.1592453},
  doi       = {10.1145/1592451.1592453},
  timestamp = {Mon, 14 Dec 2009 14:35:20 +0100},
  biburl    = {http://dblp.uni-trier.de/rec/bib/journals/csur/Chen09},
  bibsource = {dblp computer science bibliography, http://dblp.org}
}

@inproceedings{dalmau,
  author    = {V{\'{\i}}ctor Dalmau and
               Daniel K. Ford},
  title     = {Generalized Satisfability with Limited Occurrences per Variable: {A}
               Study through Delta-Matroid Parity},
  booktitle = {Mathematical Foundations of Computer Science 2003, 28th International
               Symposium, {MFCS} 2003, Bratislava, Slovakia, August 25-29, 2003,
               Proceedings},
  pages     = {358--367},
  year      = {2003},
  url       = {http://dx.doi.org/10.1007/978-3-540-45138-9_30},
  doi       = {10.1007/978-3-540-45138-9_30},
  timestamp = {Tue, 05 Jul 2011 11:08:20 +0200},
  biburl    = {http://dblp.uni-trier.de/rec/bib/conf/mfcs/DalmauF03},
  bibsource = {dblp computer science bibliography, http://dblp.org}
}

  @article{DFJ,
  author    = {Martin E. Dyer and
               Alan M. Frieze and
               Mark Jerrum},
  title     = {On Counting Independent Sets in Sparse Graphs},
  journal   = {{SIAM} J. Comput.},
  volume    = {31},
  number    = {5},
  pages     = {1527--1541},
  year      = {2002},
  url       = {http://dx.doi.org/10.1137/S0097539701383844},
  doi       = {10.1137/S0097539701383844},
  timestamp = {Mon, 12 Sep 2011 16:10:10 +0200},
  biburl    = {http://dblp.uni-trier.de/rec/bib/journals/siamcomp/DyerFJ02},
  bibsource = {dblp computer science bibliography, http://dblp.org}
}

  @article{DGJR,
  author    = {Martin E. Dyer and
               Leslie Ann Goldberg and
               Markus Jalsenius and
               David Richerby},
  title     = {The complexity of approximating bounded-degree Boolean {\#}{CSP}},
  journal   = {Inf. Comput.},
  volume    = {220},
  pages     = {1--14},
  year      = {2012},
  url       = {http://dx.doi.org/10.1016/j.ic.2011.12.007},
  doi       = {10.1016/j.ic.2011.12.007},
  timestamp = {Mon, 03 Dec 2012 21:19:54 +0100},
  biburl    = {http://dblp.uni-trier.de/rec/bib/journals/iandc/DyerGJR12},
  bibsource = {dblp computer science bibliography, http://dblp.org}
}
 
 @article{GSV:2spin,
  author    = {Andreas Galanis and
               Daniel {\v{S}}tefankovi{\v{c}} and
               Eric Vigoda},
  title     = {Inapproximability of the Partition Function for the Antiferromagnetic
               {I}sing and Hard-Core Models},
  journal   = {CoRR},
  volume    = {abs/1203.2226},
  year      = {2012},
  url       = {http://arxiv.org/abs/1203.2226},
  timestamp = {Wed, 10 Oct 2012 21:28:49 +0200},
  biburl    = {http://dblp.uni-trier.de/rec/bib/journals/corr/abs-1203-2226},
  bibsource = {dblp computer science bibliography, http://dblp.org}
}

@article{Satbounded,
  author    = {Jan Kratochv{\'{\i}}l and
               Petr Savick{\'{y}} and
               Zsolt Tuza},
  title     = {One More Occurrence of Variables Makes Satisfiability Jump From Trivial
               to NP-Complete},
  journal   = {{SIAM} J. Comput.},
  volume    = {22},
  number    = {1},
  pages     = {203--210},
  year      = {1993},
  url       = {http://dx.doi.org/10.1137/0222015},
  doi       = {10.1137/0222015},
  timestamp = {Mon, 12 Sep 2011 16:10:08 +0200},
  biburl    = {http://dblp.uni-trier.de/rec/bib/journals/siamcomp/KratochvilST93},
  bibsource = {dblp computer science bibliography, http://dblp.org}
}

 @inproceedings{LLY,
  author    = {Liang Li and
               Pinyan Lu and
               Yitong Yin},
  title     = {Correlation Decay up to Uniqueness in Spin Systems},
  booktitle = {Proceedings of the Twenty-Fourth Annual {ACM-SIAM} Symposium on Discrete
               Algorithms, {SODA} 2013, New Orleans, Louisiana, USA, January 6-8,
               2013},
  pages     = {67--84},
  year      = {2013},
  url       = {http://dx.doi.org/10.1137/1.9781611973105.5},
  doi       = {10.1137/1.9781611973105.5},
  timestamp = {Wed, 12 Feb 2014 17:08:16 +0100},
  biburl    = {http://dblp.uni-trier.de/rec/bib/conf/soda/LiLY13},
  bibsource = {dblp computer science bibliography, http://dblp.org}
}

@inproceedings{LL,
  author    = {Jingcheng Liu and
               Pinyan Lu},
  title     = {{FPTAS} for Counting Monotone {CNF}},
  booktitle = {Proceedings of the Twenty-Sixth Annual {ACM-SIAM} Symposium on Discrete
               Algorithms, {SODA} 2015, San Diego, CA, USA, January 4-6, 2015},
  pages     = {1531--1548},
  year      = {2015},
  url       = {http://dx.doi.org/10.1137/1.9781611973730.101},
  doi       = {10.1137/1.9781611973730.101},
  timestamp = {Wed, 28 Jan 2015 11:43:18 +0100},
  biburl    = {http://dblp.uni-trier.de/rec/bib/conf/soda/LiuL15},
  bibsource = {dblp computer science bibliography, http://dblp.org}
}

@article{MST,
  author    = {Fabio Martinelli and
               Alistair Sinclair and
               Dror Weitz},
  title     = {Fast mixing for independent sets, colorings, and other models on trees},
  journal   = {Random Struct. Algorithms},
  volume    = {31},
  number    = {2},
  pages     = {134--172},
  year      = {2007},
  url       = {http://dx.doi.org/10.1002/rsa.20132},
  doi       = {10.1002/rsa.20132},
  timestamp = {Mon, 26 Nov 2007 12:25:05 +0100},
  biburl    = {http://dblp.uni-trier.de/rec/bib/journals/rsa/MartinelliSW07},
  bibsource = {dblp computer science bibliography, http://dblp.org}
}
 
@article {MWW,
    AUTHOR = {Mossel, Elchanan and Weitz, Dror and Wormald, Nicholas},
     TITLE = {On the hardness of sampling independent sets beyond the tree
              threshold},
   JOURNAL = {Probab. Theory Related Fields},
  FJOURNAL = {Probability Theory and Related Fields},
    VOLUME = {143},
      YEAR = {2009},
    NUMBER = {3-4},
     PAGES = {401--439},
      ISSN = {0178-8051},
     CODEN = {PTRFEU},
   MRCLASS = {60J10 (68Q25)},
  MRNUMBER = {2475668 (2010f:60207)},
MRREVIEWER = {Michele Zito},
       DOI = {10.1007/s00440-007-0131-9},
       URL = {http://dx.doi.org/10.1007/s00440-007-0131-9},
}

@inproceedings{Schaefer,
  author    = {Thomas J. Schaefer},
  title     = {The Complexity of Satisfiability Problems},
  booktitle = {Proceedings of the 10th Annual {ACM} Symposium on Theory of Computing,
               May 1-3, 1978, San Diego, California, {USA}},
  pages     = {216--226},
  year      = {1978},
  url       = {http://doi.acm.org/10.1145/800133.804350},
  doi       = {10.1145/800133.804350},
  timestamp = {Mon, 17 Oct 2011 17:25:06 +0200},
  biburl    = {http://dblp.uni-trier.de/rec/bib/conf/stoc/Schaefer78},
  bibsource = {dblp computer science bibliography, http://dblp.org}
}

@inproceedings{SST,
  author    = {Alistair Sinclair and
               Piyush Srivastava and
               Marc Thurley},
  title     = {Approximation algorithms for two-state anti-ferromagnetic spin systems
               on bounded degree graphs},
  booktitle = {Proceedings of the Twenty-Third Annual {ACM-SIAM} Symposium on Discrete
               Algorithms, {SODA} 2012, Kyoto, Japan, January 17-19, 2012},
  pages     = {941--953},
  year      = {2012},
  url       = {http://portal.acm.org/citation.cfm?id=2095191&CFID=63838676&CFTOKEN=79617016},
  timestamp = {Wed, 12 Feb 2014 17:08:16 +0100},
  biburl    = {http://dblp.uni-trier.de/rec/bib/conf/soda/SinclairST12},
  bibsource = {dblp computer science bibliography, http://dblp.org}
}
 
 @inproceedings{Sly,
  author    = {Allan Sly},
  title     = {Computational Transition at the Uniqueness Threshold},
  booktitle = {51th Annual {IEEE} Symposium on Foundations of Computer Science, {FOCS}
               2010, October 23-26, 2010, Las Vegas, Nevada, {USA}},
  pages     = {287--296},
  year      = {2010},
  url       = {http://dx.doi.org/10.1109/FOCS.2010.34},
  doi       = {10.1109/FOCS.2010.34},
  timestamp = {Tue, 16 Dec 2014 09:57:21 +0100},
  biburl    = {http://dblp.uni-trier.de/rec/bib/conf/focs/Sly10},
  bibsource = {dblp computer science bibliography, http://dblp.org}
}

@article {SlySun,
    AUTHOR = {Sly, Allan and Sun, Nike},
     TITLE = {Counting in two-spin models on {$d$}-regular graphs},
   JOURNAL = {Ann. Probab.},
  FJOURNAL = {The Annals of Probability},
    VOLUME = {42},
      YEAR = {2014},
    NUMBER = {6},
     PAGES = {2383--2416},
      ISSN = {0091-1798},
   MRCLASS = {60K35 (68Q87 82B20)},
  MRNUMBER = {3265170},
       DOI = {10.1214/13-AOP888},
       URL = {http://dx.doi.org/10.1214/13-AOP888},
}

@inproceedings{Weitz,
  author    = {Dror Weitz},
  title     = {Counting independent sets up to the tree threshold},
  booktitle = {Proceedings of the 38th Annual {ACM} Symposium on Theory of Computing,
               Seattle, WA, USA, May 21-23, 2006},
  pages     = {140--149},
  year      = {2006},
    url       = {http://doi.acm.org/10.1145/1132516.1132538},
  doi       = {10.1145/1132516.1132538},
  timestamp = {Thu, 28 Sep 2006 09:38:26 +0200},
  biburl    = {http://dblp.uni-trier.de/rec/bib/conf/stoc/Weitz06},
  bibsource = {dblp computer science bibliography, http://dblp.org}
}

@inproceedings{YZb,
  author    = {Yitong Yin and
               Chihao Zhang},
  title     = {Approximate Counting via Correlation Decay on Planar Graphs},
  booktitle = {Proceedings of the Twenty-Fourth Annual {ACM-SIAM} Symposium on Discrete
               Algorithms, {SODA} 2013, New Orleans, Louisiana, USA, January 6-8,
               2013},
  pages     = {47--66},
  year      = {2013},
  url       = {http://dx.doi.org/10.1137/1.9781611973105.4},
  doi       = {10.1137/1.9781611973105.4},
  timestamp = {Wed, 12 Feb 2014 17:08:16 +0100},
  biburl    = {http://dblp.uni-trier.de/rec/bib/conf/soda/YinZ13},
  bibsource = {dblp computer science bibliography, http://dblp.org}
}

@article{YZ,
  author    = {Yitong Yin and
               Jinman Zhao},
  title     = {Counting hypergraph matchings up to uniqueness threshold},
  journal   = {CoRR},
  volume    = {abs/1503.05812},
  year      = {2015},
  url       = {http://arxiv.org/abs/1503.05812},
  timestamp = {Thu, 09 Apr 2015 11:33:20 +0200},
  biburl    = {http://dblp.uni-trier.de/rec/bib/journals/corr/YinZ15a},
  bibsource = {dblp computer science bibliography, http://dblp.org}
}

\end{filecontents}

\date{20 June 2016}
\maketitle

\begin{abstract}
One of the most important recent developments in the complexity of approximate counting is the 
classification of the complexity of approximating the partition functions of antiferromagnetic 2-spin systems 
on bounded-degree graphs. This classification is based on a beautiful connection to the so-called 
uniqueness phase transition from statistical physics on the infinite $\Delta$-regular tree. Our objective   is to 
study the impact of this classification on {unweighted} 2-spin models on $k$-uniform {hypergraphs}. As has 
already been indicated by Yin and Zhao,   the connection between the uniqueness phase transition  and the 
complexity of approximate counting  breaks down in the hypergraph setting. Nevertheless, we show that for 
every non-trivial symmetric $k$-ary Boolean function~$f$ there exists a degree bound~$\Delta_0$ so that 
for all $\Delta \geq \Delta_0$ the following problem is $\mathsf{NP}$-hard: given a $k$-uniform hypergraph 
with maximum degree at most~$\Delta$,  approximate the partition function of the hypergraph 2-spin model 
associated with~$f$. It is $\mathsf{NP}$-hard to approximate this partition function even within an 
exponential factor. By contrast, if $f$ is a trivial symmetric Boolean function (e.g., any function $f$ that is 
excluded from our result), then the partition function of the corresponding hypergraph 2-spin model can be 
computed exactly in polynomial time. \end{abstract} 

\noindent \emph{Keywords:} \small{Approximate counting, bounded-degree hypergraphs, 2-spin systems, counting constraint satisfaction.}

\section{Introduction}
\allowdisplaybreaks
One of the most important recent developments in the complexity of approximate counting is the classification of the complexity of approximating the partition functions of antiferromagnetic 2-spin systems on bounded-degree graphs \cite{LLY,SlySun}. This classification is based on a beautiful connection to the so-called uniqueness phase transition from statistical physics on the infinite $\Delta$-regular tree, which was first established in the context of the hard-core model in the works of \cite{Weitz,Sly} (see also \cite{DFJ,MWW} for related results) and later developed \cite{SST,GSV:2spin, SlySun,LLY} in the more general framework of antiferromagnetic 2-spin systems.

Our objective   is to study the impact of this classification on \emph{unweighted} 2-spin models on $k$-uniform \emph{hypergraphs}. A $k$-uniform hypergraph $H=(V,\mathcal{F})$ 
consists of a vertex set~$V$ and a set $\mathcal{F}$ of arity-$k$ hyperedges
which are $k$-element subsets of $V$.
Thus, a $2$-uniform hypergraph is the same as a graph. 
The degree of a vertex $v\in V$ is the number of edges
that contain~$v$, namely $|\{e \in \mathcal{F} \mid v \in e\}|$.
The maximum degree of~$H$ is (naturally) the maximum degree of the vertices of~$H$.

A 2-spin model on the class of $k$-uniform hypergraphs is specified by a symmetric function $f:\{0,1\}^k\rightarrow \Rplus$.
The 2-spin model is
 \emph{unweighted} if the function $f$ is  \emph{Boolean}, meaning that
 its range is a subset of the two-element set $ \{0,1\}$. Given a $k$-uniform hypergraph~$H= (V,\mathcal{F})$,
each assignment $\sigma\colon V \to \{0,1\}$ 
induces a weight 
$$w_{f;H}(\sigma):=\prod_{\{v_1,\hdots,v_k\}\in\mathcal{F}} f(\sigma(v_1),\hdots,\sigma(v_k)).$$
The assignment~$\sigma$ is sometimes referred to as a \emph{configuration}.
 The \emph{partition function} $Z_{f;H}$ corresponding to~$f$
and~$H$ is defined as follows.
\[Z_{f;H}:=\sum_{\sigma:V\rightarrow\{0,1\}}w_{f;H}(\sigma)=\sum_{\sigma:V\rightarrow\{0,1\}}\prod_{\{v_1,\hdots,v_k\}\in\mathcal{F}} f(\sigma(v_1),\hdots,\sigma(v_k)).\]
Given a symmetric function $f:\{0,1\}^k\rightarrow \Rplus$
and a hypergraph $H=(V,\mathcal{F})$ 
we will use
$\mu_{f,H}(\cdot)$ to denote the distribution 
on configurations $\sigma\colon V \to \{0,1\}$ 
in which the probability 
of configuration~$\sigma$
is proportional to its weight so $\mu_{f;H}(\sigma) \propto w(\sigma)$.
The distribution $\mu_{f,H}(\cdot)$ is called the \emph{Gibbs distribution}
associated with the partition function $Z_{f;H}$.
 
The computational problem that we study is the problem of approximating $Z_{f;H}$,
given $H$ as input. Formally, this problem has three parameters ---
a symmetric arity-$k$ Boolean function $f$,
a degree bound~$\Delta$, and a value $c>1$ which specifies the desired accuracy of the
approximation.  
The problem is defined as follows. 
\prob{ $\Hyper2Spinf$.}
{ An $n$-vertex $k$-uniform hypergraph $H$ with maximum degree at most $\Delta$.}
{  A number $\hat{Z}$ such that $c^{-n}Z_{f;H}\leq \hat{Z}\leq c^nZ_{f;H}$.}

The most well-known example of an unweighted 2-spin model is the independent set model on graphs.
In this case $k=2$, and $f$ is the function
given by $f(0,0)=f(0,1)=f(1,0)=1$ and $f(1,1)=0$.
Independent sets are in one-to-one correspondence with configurations in the model --- vertices
that are in a given independent set are assigned spin~$1$ by the corresponding configuration~$\sigma$.
The partition function $Z_{f;H}$ is simply the number of independent sets of the graph~$H$.

Let us now consider larger arity.
 A (weak) independent set in a hypergraph is a subset of vertices that does not contain a hyperedge as a subset. 
 Weak independent sets correspond to 
configurations in the 
unweighted $2$-spin model in which
$f$ is the function $f\colon\{0,1\}^k \rightarrow \{0,1\}$ where 
$f(s_1,\ldots,s_k)=1$ iff at least one of $s_1,\ldots,s_k$ is~$0$. 
  A strong independent set in a hypergraph is a subset of vertices that does not contain more than one 
vertex of any given hyperedge. 
Strong independent sets correspond to the unweighted
$2$-spin model in which 
$f(s_1,\ldots,s_k)=1$ iff at most one of $s_1,\ldots,s_k$ is~$1$.
Note that the two notions of hypergraph independent set coincide in the case $k=2$, which  
is the graph case that we have already considered.

The main motivation for our work is
the following striking result about the complexity of approximating the partition function of the independent set model on bounded-degree graphs: (i) There exists an FPRAS for the number of independent sets in graphs of maximum degree 
at most 5 \cite{Weitz}; (ii) There is no FPRAS for the number of independent sets in graphs of maximum degree at most 6  \cite{Sly} (unless NP=RP).  This computational transition was proved using insights from phase transitions and, in fact, 
the transition coincides with the so-called uniqueness threshold of the independent set model on the infinite $\Delta$-regular tree.

The question that we seek to address in this work is whether a similar computational transition occurs 
for the complexity of approximating the partition function of
  (unweighted) 2-spin models on $k$-uniform hypergraphs, in terms of the maximum degree $\Delta$. While the case $k=2$ is completely covered by the results in the previous paragraph, the picture for $k\geq 3$ appears to be much more intricate and the complexity threshold may differ from the uniqueness threshold. 

 This issue has  been discussed in \cite{YZ} in the special case of  approximately counting the  strong independent sets of a hypergraph. While the full picture is still incomplete,  it is useful to see why the complexity threshold may differ from the uniqueness threshold in this particular model for $k=3$.  As is implicit in \cite{LL} and was spelled out explicitly in \cite{YZ},  uniqueness holds for this model on the infinite $\Delta$-regular $3$-uniform hypertree if and only if $\Delta\leq 3$. For $\Delta\leq 3$, the results of \cite{LL,YZ} establish that a (non-trivial) analogue of Weitz's self-avoiding walk computational-tree approach yields an efficient approximation scheme for the partition function by (implicitly) establishing a strong spatial mixing result. Strong spatial mixing does not hold when $\Delta\geq 4$ because the infinite $\Delta$-regular $3$-uniform hypertree is in non-uniqueness. 
While it is known that it is hard to approximate the partition function for $\Delta\geq 6$,
Yin and Zhao~\cite{YZ} show that natural gadgets cannot be used to show hardness for $\Delta=4,5$
and these cases remain open.
 
Generally, as the results of \cite{LL,YZ} suggest, 
one would expect that, for ``natural" functions $f$, an FPRAS should exist 
up to the strong spatial mixing threshold, but this is (in general) below
the  uniqueness  threshold  of the $\Delta$-regular $k$-uniform hypertree.

Above the uniqueness threshold, approximating the partition function may
be hard, but this is not known in general, even for the special case of  strong
independent sets.  Thus, it is not clear 
from the literature that there is a computational threshold
where approximating the partition function on hypergraphs of maximum degree~$\Delta$
becomes intractable and it is not clear whether this
threshold, if it exists, coincides with the uniqueness threshold.

The main contribution of this paper is showing that, for every function~$f$
(apart from seven special ``easy'' functions),  there is indeed a ``barrier'' value
$\Delta_0$ such that for all $\Delta\geq \Delta_0$, it is $\NP$-hard to approximate the
partition function.
  
\begin{definition}
\label{def:easyfunctions}
For $k\geq 2$, let $\EASYk$ be the set
containing the following seven functions.
\begin{align*}
\zerof^{(k)}(x_1,\ldots,x_k)&=0,\quad
\onef^{(k)}(x_1,\ldots,x_k)=1,\quad 
\allzerof^{(k)}(x_1,\ldots,x_k)=\mathbf{1}\{x_1=\hdots=x_k=0\},\\
\allonef^{(k)}(x_1,\ldots,x_k)&=\mathbf{1}\{x_1=\hdots=x_k=1\},\quad 
\eqf^{(k)}(x_1,\ldots,x_k)=\mathbf{1}\{x_1=\hdots=x_k\}, \\
\evenf^{(k)}(x_1,\ldots,x_k)&=\mathbf{1}\{x_1\oplus\cdots\oplus x_k=0\},  \quad
\oddf^{(k)}(x_1,\ldots,x_k)=\mathbf{1}\{x_1\oplus\cdots\oplus x_k=1\}.
\end{align*}
\end{definition}

Considering each of the functions in $\EASYk$, we obtain the following observation.
\begin{observation}
Let $k\geq 2$ and $f\in \EASYk$. Then, 
the problem of (exactly) computing $Z_{f:H}$, given as input a $k$-uniform hypergraph~$H$,
can be solved in   time polynomial in the size of $H$.
\end{observation}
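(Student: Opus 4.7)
The plan is to handle each of the seven functions in $\EASYk$ separately, exhibiting in every case a direct polynomial-time algorithm that computes $Z_{f;H}$ exactly. Write $H = (V,\mathcal{F})$ for the input and $n = |V|$. The two constant cases are immediate: if $f = \onef^{(k)}$ then every configuration has weight $1$ and $Z_{f;H} = 2^{n}$; if $f = \zerof^{(k)}$ then $Z_{f;H} = 0$ when $\mathcal{F} \neq \emptyset$ and $Z_{f;H} = 2^{n}$ otherwise. For $f = \allzerof^{(k)}$, a configuration has positive weight iff every vertex incident to some hyperedge receives spin $0$, so, letting $V' \subseteq V$ denote the set of such vertices, $Z_{f;H} = 2^{n - |V'|}$. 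The case $f = \allonef^{(k)}$ is symmetric under the spin-swap $0 \leftrightarrow 1$ and gives the same formula.

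For $f = \eqf^{(k)}$, the approach is to observe that the constraint at each hyperedge forces all of its $k$ endpoints to take the same spin. Taking the transitive closure of this relation, every connected component of $H$ (where two vertices are connected when some hyperedge contains both) must be monochromatic, and distinct components are independent. Hence $Z_{f;H} = 2^{c(H)}$, where $c(H)$ is the number of connected components of $H$, counting isolated vertices as components of size one; $c(H)$ is computable in nearly linear time via union--find on the hyperedges.

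Finally, for $f \in \{\evenf^{(k)}, \oddf^{(k)}\}$, each hyperedge $\{v_1,\ldots,v_k\}$ is equivalent to the linear equation $x_{v_1} \oplus \cdots \oplus x_{v_k} = b$ over $\mathbb{F}_2$, with $b = 0$ for $\evenf^{(k)}$ and $b = 1$ for $\oddf^{(k)}$. The partition function then counts the satisfying assignments of this linear system, which Gaussian elimination determines in polynomial time: if the augmented matrix has strictly larger rank than the coefficient matrix then $Z_{f;H} = 0$, and otherwise $Z_{f;H} = 2^{n - r}$ where $r$ is the rank of the coefficient matrix. Since each of the seven sub-algorithms runs in time polynomial in $|H|$, the observation follows by a case check; there is no genuine obstacle, and the only point that perhaps warrants attention is the bookkeeping for $\eqf^{(k)}$, where one must remember to count isolated vertices (and more generally, components containing no hyperedge) as contributing a factor of $2$ on their own.
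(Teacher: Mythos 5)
Your case analysis is correct and is exactly the routine check the paper intends (the paper states this as an observation without proof, remarking only that one considers each of the seven functions in turn): the constant and all-zero/all-one cases are trivial, equality reduces to counting connected components, and the parity constraints reduce to counting solutions of a linear system over $\mathbb{F}_2$ via Gaussian elimination. Nothing is missing.
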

Our main theorem is a contrasting hardness result.
\begin{theorem}\label{thm:main}
Let $k\geq 2$ and  let $f:\{0,1\}^k\rightarrow \{0,1\}$ be a \emph{symmetric Boolean} function such that $f\notin \EASYk$. Then, there exists  $\Delta_0$ such that for all  $\Delta\geq \Delta_0$, there exists $c>1$ such that $\Hyper2Spinf$ is $\NP$-hard.
\end{theorem}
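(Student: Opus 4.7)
The plan is to prove Theorem~\ref{thm:main} by a gadget reduction from the problem of approximating the partition function of an antiferromagnetic 2-spin system on graphs of sufficiently large maximum degree, which is NP-hard in the non-uniqueness regime by~\cite{Sly,SlySun,GSV:2spin}. The technical heart of the argument is the design, for each non-easy symmetric Boolean~$f$, of hypergraph gadgets that simulate a suitable antiferromagnetic binary interaction on a pair of vertices.

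I would first build \emph{pinning gadgets}: small hypergraph fragments attached to a vertex~$v$ whose marginal distribution on~$v$ is concentrated near a chosen spin value. The elementary building block is a single hyperedge whose other $k-1$ positions are ``dangling'' vertices that appear nowhere else; summing over these yields an effective unary weight on~$v$ determined by the pattern $(f_0,\ldots,f_k)$. When this unary function is non-constant, chaining and stacking such gadgets while staying below the degree bound drives the marginal of~$v$ arbitrarily close to $0$ or to~$1$. When the elementary gadget is neutral, more intricate gadgets built from layered paths of hyperedges sharing vertices become necessary; here the assumption $f\notin\EASYk$ is crucial, since the easy patterns are exactly those for which no layered gadget is non-trivial either.

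Given pinning gadgets, the main reduction takes a graph~$G$ of maximum degree~$\Delta'$ (an instance of the hard antiferromagnetic 2-spin problem) and replaces each edge $\{u,v\}$ of~$G$ with a hyperedge $\{u,v,w_1,\ldots,w_{k-2}\}$, attaching a pinning gadget to each $w_i$ so that its spin is concentrated near a chosen value. Summing over the pinned vertices yields an effective binary interaction $g(u,v)$ whose values depend on $f_0,\ldots,f_k$ and on how many $w_i$ are pinned near~$0$ versus~$1$. Varying these pinning choices produces a finite family of candidate interactions~$g$; the key technical lemma to prove is that for every $f\notin\EASYk$ at least one such $g$ is antiferromagnetic and, for any target~$\Delta'$, sits above the non-uniqueness threshold of the $\Delta'$-regular tree. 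Choosing $\Delta_0$ large enough to absorb the constant degree overhead of the pinning gadgets plus the required $\Delta'$ then completes the reduction, and the exponential approximation factor~$c^n$ is preserved because each gadget has constant size.

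The main obstacle is the case analysis behind this technical lemma. I would split on the pattern $(f_0,\ldots,f_k)$ according to the values of $f_0$ and $f_k$: if $f_0=f_k=0$ then $f$ behaves NAE-like at the extremes and direct pinning produces an $\mathsf{OR}$- or $\mathsf{NAND}$-like binary interaction; if $f_0=f_k=1$ then, since $f$ is neither $\onef^{(k)}$ nor $\eqf^{(k)}$ nor $\evenf^{(k)}$, the interior of the pattern has enough variation to support an asymmetric effective interaction; and if $f_0\neq f_k$ the asymmetry directly yields strong pinning. In each case the subtle point is to verify that some pinning choice produces an \emph{antiferromagnetic} rather than ferromagnetic or trivial~$g$; in a few self-dual cases this seems to require combining pinning with identification of two positions inside a single hyperedge before simulating the graph edge.
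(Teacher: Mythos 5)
Your overall architecture (pinning gadgets, an effective binary interaction per graph edge, reduction from an antiferromagnetic 2-spin system in non-uniqueness) matches how the paper handles two of its three cases, but there is a genuine gap in how you propose to cover all $f\notin\EASYk$. Pinning gadgets simply do not exist for self-dual $f$ (those with $w_\ell=w_{k-\ell}$ for all $\ell$, e.g.\ not-all-equal): for such $f$, every vertex of every hypergraph has marginal exactly $1/2$, so no amount of chaining, stacking or ``layered paths'' can bias a marginal. What survives in the self-dual case is approximate \emph{equality} between vertices (the paper shows in Lemmas~\ref{lem:classify} and~\ref{lem:selfdual} that every non-constant $f$ supports pinning-to-0, pinning-to-1 or 2-equality, and that the equality-only case is precisely the self-dual one). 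Your remark about ``identification of two positions inside a single hyperedge'' gestures at this, but identification is not permitted in $\Hyper2Spinf$ (hyperedges are $k$-element sets), so equality must itself be realised approximately by gadgets and the resulting error controlled.

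More seriously, your key technical lemma --- that for every $f\notin\EASYk$ some pinning/equality choice yields an antiferromagnetic interaction above the non-uniqueness threshold --- fails for part of the self-dual class. Take $k=5$ with $(w_0,\dots,w_5)=(0,1,0,0,1,0)$: this is self-dual, not in $\EASYk$, and the single-hyperedge gadget gives $\mu_{00}=\mu_{11}=3>2=\mu_{01}$, a \emph{ferromagnetic} interaction; moreover self-duality forces $\mu_{00}=\mu_{11}$ for every gadget, and no antiferromagnetic gadget is available here. The paper abandons the 2-spin simulation entirely for self-dual $f$ with $w_0=0$: it proves that the \emph{decision} problem $\CSP{\{f\}}$ is NP-hard via Schaefer's dichotomy by ruling out all six polymorphisms unless $f$ is $\zerof^{(k)}$ or $\oddf^{(k)}$ (Lemma~\ref{lem:applyschaefer}), and lifts this to bounded degree with an \emph{exact} equality gadget (Lemma~\ref{lem:exactequality}). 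Your plan has no route to hardness for this class. A secondary omission: the ideal binary parameters produced by your gadgets can sit exactly on the boundary of non-uniqueness (e.g.\ $\gamma_0=1$), while the realised parameters are only $\epsilon$-close to them; the paper needs a separate continuity argument (Lemma~\ref{lem:strip}) showing that a whole strip around such boundary points lies in non-uniqueness for all large $\Delta$, which your sketch does not address.
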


Thus we show that for all $k\geq 2$, for all non-trivial symmetric Boolean functions $f$, for all sufficiently large $\Delta$, it is $\NP$-hard to approximate $Z_{f;H}$, even within an exponential factor,   given a $k$-uniform hypergraph~$H$ of maximum degree at most $\Delta$.  
 We do not pursue the task of obtaining an explicit bound on  $\Delta$, since this would  require heavy numerical work (depending on the function $f$) and we do not expect that it would  yield the exact value of the threshold,
 even if such a threshold exists.

{\bf Note added in final version:\quad} Subsequent to this paper, the authors,
together with Bez\'{a}kov\'{a}, Guo and  \v{S}tefankovi\v{c} \cite{ICALPpaper},
have studied the issue of a computational transition specifically for the problem of counting the weak independent sets of a hypergraph.  They found further evidence, for this problem, that the complexity threshold, if it exists, may differ
from the uniqueness threshold. Particularly, for the weak independent set model, they gave an FPTAS which 
works even beyond the strong spatial mixing threshold and they showed inapproximability even below the
uniqueness threshold.

Note, however that, while Theorem~\ref{thm:main}  
does not guarantee the existence of a complexity threshold, it does at least show
inapproximability if the degree bound is sufficiently large.
That is,  for every non-trivial~$f$, it shows that 
for all sufficiently large $\Delta$, it is $\NP$-hard to approximate 
the partition function~$Z_{f;H}$,  given a $k$-uniform hypergraph~$H$ of maximum degree at most $\Delta$.\footnote{It is an open question whether Theorem~\ref{thm:main} continues to hold if the input $H$ is
further restricted to be a $\Delta$-regular $k$-uniform hypergraph. Our result does not apply to this input restriction because our gadgets are not regular hypergraphs.}

\subsection{Counting Constraint Satisfaction and Related Results}
\label{sec:CSP}

Suppose that $\Gamma$ is a set of Boolean functions of different arities. 
Thus, an arity-$k$ function in~$\Gamma$
is a function from $\{0,1\}^k$ to $\{0,1\}$.
The  counting constraint satisfaction problem
$\nCSP{\Gamma}$ is the  problem
of computing the CSP partition function $Z_{\Gamma,I}$
where $I$ is a CSP instance consisting
of a set $V$ of variables and
a  set $\mathcal{S}$\footnote{The reader who is familiar with weighted counting CSP may  have expected $\mathcal{S}$ to be a multiset rather than a set, but this is not necessary here since the functions in $\Gamma$ have range $\{0,1\}$. Restricting $\mathcal{S}$ to be a set allows some technical simplifications later.
} of constraints,  
where each constraint $C=(v_1,\ldots,v_k,f) \in \mathcal{S}$ constrains
the variables $v_1,\ldots,v_k$   by applying a particular $k$-ary function $f\in {\Gamma}$.
The value of the partition function is given by 
\[Z_{\Gamma;I}:= \sum_{\sigma:V\rightarrow\{0,1\}}\prod_{(v_1,\hdots,v_k,f)\in\mathcal{S}} 
f(\sigma(v_1),\hdots,\sigma(v_k)).\] 
The constraint $C=(v_1,\ldots,v_k,f)$ could use a particular variable more
than once. For example, it is possible that $v_1$ and $v_2$ are both the same variable.
The problem $\nCSPD{\Gamma}$ is the problem of computing $Z_{\Gamma,I}$
given an instance~$I$ in which each variable is used at most~$\Delta$ times.
We can also define a related approximation problem, similar to~$\Hyper2Spinf$.

\prob{ $\ACSPD{\Gamma}$.}
{ An $n$-variable instance $I$ of a CSP in which all constraints apply functions from~$\Gamma$
and each variable is used at most~$\Delta$ times.} 
{  A number $\hat{Z}$ such that $c^{-n}Z_{\Gamma;I}\leq \hat{Z}\leq c^nZ_{\Gamma;I}$.}

It is clear that  our problem $\Hyper2Spinf$ is
closely related to the problem $\ACSPD{\{f\}}$. In particular, 
$\Hyper2Spinf$ is the special case of  $\ACSPD{\{f\}}$
in which constraints are not allowed to re-use variables.
Thus, Theorem~\ref{thm:main} has the following immediate corollary.
\begin{corollary}\label{cor:csp}
Let $k\geq 2$ and  let $f:\{0,1\}^k\rightarrow \{0,1\}$ be a \emph{symmetric Boolean} function such that $f\notin \EASYk$. Then, there exists  $\Delta_0$ such that for all  $\Delta\geq \Delta_0$, there exists $c>1$ such that 
$\ACSPD{\{f\}}$  is $\NP$-hard.
\end{corollary}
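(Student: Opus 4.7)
The plan is to exhibit $\Hyper2Spinf$ as a special case of $\ACSPD{\{f\}}$ so that Theorem~\ref{thm:main} transfers essentially verbatim. Given any instance $H=(V,\mathcal{F})$ of $\Hyper2Spinf$ with maximum degree~$\Delta$, I would form the CSP instance $I_H=(V,\mathcal{S})$ whose variable set is~$V$ and whose constraint set~$\mathcal{S}$ contains, for each hyperedge $\{v_1,\ldots,v_k\}\in\mathcal{F}$, the single constraint $(v_1,\ldots,v_k,f)$ (any fixed ordering of the $k$ distinct vertices will do, since $f$ is symmetric). Because the elements of $\mathcal{F}$ are $k$-element subsets of~$V$, the constraints constructed in this way have pairwise distinct scopes, so $\mathcal{S}$ is genuinely a set rather than a multiset, as required by the definition of $\ACSPD{\{f\}}$.

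Next I would verify the two parameter-preservation properties that make the reduction useful. First, the number of occurrences of a variable~$v$ in~$\mathcal{S}$ is exactly the number of hyperedges of~$H$ that contain~$v$, which is at most~$\Delta$; hence $I_H$ is a legitimate instance of $\ACSPD{\{f\}}$ with the same degree bound. Second, since every constraint in $I_H$ is applied to a $k$-tuple of \emph{distinct} variables, the term-by-term product defining $Z_{\{f\};I_H}$ coincides with the product defining $Z_{f;H}$, so the two partition functions are equal and the number of variables of~$I_H$ equals the number of vertices of~$H$. Consequently, any $c^n$-approximation to $Z_{\{f\};I_H}$ is automatically a $c^n$-approximation to $Z_{f;H}$.

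To finish, I would simply invoke Theorem~\ref{thm:main}: it supplies, for each symmetric Boolean $f\notin\EASYk$, a threshold~$\Delta_0$ and, for each $\Delta\geq\Delta_0$, a constant $c>1$ for which $\Hyper2Spinf$ is $\NP$-hard. Combining this with the reduction above, the same $\Delta_0$ and the same~$c$ witness $\NP$-hardness of $\ACSPD{\{f\}}$, which is exactly what the corollary asserts.

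There is no real obstacle here; the corollary is genuinely immediate because $\Hyper2Spinf$ is literally the restriction of $\ACSPD{\{f\}}$ to instances in which every constraint scope is a $k$-element subset of variables and no two constraints share the same scope. The only small point worth recording is the alignment between the hypergraph notion of ``maximum degree~$\Delta$'' and the CSP notion of ``each variable used at most~$\Delta$ times,'' which is clear from the construction.
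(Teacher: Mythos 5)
Your argument is correct and matches the paper's reasoning: the paper treats this as an immediate corollary of Theorem~\ref{thm:main}, precisely because $\Hyper2Spinf$ is the special case of $\ACSPD{\{f\}}$ in which constraints do not re-use variables, so hardness of the restriction transfers to the general problem. Your explicit check that degrees and partition functions are preserved is exactly the (implicit) content of the paper's one-line justification.
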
 
 
The combined results of \cite{CH} and \cite{CLX}  show that
for (exact) counting CSPs, adding a degree bound $\Delta \geq 3$ does not
change the complexity of the problem.  
The situation is less clear 
for decision and approximate counting. Previous work on bounded-degree decision CSP \cite{dalmau} and bounded-degree approximate
counting CSP \cite{DGJR} 
considered only the so-called conservative model where intractability 
arises more easily.
In this model, $\delta_0$ is the unary pinning-to-0 
function which is defined by $\delta_0(0)=1$ and $\delta_0(1)=0$.
Also, $\delta_1$ is the unary pinning-to-1
function which is defined by $\delta_1(0)=0$ and $\delta_1(1)=1$.

Theorem~24 of \cite{DGJR}   allows us to deduce 
(see Observation~\ref{obs:CSP}) 
that
for every $\Delta \geq 6$ and $k\geq 2$ and every 
symmetric $k$-ary Boolean function  $f\not\in \EASYk$,
there is no FPRAS for $\nCSP{\{f,\delta_0,\delta_1\}}$ unless $\NP=\RP$.\footnote{We remark here that \cite{DGJR} also gave a partial classification for $\Delta=3,4,5$, the remaining cases were (partially) resolved in \cite{LL}.}
 This  
hardness result extends from the uniqueness phase transition of the
independent set model (which occurs at $\Delta=6$) because  pinning allows
constructions which realise arbitrarily bad configurations.

The result of \cite{DGJR} 
does not apply to our hypergraph 2-spin context where the pinning functions~$\delta_0$ and~$\delta_1$
are not present.
 To see this, consider the following contrasting positive result of \cite{BDK} which is proved via the MCMC method: an FPRAS exists for approximating the number of (weak) independent sets in a $k$-uniform hypergraph of maximum degree $\Delta$ whenever $k\geq 2\Delta+1$.
 Thus, even though the 
  weak independent 
 function $f$
(given by 
 $f(s_1,\ldots,s_k)=1$ iff at least one of $s_1,\ldots,s_k$ is~$0$)
 is not in $\EASYk$ for any $k\geq 2$,
 the result of Bordewich et al.~\cite{BDK} shows that
for every $\Delta\leq (k-1)/2$,  
there is an FPRAS for the partition function $Z_{f;H}$ 
on the class of $k$-uniform hypergraphs~$H$ with maximum degree at most~$\Delta$.

Thus, it is clear that $\Delta=6$ cannot always be a computational threshold
in the hypergraph $2$-spin framework (where there is no pinning).
However, our Theorem~\ref{thm:main} shows that, 
for every non-trivial symmetric Boolean function~$f$,
there is degree-bound $\Delta_0$ such that approximating the partition function is
intractable beyond this degree bound.

\subsection{Proof Techniques}

In order to prove Theorem~\ref{thm:main},  
we will construct a  $k$-uniform hypergraph $H$
such that the spin-system induced by~$f$ on~$H$
induces  an anti-ferromagentic binary $2$-spin model that
is in the non-uniqueness region of
the corresponding $\Delta$-regular tree.
It follows from a result of Sly and Sun (Theorem~\ref{thm:slysun}) that
approximating the partition function of the binary model is intractable,
and we will use this to show that approximating the partition function of the $k$-ary model
is also intractable.

While this high-level approach is analogous to the one followed in \cite{DGJR}, in our setting where the pinning functions $\delta_0$ and $\delta_1$ are not available, we have to tackle several obstacles. A first indicator of the difficulties that arise is that, in \cite{DGJR}, the target 2-spin model is always the independent set model (largely due to the availability of the pinning functions $\delta_0$ and $\delta_1$). In contrast, our target binary 2-spin model will be weighted and  depend on the function $f$. In fact, we will only know the parameters of the binary 2-spin model only approximately which, as we shall discuss later in detail, poses difficulties in showing that it is intractable. 

To explain   the argument in more detail, let us backtrack and discuss a  natural approach that one might hope would  lead to proof of Theorem~\ref{thm:main}.
First, if one were able to  
construct hypergraphs to
``realise"   the pinning functions~$\delta_0$ and~$\delta_1$ 
then these hypergraphs could be combined with the reduction in~\cite{DGJR} to
prove Theorem~\ref{thm:main}.
The proof  would even be straightforward if 
 perfect reaslisations could be found. 
For example, to realise~$\delta_0$ perfectly we would need a hypergraph $H$  
whose partition function is non-zero which 
has a vertex $v$   such that  every configuration~$\sigma$ with
$w_{f;H}(\sigma)>0$ satisfies $\sigma(v)=0$.
More realistically, one might hope that even  ``approximate" versions of the pinning functions $\delta_0$ or $\delta_1$ would suffice to simulate the reduction in \cite{DGJR}. Unfortunately, this fails rather formidably: first, 
as we shall see below, there are functions $f$ which simply cannot realise (approximate) pinning, and, second, even for those functions $f$ which do support pinning,  the bounded-degree assumption poses strict  limits on  the accuracy of the approximations that can be achieved. 

Despite the failure of the above approach,
it does turn out to be useful to explore the extent to which   the pinning functions $\delta_0$ and $\delta_1$ can be simulated using hypergraphs. We know from the binary case (where the uniqueness phase transition
coincides with the computational transition) that the  achievable ``boundary conditions'' play an important role.
Understanding the pinnings that can be (approximately) achieved gives us the relevant
boundary information for the higher-arity case.
To make the following discussion concrete, consider the following definition (stated more generally for weighted functions $f$). 

\begin{definition}\label{def:pinning}
Let $f:\{0,1\}^k\rightarrow\Rplus$ be symmetric. Suppose that $\epsilon\geq 0$
and $s\in \{0,1\}$. The hypergraph $H$ is an \emph{$\epsilon$-realisation} of \emph{pinning-to-$s$}
  if there exists a vertex $v$ in $H$ such that 
  $\mu_{f;H}(\sigma_v=s)\geq 1-\epsilon$. We will refer to $v$ as the \emph{terminal} of $H$.
\end{definition}

Note that the perfect  realisation discussed earlier corresponds to taking
$\epsilon=0$. 
Suppose that we have an $\epsilon$-realisation of pinning-to-$s$
but we want an $\epsilon'$-realisation 
for some very small positive~$\epsilon'$.
This can be done via standard powering  (see the upcoming Lemma~\ref{lem:gadgets}):
Given a hypergraph $H$ which $\epsilon$-realises  pinning-to-$s$ for some $\epsilon<1/2$,  
 one can construct a hypergraph $H'$ which $\epsilon'$-realises  pinning-to-$s$.  
 Note, however, that the size of~$H'$ may depend on~$\epsilon'$.
For example, in the construction Lemma~\ref{lem:gadgets}, the maximum degree of~$H'$
is proportional to $\log(1/\epsilon')$.
Nevertheless, the possibility of powering motivates the following definition. 
 
\begin{definition}\label{def:boundedsupport1}
Let $s\in\{0,1\}$.
We say that  $f$ supports pinning-to-$s$   if for every $\epsilon>0$,  
there is a (finite) hypergraph $H$ which is an  $\epsilon$-realisation of  pinning-to-$s$.
\end{definition}

We will next consider an example which demonstrates the limits of
what can be achieved. Let   $f\colon\{0,1\}^k \rightarrow \{0,1\}$  
be the  weak independent set
function where $f(s_1,\ldots,s_k)=1$ if and only if at least one of $s_1,\ldots,s_k$ is~$0$. 
First, note that $f$ does not support pinning-to-1 
since for every hypergraph $H$ and every vertex $v$ in $H$ it holds that $\mu_{f;H}(\sigma(v)=1)\leq 1/2$. 
The function~$f$ does support pinning-to-0
but there is still a limit on how small~$\epsilon$ can be.
In particular, 
for every $k$-uniform
hypergraph $H$ with maximum degree $\Delta$,  
and every vertex $v$ of~$H$ we can obtain the crude bound 
$\mu_{f;H}(\sigma(v)=0)\leq 1-1/2^{k\Delta}$.
This shows that we 
 cannot hope to pin the spin of a vertex to~$0$ with arbitrary polynomial precision 
 using bounded-degree hypergraphs. 
 Note that this example already shows that 
it is impossible to prove   Theorem~\ref{thm:main} by approximating  the pinning functions $\delta_0$ and $\delta_1$ and then applying the result of  \cite{DGJR}.

Nevertheless, pinning-to-0 and pinning-to-1 will be important for us since, whenever a function $f$ supports one (or both) of these notions, we  will be able to use them to decrease the arity of the function $f$. This is particularly useful since, recall, our ultimate goal is to obtain an intractable binary 2-spin model. Intriguingly,  there are  functions $f$ which do not support  either pinning-to-0 or pinning-to-1.  For  example, consider the function $f$ which is induced by the ``not-all-equal'' 
constraint. Then,
for every hypergraph $H$ with $Z_{f,H}>0$ and every vertex $v$  of $H$,
it  is easy to see that $\mu_{f;H}(\sigma(v)=1)=\mu_{f;H}(\sigma(v)=0)=1/2$. More generally, the same phenomenon holds for 
any function $f$ whose value is unchanged when the argument is complemented; such functions are called ``self-dual''.

 The first point that we address in this work is a complete characterisation of the functions $f$ which  support either the notion of pinning-to-0 or pinning-to-1. We show 
 (Lemma~\ref{lem:classify} and Lemma~\ref{lem:selfdual})
 that any function $f$ other than those that are self-dual do support either  pinning-to-0 or pinning-to-1 (but perhaps not both). We   show this classification even for weighted functions $f$, see Section~\ref{sec:properties} for more details. The classification allows us to split the proof of Theorem~\ref{thm:main} into three cases: (i) $f$ supports both pinning-to-0 and pinning-to-1, (ii) $f$ is self-dual, and (iii) $f$ supports exactly one of pinning-to-0 and pinning-to-1.

 In cases (i) and (iii) (Sections~\ref{sec:caseI} and~\ref{sec:caseIII}, respectively)
 where pinning is available we show how to use the approximate pinning to simulate binary antiferromagnetic 2-spin models that are intractable. A difficulty that arises in the proof is that not every anti-ferromagnetic binary $2$-spin model is in the non-uniqueness region. In fact, there are relevant values of the parameters for which the corresponding binary $2$-spin model is actually in the uniqueness region for all sufficiently large~$\Delta$. To make matters worse, we will not be able to control the parameters of the resulting binary model with perfect accuracy. In particular,  to analyse the $k$-ary gadgets, we will use $\epsilon$-realisations of pinnings via hypergraphs for some small $\epsilon>0$.  Thus, we are faced with the possibility that the idealised binary $2$-spin model (i.e., the one corresponding to $\epsilon=0$) may be in the non-uniqueness region, but we need to prove that the 
approximate version that we actually achieve is also in the non-uniqueness region. In fact, the idealised binary 2-spin model will sometimes even be on the boundary 
of the region where intractability holds for sufficiently large~$\Delta$, which makes our task harder.

Our approach to this is to revisit (Section~\ref{sec:binary}) antiferromagnetic binary 2-spin models,
showing (Lemma~\ref{lem:strip}) that there is a sufficiently-wide strip outside of
the  natural square where the parameters are at most~$1$ where
the system is in the non-uniqueness region. We will then carefully ensure that
all of the idealised systems are inside this strip, so that even the approximations are still in non-uniqueness.

In case (ii) (Section~\ref{sec:CaseII}), where the function $f$ is self-dual and hence no pinning is possible, we first classify those self-dual functions $f$ where the related decision problem is $\NP$-hard.  In order to do so, we use  techniques (polymorphisms) from constraint satisfaction, which are explained in detail in Section~\ref{sec:qwe}. While this hardness is not for the bounded-degree setting, we show how to lift the results to bounded-degree hypergraphs by showing that one can force the spins of two vertices to be equal (Lemma~\ref{lem:exactequality}). The proof for this class of self-dual functions $f$ is given in Section~\ref{sec:w00}.

  For those self-dual functions $f$ where the associated decision problem is not hard (Section~\ref{sec:Case2w0one}), we show that one can realise approximate equality in the following sense.
\begin{definition}\label{def:equality}
Let $\epsilon\geq 0$ and $t\geq 2$ be an integer. The hypergraph $H$ is an \emph{$\epsilon$-realisation} of \emph{$t$-equality} if there exist distinct vertices $v_1,\hdots,v_t$ such that for each $s\in\{0,1\}$,
$$
\mu_{f;H}(\sigma_{v_1}=\hdots=\sigma_{v_t}=s)\geq (1-\epsilon)/2.
$$
We will refer to $v_1,\hdots,v_t$ as the \emph{terminals} of $H$.
\end{definition}

\begin{definition}\label{def:boundedsupport2a}
A function  $f$ supports $t$-equality if for every $\epsilon>0$,  
there is a (finite) hypergraph $H$   which is an $\epsilon$-realisation of  $t$-equality. 
\end{definition}
Using the upcoming Lemmas~\ref{lem:equalst} and~\ref{lem:classify}, we show that a self-dual function $f$ supports $t$-equality for every integer $t\geq 2$. Roughly, this allows us to decrease the arity of the function by  carefully using (approximate) equality to obtain an anti-ferromagnetic binary 2-spin model which is intractable (note that we again have to deal with the approximation issue that we described for cases (i) and (iii)).

\subsection{Notation}

We conclude this section with a piece of notation. Given a configuration~$\sigma$ and a subset $T\subseteq V$, we will use the notation $\sigma_T$ to denote the restriction of~$\sigma$ to vertices
in~$T$. For a vertex $v\in V$, we will also use $\sigma_v$ to denote the
spin $\sigma(v)$ of vertex~$v$ in~$\sigma$. 
 Given a hyperedge $e\in \mathcal{F}$, we will denote by $H\setminus e$ the hypergraph $(V,\mathcal{F}\setminus e)$.

\section{Properties of non-negative symmetric functions with domain $\{0,1\}^k$}\label{sec:properties}
In this section, we study the concepts of pinning and equality that we will use for the proof of Theorem~\ref{thm:main}. While our primary interest is in symmetric \emph{Boolean} functions $f$, the results of this section extend effortlessly to non-negative symmetric functions $f$ with domain $\{0,1\}^k$ 
and range $\Rplus$.
For the 
remainder of this section, we consider
a symmetric function $f: \{0,1\}^k \to \Rplus$.
Since $f$ is symmetric, there  are values  $w_0,w_1,\hdots,w_k\in \Rplus$ such that $f(x_1,\hdots,x_k)=w_\ell$ whenever $x_1+\hdots+x_k=\ell$. We will refer to~$f$ and to the values $w_i$ in the definitions and proofs in this section.

\subsection{Pinning and equality}

We start with the following remark, which follows from Defintion~\ref{def:equality}
(and makes the definition easier to apply).

\begin{remark}\label{rem:equality}
If $H$ is an $\epsilon$-realisation of $t$-equality and $v_1,\hdots,v_t$ are the \emph{terminals} of $H$,
then  it also holds that $\mu_{f;H}(\sigma_{v_1}=\hdots=\sigma_{v_t}=s)\leq (1+\epsilon)/2$ for  each $s\in\{0,1\}$. Further, we have that $\mu_{f;H}(\exists i,j: \sigma_{v_i}\neq \sigma_{v_j})\leq \epsilon$.
\end{remark}

Next, we give a straightforward extension to the notion of supporting $t$-equality
(see Definition~\ref{def:boundedsupport2a}).

\begin{lemma}\label{lem:equalst}
Let $t\geq 2$ be an integer. The function $f$ supports $t$-equality iff $f$ supports 2-equality.
\end{lemma}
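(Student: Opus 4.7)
The lemma has two directions, and I would handle them separately.

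The ``only if'' direction ($t$-equality $\Rightarrow$ $2$-equality) is essentially by restriction. If $H$ is an $\epsilon$-realisation of $t$-equality with terminals $v_1,\ldots,v_t$, then for each $s\in\{0,1\}$ the event $\{\sigma_{v_1}=\cdots=\sigma_{v_t}=s\}$ is contained in $\{\sigma_{v_1}=\sigma_{v_2}=s\}$, so $\mu_{f;H}(\sigma_{v_1}=\sigma_{v_2}=s)\geq(1-\epsilon)/2$. Hence the same hypergraph $H$, with $v_1,v_2$ designated as the two terminals, is an $\epsilon$-realisation of $2$-equality.

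For the ``if'' direction ($2$-equality $\Rightarrow$ $t$-equality), my plan is a ``star'' construction. Given $\epsilon>0$, choose $\delta>0$ small (to be fixed later) and let $H$ be a $\delta$-realisation of $2$-equality with terminals $u,v$. Take $t-1$ vertex-disjoint copies $H_1,\ldots,H_{t-1}$ of $H$ with respective terminals $(u_i,v_i)$, and form $H'$ by identifying all the $u_i$ into a single vertex $c$. The $t$ terminals of $H'$ are $c,v_1,\ldots,v_{t-1}$. Because the copies only meet at $c$ and are hyperedge-disjoint, the weight factorises as $w_{f;H'}(\sigma)=\prod_i w_{f;H_i}(\sigma|_{V(H_i)})$, and conditioning on the spin at $c$ makes the copies independent.

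Writing $\alpha_s=\mu_{f;H}(\sigma_u=\sigma_v=s)$ and $\gamma_s=\mu_{f;H}(\sigma_u=s)$, the hypothesis on $H$ gives $\alpha_s\geq(1-\delta)/2$, and this combined with $\gamma_0+\gamma_1=1$ and $\gamma_s\geq\alpha_s$ forces $\gamma_s\leq(1+\delta)/2$ (cf.\ Remark~\ref{rem:equality}). A direct calculation from the factorisation then yields
$$\mu_{f;H'}\bigl(\sigma_c=\sigma_{v_1}=\cdots=\sigma_{v_{t-1}}=s\bigr)=\frac{\alpha_s^{t-1}}{\gamma_0^{t-1}+\gamma_1^{t-1}}\geq\frac{1}{2}\left(\frac{1-\delta}{1+\delta}\right)^{t-1}.$$
Since the right-hand side tends to $1/2$ as $\delta\to 0$, choosing $\delta$ small enough (depending only on $t$ and $\epsilon$) makes it at least $(1-\epsilon)/2$ for each $s$, so $H'$ is the desired $\epsilon$-realisation of $t$-equality.

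I do not anticipate a serious obstacle: the construction is rigid and the bounds are elementary. The only point that needs care is the factorisation of $w_{f;H'}$, which works cleanly because the copies are hyperedge-disjoint and share exactly the vertex $c$. The construction is not degree-preserving (the degree of $c$ grows with $t$), but the definition of ``supporting $t$-equality'' imposes no degree bound, so this is harmless.
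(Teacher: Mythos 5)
Your proof is correct, and the forward direction (restriction to two of the $t$ terminals) matches the paper exactly. For the harder direction, however, you take a genuinely different route. The paper builds $H'$ by placing a copy $H_{ij}$ of the $2$-equality gadget between \emph{every pair} of terminals $v_i,v_j$ (so $\binom{t}{2}$ copies forming a complete graph on the terminals), and then has to separately upper-bound the contribution of each non-constant assignment $\tau$ to the terminals and lower-bound the contribution of the two constant assignments, with $\delta$ chosen to satisfy two conditions involving $\binom{t}{2}$-th powers. Your star construction uses only $t-1$ copies glued at a single hub $c$, and because the copies are hyperedge-disjoint and meet only at $c$, you get an exact closed form $\mu_{f;H'}(\sigma_c=\sigma_{v_1}=\cdots=\sigma_{v_{t-1}}=s)=\alpha_s^{t-1}/(\gamma_0^{t-1}+\gamma_1^{t-1})$ rather than a pair of one-sided estimates; the derivation of $\gamma_s\leq(1+\delta)/2$ from $\gamma_s\geq\alpha_s\geq(1-\delta)/2$ and $\gamma_0+\gamma_1=1$ is sound, the terminals are distinct, and $\gamma_s>0$ ensures everything is well-defined. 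Since Definition~\ref{def:equality} only demands the lower bound $(1-\epsilon)/2$ on the two all-equal events (the disagreement bound then follows automatically, as in Remark~\ref{rem:equality}), your argument is complete and is arguably cleaner and more economical than the paper's; the paper's pairwise construction buys nothing extra that is needed here.
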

\begin{proof} 
It is immediate   that if $f$ supports $t$-equality 
for some $t\geq 2$
then it supports $2$-equality (terminals $v_3,\ldots,v_t$ can simply be ignored).

We will now suppose that $f$ supports $2$-equality and show that it supports $t$-equality
for a given $t>2$.
Consider $\epsilon>0$.
Choose $\delta>0$ to be sufficiently small (with respect to~$\epsilon$ and~$t$)
so that $\delta \leq \epsilon 2^{-(t+2)}$ and 
$${\left(\frac{1-\delta}{1+\delta}\right)}^{\binom{t}{2}} \geq   \max\left\{\frac{1-\epsilon/2}{1+\epsilon/2},\frac{1}{2}\right\}.$$  
Suppose that $H$ is a $\delta$-realisation of $2$-equality
so it has terminals~$x$ and~$y$
so that for each $s\in \{0,1\}$
$$
\frac{1-\delta}{2} \leq 
\mu_{f;H}(\sigma_x = \sigma_y=s)
\leq \frac{1+\delta}{2} 
 \text{ and }  \mu_{f;H}(\sigma_x = s;\sigma_y=s\oplus 1)    \leq  \delta.$$
Let $H'$ be the hypergraph  constructed as follows.
Let $T = \{v_1,\ldots,v_t\}$ be a set of~$t$ vertices which will be the terminals of~$H'$.
For each  $1\leq i<j \leq t$, let $H_{ij}$ be a new copy of~$H$  
but identify the terminal~$x$ of~$H_{ij}$ with $v_i$ and the terminal~$y$ of~$H_{ij}$ with~$v_j$.
Let $H'$ be the resulting hypergraph.
Now for any
$\tau: \{v_1,\ldots,v_t\} \rightarrow \{0,1\}$
that does \emph{not} satisfy 
$\tau(v_1) = \cdots = \tau(v_t)$,
the contribution to
$Z_{f;H'}$ from configurations $\sigma$ with $\sigma_T=\tau$
is at most 
$$\delta Z_{f;H} {\left(\left(\frac{1+\delta}{2}\right)Z_{f;H}\right)}^{\binom{t}{2}-1}
= \left(\frac{\delta}{1+\delta}\right) {(1+\delta)}^{\binom{t}{2}} \frac{Z_{f;H}^{\binom{t}{2}}}{2^{\binom{t}{2}-1}}$$
On the other hand, by considering the contribution from 
configurations~$\sigma$ with $\sigma(v_1) = \cdots = \sigma(v_t)$, we obtain that $Z_{f;H'}$ is at least
$$ {2\left(\left(\frac{1-\delta}{2}\right)Z_{f;H}\right)}^{\binom{t}{2}}
= {(1-\delta)}^{\binom{t}{2}} \frac{Z_{f;H}^{\binom{t}{2}}}{2^{\binom{t}{2}-1}},$$
Thus
 $$\mu_{f;H'}(\sigma_T=\tau) 
 \leq 
  \frac
  { \left(\frac{\delta}{1+\delta}\right) {(1+\delta)}^{\binom{t}{2}}}
  {{(1-\delta)}^{\binom{t}{2}} }
  \leq \frac{2\delta}{1+\delta}
  \leq 
 2\delta$$ so, 
since $2^t (2\delta) \leq \epsilon/2$,
$\mu_{f;H'}(\exists i,j: \sigma_{v_i}\neq \sigma_{v_j})\leq \epsilon/2$. 
Furthermore, for any $s\in \{0,1\}$,
$$\frac{\mu_{f;H'}(\sigma(v_1)=\cdots = \sigma(v_t)=s)}
{\mu_{f;H'}(\sigma(v_1)=\cdots = \sigma(v_t)=s\oplus 1)}  \geq
 {\left(\frac{1-\delta}{1+\delta}\right)}^{\binom{t}{2}} \geq  \frac{1-\epsilon/2}{1+\epsilon/2}.
$$ 
It follows
that
\begin{align*}
(1+\epsilon/2) \mu_{f;H'}(\sigma(v_1)=\cdots = \sigma(v_t)=s)
&\geq  (1-\epsilon/2) \mu_{f;H'}(\sigma(v_1)=\cdots = \sigma(v_t)=s\oplus 1)\\
&\geq (1-\epsilon/2) (1 - \mu_{f;H'}(\sigma(v_1)=\cdots = \sigma(v_t)=s) - \epsilon/2),
\end{align*}
so 
$\mu_{f;H'}(\sigma(v_1)=\cdots = \sigma(v_t)=s)\geq (1-\epsilon)/2$.
Thus,
$H'$ is an $\epsilon$-realisation of $t$-equality.\end{proof}

Lemma~\ref{lem:equalst} motivates the following definition.

\begin{definition}\label{def:boundedsupport2b}
A function $f$ \emph{supports equality} if, for some $t\geq 2$,
it supports $t$-equality. (In this case, Lemma~\ref{lem:equalst} shows that $f$ supports
$t$-equality for every $t\geq 2$.)
\end{definition}

The following lemma gives sufficient conditions for pinning-to-0, pinning-to-1 and 2-equality. 
\begin{lemma}\label{lem:gadgets}
Let $H$ be a hypergraph and $f:\{0,1\}^k\rightarrow \Rplus$ be symmetric. 
\begin{enumerate}
\item \label{it:pinning0} If there  is a vertex $v$ in $H$ such that $\mu_{f;H}(\sigma_v=0)>\mu_{f;H}(\sigma_v=1)$, then $f$ supports pinning-to-0.
\item \label{it:pinning1} If there is a vertex $v$ in $H$ such that $\mu_{f;H}(\sigma_v=1)>\mu_{f;H}(\sigma_v=0)$, then $f$ supports pinning-to-1.
\item \label{it:equality} If there  are  vertices $x,y$ in $H$ such that 
$\mu_{f;H}(\sigma_{x}=\sigma_{y}=0)=\mu_{f;H}(\sigma_{x}=\sigma_{y}=1)$ and 
$\mu_{f;H}(\sigma_{x}=\sigma_{y})>\mu_{f;H}(\sigma_{x}\neq \sigma_{y})$, then $f$ supports 2-equality.
\end{enumerate}
\end{lemma}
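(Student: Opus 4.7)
The plan is to prove all three parts by the same amplification strategy: glue many copies of $H$ together at the specified terminals, so that the bias present in $H$ is compounded and the target probability is driven to its extreme.

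Parts~\ref{it:pinning0} and~\ref{it:pinning1} are symmetric under swapping $0$ and $1$, so it suffices to prove~\ref{it:pinning0}. Write $a_s := Z_{f;H,\sigma_v=s}$, so the hypothesis becomes $a_0 > a_1$. Given $\epsilon>0$, take $n$ vertex-disjoint copies of $H$ and identify their $v$-vertices into a single vertex $v^\star$; call the result $H^{(n)}$. Because the hyperedges from different copies are pairwise disjoint, the partition function factorises: $Z_{f;H^{(n)},\sigma_{v^\star}=s} = a_s^n$, so $\mu_{f;H^{(n)}}(\sigma_{v^\star}=0) = 1/(1+(a_1/a_0)^n) \to 1$ as $n\to\infty$. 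Choosing $n$ sufficiently large yields the desired $\epsilon$-realisation of pinning-to-$0$.

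For part~\ref{it:equality}, write $a_{ij} := Z_{f;H,\sigma_x=i,\sigma_y=j}$; the hypotheses become $a_{00} = a_{11}$ and $2a_{00} > a_{01} + a_{10}$. A naive powering that glues $n$ copies of $H$ along both terminals would require bounding $(a_{01}/a_{00})^n$ and $(a_{10}/a_{00})^n$ separately, but the hypothesis only controls their average. I will therefore symmetrise the off-diagonal entries first. Take two disjoint copies $H_1, H_2$ of $H$ with terminal pairs $(x_1,y_1)$ and $(x_2,y_2)$, and form $H^{\parallel}$ by identifying $x_1$ with $y_2$ into a new terminal $X$ and $y_1$ with $x_2$ into a new terminal $Y$. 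The resulting weight matrix $N_{ij} := Z_{f;H^{\parallel},\sigma_X=i,\sigma_Y=j}$ equals $a_{ij}\,a_{ji}$, so $N_{00} = a_{00}^2 = N_{11}$ and $N_{01} = a_{01}a_{10} = N_{10}$. The AM--GM inequality converts $a_{00} > (a_{01}+a_{10})/2$ into $a_{00}^2 > a_{01}a_{10}$, i.e.\ $N_{00} > N_{01}$. Now apply the powering step to $H^{\parallel}$: identify $m$ vertex-disjoint copies of $H^{\parallel}$ at both $X$ and $Y$, so that the restricted partition functions become $N_{ij}^m$, giving $\mu(\sigma_X=\sigma_Y=s) = N_{00}^m/(2N_{00}^m + 2N_{01}^m) \to 1/2$ as $m\to\infty$ for each $s\in\{0,1\}$. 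Picking $m$ large enough gives the required $\epsilon$-realisation of $2$-equality.

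The only real subtlety is the asymmetry in part~\ref{it:equality}: the averaged inequality $2a_{00} > a_{01}+a_{10}$ is genuinely weaker than a pointwise bound, and direct powering can fail when one of $a_{01}, a_{10}$ exceeds $a_{00}$. The anti-parallel gluing that produces $H^{\parallel}$ is the key conceptual step; it forces the off-diagonals to coincide, after which AM--GM promotes the averaged strict inequality into a pointwise strict inequality, and standard amplification completes the argument.
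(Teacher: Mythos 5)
Your proposal is correct and follows essentially the same route as the paper: the same disjoint-copies powering for the two pinning items, and for 2-equality the same anti-parallel gluing of two copies to symmetrise the off-diagonal weights, followed by the AM--GM observation $a_{00}^2>a_{01}a_{10}$ and powering. The paper leaves the AM--GM step implicit and handles the degenerate case $a_{01}a_{10}=0$ separately, but these are presentational differences only.
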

\begin{proof} 
We start with Item~\ref{it:equality}, which is the most difficult.
Given $\epsilon>0$, we will use $H$ to construct a hypergraph $H'$ which is an $\epsilon$-realisation of $2$-equality.
We start by constructing a hypergraph~$H''$ with terminals~$v_1$ and~$v_2$.
We construct~$H''$ 
by taking two copies of~$H$. In the first copy, we identify the vertex~$x$ with the terminal~$v_1$ and
the vertex~$y$ with the terminal~$v_2$.
The second copy is disjoint from the first one, except that we identify the vertex~$x$ of the second copy with
the terminal~$v_2$ and the vertex~$y$ of the second copy with the vertex~$v_1$.

Let $\mu:=\mu_{f;H}$.
Then let $p=\mu(\sigma_{x}=\sigma_{y}=0)^2=\mu(\sigma_{x}=\sigma_{y}=1)^2$ 
and 
$q=\mu(\sigma_{x}=0,\sigma_{y}=1)\mu(\sigma_{x}=1,\sigma_{y}=0)$.

If $q=0$ then we can take~$H'$ to be~$H''$.
Then $\mu_{f;H'}(\sigma_{v_1} \neq \sigma_{v_2})=0$.
However, 
$\mu_{f;H'}(\sigma_{v_1} = \sigma_{v_2}=0)  = \mu_{f;H'}(\sigma_{v_1}=\sigma_{v_2}=1)$
so for $s\in \{0,1\}$, $\mu_{f;H'}(\sigma_{v_1} = \sigma_{v_2}=s)=  1/2$ and $H'$ is a $0$-realisation of
$2$-equality.

So suppose $q>0$. Contruct~$H'$ by taking  $r=1+\left\lceil \ln\epsilon/\ln(q/p)\right\rceil$ disjoint copies of 
$H''$, identifying all terminals~$v_1$ and all terminals~$v_2$.
 Let $\mu':=\mu_{f;H'}$. We have
\[\mu'(\sigma_{v_1}=\sigma_{v_2}=0)\propto p^r, \mu'(\sigma_{v_1}=\sigma_{v_2}=1)\propto p^r, \mu'(\sigma_{v_1}=0,\sigma_{v_2}=1)
= \mu'(\sigma_{v_1}=1,\sigma_{v_2}=0) 
\propto q^r.\]  Our choice of $r$ ensures that $\mu'(\sigma_{v_1}=0,\sigma_{v_2}=1)/\mu'(\sigma_{v_1}=\sigma_{v_2}=0)<\epsilon$, so $H'$ is an $\epsilon$-realisation of $2$-equality.

The proofs for Items~\ref{it:pinning0} and \ref{it:pinning1} 
are similar but simpler. We will do Item~\ref{it:pinning0}. 
Let $p=\mu(\sigma_v=0)$ and $q= \mu(\sigma_v=1)$. Construct $H'$ by taking 
$r=1+\left\lceil \ln\epsilon/\ln(q/p)\right\rceil$ disjoint copies of 
$H$, identifying vertex~$v$ in all copies.
Then $\mu_{f;H'}(\sigma_v=1) = 
q^r/(q^r+p^r) \leq
(q/p)^r \leq \epsilon$ so $H'$ is an $\epsilon$-realisation of pinning-to-0.
\end{proof}

\subsection{Classifying functions with respect to pinning and equality} 
The following lemma will be used in our classification.

\begin{lemma}\label{lem:classify}
Let $k\geq 2$. For all $f:\{0,1\}^k\rightarrow \Rplus$ which are not constant,  it holds that $f$ supports at least one of pinning-to-0, pinning-to-1 and 2-equality.
\end{lemma}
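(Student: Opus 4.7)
The plan is to identify, for the single-hyperedge hypergraph $H_0=\{v_1,\ldots,v_k\}$, the smallest ``order'' $m^*$ at which the marginals of $H_0$ deviate from being uniform, and then to amplify the deviation at that level into one of the three required properties via Lemma~\ref{lem:gadgets}. For $m\in\{1,\ldots,k\}$ and $j\in\{0,\ldots,m\}$ I set
\[
q^{(m)}_j := \sum_{\ell=0}^{k-m}\binom{k-m}{\ell}\,w_{\ell+j},
\]
so $q^{(m)}_j$ is (up to normalisation) the weight of a single-edge configuration in which $m$ fixed vertices contain exactly $j$ ones; these values satisfy the Pascal-like recursion $q^{(m-1)}_j = q^{(m)}_j + q^{(m)}_{j+1}$. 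Since $q^{(k)}_j = w_j$ and $f$ is non-constant, there is a well-defined minimum $m^*\in\{1,\ldots,k\}$ such that $q^{(m^*)}_0,\ldots,q^{(m^*)}_{m^*}$ are not all equal.

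If $m^*=1$, then $q^{(1)}_0\neq q^{(1)}_1$, so the marginal of $\sigma_{v_1}$ in $H_0$ is not $1/2$ and Lemma~\ref{lem:gadgets}(1) or (2) immediately yields pinning-to-$0$ or pinning-to-$1$. Otherwise $m^*\geq 2$, and by the minimality of $m^*$ all values $q^{(m^*-1)}_j$ are equal; feeding this into the Pascal recursion forces $q^{(m^*)}_j = q^{(m^*)}_{j+2}$ for every valid $j$. Hence $q^{(m^*)}$ takes exactly two values $\alpha\neq\beta$, with $\alpha=q^{(m^*)}_j$ on even $j$ and $\beta=q^{(m^*)}_j$ on odd $j$.

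In the case $m^*\geq 2$, the plan is to build a hypergraph $H$ with exactly two hyperedges $e_1,e_2$ whose intersection is $\{v_1,\ldots,v_{m^*-1}\}$: I place a new vertex $u_1\in e_1$, a new vertex $u_2\in e_2$, and complete each $e_i$ with $k-m^*$ fresh pairwise-disjoint vertices (when $m^*=k$ no fresh vertices are needed, and $e_i=\{v_1,\ldots,v_{k-1},u_i\}$). Summing out the shared vertices $v_1,\ldots,v_{m^*-1}$ gives
\[
\mu_{f;H}(\sigma_{u_1}=b_1,\sigma_{u_2}=b_2) \;\propto\; \sum_{s=0}^{m^*-1}\binom{m^*-1}{s}\, q^{(m^*)}_{s+b_1}\, q^{(m^*)}_{s+b_2}.
\]
Because $q^{(m^*)}_s$ and $q^{(m^*)}_{s+1}$ always consist of one $\alpha$ and one $\beta$, the product $q^{(m^*)}_s q^{(m^*)}_{s+1}$ is constantly $\alpha\beta$, and combining this with the identity $\sum_{s\text{ even}}\binom{m^*-1}{s}=\sum_{s\text{ odd}}\binom{m^*-1}{s}=2^{m^*-2}$ (valid since $m^*-1\geq 1$) yields
\[
\mu_{f;H}(\sigma_{u_1}=\sigma_{u_2}=0)=\mu_{f;H}(\sigma_{u_1}=\sigma_{u_2}=1)\propto 2^{m^*-2}(\alpha^2+\beta^2), \qquad \mu_{f;H}(\sigma_{u_1}\neq\sigma_{u_2})\propto 2^{m^*}\alpha\beta,
\]
so that $\mu_{f;H}(\sigma_{u_1}=\sigma_{u_2})-\mu_{f;H}(\sigma_{u_1}\neq\sigma_{u_2})\propto 2^{m^*-1}(\alpha-\beta)^2>0$. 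Lemma~\ref{lem:gadgets}(3) applied with terminals $u_1,u_2$ now produces $2$-equality.

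The main obstacle I expect is the ``minimality step'' in the $m^*\geq 2$ case: one must see that it is the minimality of $m^*$, together with the Pascal recursion, that forces the clean two-valued alternating pattern in $q^{(m^*)}$, and this alternation is precisely what collapses the cross-terms $q^{(m^*)}_s q^{(m^*)}_{s+1}$ to a single value $\alpha\beta$ and lets a single two-edge construction handle every $m^*\in\{2,\ldots,k\}$ uniformly (including the boundary case $m^*=k$). A small auxiliary point is that $Z_{f;H}>0$ must be verified so that $\mu_{f;H}$ is well-defined; this follows because otherwise every $q^{(m^*)}_j$ would vanish, forcing $f\equiv 0$ and contradicting non-constancy.
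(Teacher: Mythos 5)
Your proof is correct, but it takes a genuinely different route from the paper's. The paper fixes a single gadget --- two hyperedges sharing $k-1$ vertices --- assumes that $f$ supports neither pinning-to-0 nor pinning-to-1, deduces $Z_{00}=Z_{11}$ from that assumption (via the marginal of a terminal together with Lemma~\ref{lem:gadgets}), and then closes with the Cauchy--Schwarz inequality $Z_{00}Z_{11}\geq Z_{01}^2$, analysing its equality case (geometric weight sequences $w_\ell=\alpha^\ell w_0$) to rule out constancy. You instead argue directly and constructively: you locate the first level $m^*$ at which the partial sums $q^{(m)}_j$ fail to be constant in $j$, handle $m^*=1$ by reading off a biased single-edge marginal (giving pinning via Items~\ref{it:pinning0} or~\ref{it:pinning1} of Lemma~\ref{lem:gadgets}), and for $m^*\geq 2$ exploit the fact that minimality combined with the recursion $q^{(m-1)}_j=q^{(m)}_j+q^{(m)}_{j+1}$ forces $q^{(m^*)}$ into a two-valued alternating pattern; a bowtie sharing $m^*-1$ vertices then collapses every cross-term to $\alpha\beta$ and gives the exact symmetry $\mu(\sigma_{u_1}=\sigma_{u_2}=0)=\mu(\sigma_{u_1}=\sigma_{u_2}=1)$ plus the strict gap $2^{m^*-1}(\alpha-\beta)^2>0$ required by Item~\ref{it:equality} of Lemma~\ref{lem:gadgets}. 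The supporting computations (the Pascal recursion, the identity $\sum_{s\ \mathrm{even}}\binom{m^*-1}{s}=\sum_{s\ \mathrm{odd}}\binom{m^*-1}{s}=2^{m^*-2}$ valid since $m^*\geq 2$, and the positivity of the relevant partition functions) all check out, including the degenerate subcase $\alpha\beta=0$. What the paper's route buys is brevity and a single fixed gadget; what yours buys is that it avoids Cauchy--Schwarz and its equality-case analysis entirely, needs no contrapositive framing, and additionally identifies which of the three properties holds as a function of where the first deviation from uniformity occurs.
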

\begin{proof}  
Assume that $f$ does not support pinning-to-0 or pinning-to-1. We will show that $f$ supports 2-equality.

Let $H$ be the hypergraph with vertex set $\{x,y,z_1,\hdots,z_{k-1}\}$ and hyperedge set $\mathcal{F}=\{e_1,e_2\}$, where $e_1=\{x,z_1,\hdots,z_{k-1}\}$ and $e_2=\{y,z_1,\hdots,z_{k-1}\}$ (see Figure~\ref{fig:equality}). Let $\mu:=\mu_{f;H}$. We have that
\[\mu(\sigma_x=s_1,\sigma_y=s_2)\propto Z_{s_1s_2}\mbox{ for } s_1,s_2\in\{0,1\},\]
where $Z_{s_1s_2}=\sum^{k-1}_{\ell=0}\binom{k-1}{\ell}w_{\ell+s_1}w_{\ell+s_2}$. Note that $Z_{01}=Z_{10}$.
\begin{figure}[t]
\begin{center}
\scalebox{0.6}[0.6]{\input{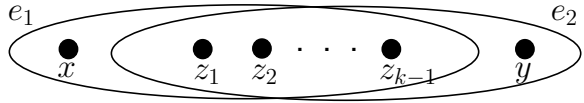}}
\end{center}
\caption{The hypergraph $H$ used in the proof of Lemma~\ref{lem:classify}. The hypergraph has two hyperedges $e_1,e_2$ such that $e_1\cap e_2=\{z_1,\hdots,z_{k-1}\}$, $e_1\backslash e_2=\{x\}$ and $e_2\backslash e_1=\{y\}$. We focus on the spins $s_1,s_2$ of the vertices $x,y$, respectively, i.e., for a configuration $\sigma$ on $H$, $s_1=\sigma_x$ and $s_2=\sigma_y$.}\label{fig:equality}
\end{figure}

We first show that $Z_{00}=Z_{11}$. Assume otherwise. Note that 
$\mu(\sigma_x=0)\propto Z_{01}+Z_{00}$ and
$\mu(\sigma_x=1)\propto Z_{10}+Z_{11}$.
Since $Z_{01}=Z_{10}$, $Z_{00}\neq Z_{11}$ would imply $\mu(\sigma_x=0)\neq \mu(\sigma_x=1)$, contradicting  that $f$ does not support pinning-to-0 or pinning-to-1 (by Lemma~\ref{lem:gadgets}).

Further, we have that $Z_{00}Z_{11}\geq Z_{01}^2$, since 
\begin{equation}\label{eq:bcs}
\left[\sum^{k-1}_{\ell=0}\binom{k-1}{\ell}w_{\ell}^2\right]\left[\sum^{k-1}_{\ell=0}\binom{k-1}{\ell}w_{\ell+1}^2\right]\geq \left[\sum^{k-1}_{\ell=0}\binom{k-1}{\ell}w_{\ell}w_{\ell+1}\right]^2
\end{equation}
holds as an immediate consequence of the Cauchy-Schwartz inequality. From $Z_{00}=Z_{11}$, we thus obtain that $Z_{00}\geq Z_{01}$. Equality in \eqref{eq:bcs} holds only if there exists $\alpha\geq 0$ such that $w_{\ell+1}=\alpha w_\ell$ for every $\ell=0,\hdots,k-1$, which yields $w_\ell=\alpha^\ell w_0$ for $\ell=0,\hdots,k$. This gives $Z_{11}=\alpha^2 Z_{00}$, so  $Z_{00}=Z_{11}$ leaves only the possibility $\alpha=1$, which in turn yields that $f$ is a constant function. 

Thus, it holds that $Z_{00}=Z_{11}>Z_{01}=Z_{10}$, so Item~\ref{it:equality} of Lemma~\ref{lem:gadgets} yields that $f$ supports 2-equality.
\end{proof}

\begin{lemma}\label{lem:selfdual}
Let $k\geq 2$. If $f$ supports 2-equality but neither  pinning-to-0 nor pinning-to-1, then it holds that $w_\ell=w_{k-\ell}$ for all $\ell=0,\hdots,k$. 
\end{lemma}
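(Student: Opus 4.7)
The plan is to exploit the no-pinning hypothesis through Lemma~\ref{lem:gadgets} parts~\ref{it:pinning0} and~\ref{it:pinning1}: since $f$ supports neither pinning-to-0 nor pinning-to-1, for every hypergraph $G$ with $Z_{f;G}>0$ and every vertex $v$ of $G$ one has $\mu_{f;G}(\sigma_v=0)=\mu_{f;G}(\sigma_v=1)$. I will construct, for each $\ell\in\{0,1,\ldots,k\}$, a hypergraph gadget $G_\ell$ with a distinguished terminal, engineered so that applying this equality to the terminal yields an identity on the weights $w_0,\ldots,w_k$. The ingredients are equality gadgets of various sizes, whose existence is guaranteed by Lemma~\ref{lem:equalst} from the hypothesis that $f$ supports $2$-equality.

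First I would show $w_0=w_k$. For $\epsilon>0$, let $H$ be an $\epsilon$-realisation of $k$-equality with terminals $v_1,\ldots,v_k$, and let $G$ be obtained from $H$ by adding a single new hyperedge $e=\{v_1,\ldots,v_k\}$. By Remark~\ref{rem:equality}, configurations in $H$ in which the $v_i$ are not all equal have total $\mu_{f;H}$-mass at most $\epsilon$, while the two ``all-equal'' classes each have mass at least $(1-\epsilon)/2$ and contribute weights $w_0$ and $w_k$ respectively to $e$. A direct bookkeeping gives $\mu_{f;G}(\sigma_{v_1}=0)=w_0/(w_0+w_k)+O(\epsilon)$ and $\mu_{f;G}(\sigma_{v_1}=1)=w_k/(w_0+w_k)+O(\epsilon)$, whenever the denominator is positive (which in turn ensures $Z_{f;G}>0$ for small $\epsilon$). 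Equating the two probabilities and letting $\epsilon\to 0$ gives $w_0=w_k$; the complementary case $w_0=w_k=0$ is trivial.

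Next, for each $1\le\ell\le k-1$, I would show $w_\ell=w_{k-\ell}$ by a very similar construction. Take disjoint copies of an $\ell$-equality gadget $H_\ell$ with terminals $u_1,\ldots,u_\ell$ and a $(k-\ell)$-equality gadget $H_{k-\ell}$ with terminals $u_1',\ldots,u_{k-\ell}'$, and add a single new hyperedge $e=\{u_1,\ldots,u_\ell,u_1',\ldots,u_{k-\ell}'\}$; when $\ell=1$ or $k-\ell=1$ the ``equality gadget'' of size one is simply replaced by a single free vertex. Writing $a$ for the common spin of the $u_i$'s and $b$ for the common spin of the $u_j'$'s, each pair $(a,b)\in\{0,1\}^2$ arises with probability $1/4+O(\epsilon)$ and contributes $f$-value $w_{\ell a+(k-\ell)b}$, so that $\mu_{f;G_\ell}(\sigma_{u_1}=0)=(w_0+w_{k-\ell})/(w_0+w_\ell+w_{k-\ell}+w_k)+O(\epsilon)$ and a symmetric expression holds for spin $1$. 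Applying no-pinning at $u_1$ and letting $\epsilon\to 0$ yields $w_0+w_{k-\ell}=w_\ell+w_k$, and combining with $w_0=w_k$ gives $w_\ell=w_{k-\ell}$.

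The main obstacle is not conceptual but the bookkeeping for the error terms and degenerate cases: the ``bad'' configurations on the terminals of each equality gadget carry total mass at most $\epsilon$ (Remark~\ref{rem:equality}) but have bounded $f$-value, so their effect is uniformly $O(\epsilon)$; one must check that $Z_{f;G_\ell}>0$ for small $\epsilon$ whenever at least one of $w_0,w_\ell,w_{k-\ell},w_k$ is positive, so that $\mu_{f;G_\ell}$ is defined, while noting that if all four vanish then the desired equation is immediate. With these details in place, the construction is a clean analogue of the two-hyperedge computation in the proof of Lemma~\ref{lem:classify}, with equality gadgets now playing the role of identifying disjoint blocks of vertices inside a single hyperedge.
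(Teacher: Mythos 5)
Your proposal is correct, and for the heart of the lemma it takes a genuinely different and more economical route than the paper. Both arguments start the same way: by Lemma~\ref{lem:gadgets}, the no-pinning hypothesis forces every vertex of every hypergraph with positive partition function to have marginal exactly $1/2$, and attaching a single hyperedge to the terminals of a $k$-equality gadget then yields $w_0=w_k$. For the indices $0<\ell<k$ the paper attaches a $t$-equality gadget to only $t$ of the $k$ vertices of the hyperedge, leaving $k-1-t$ vertices unconstrained; the resulting identities $Z_{0,t}=Z_{1,t}$ involve binomial-weighted sums over all of $w_0,\hdots,w_k$, and the paper must then carry out a dimension count on this linear system to show its solution set is exactly $\{w_\ell=w_{k-\ell}\}$. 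You instead cover the entire hyperedge with two equality blocks of sizes $\ell$ and $k-\ell$, so only the four terminal patterns $(a,b)\in\{0,1\}^2$ survive up to $O(\epsilon)$, the marginal of $u_1$ is $(w_0+w_{k-\ell})/(w_0+w_\ell+w_{k-\ell}+w_k)+O(\epsilon)$, and equating this to $1/2$ and letting $\epsilon\to 0$ gives $w_0+w_{k-\ell}=w_\ell+w_k$, whence $w_\ell=w_{k-\ell}$ using $w_0=w_k$. This eliminates the linear-algebra step entirely. The degenerate cases you isolate — all four relevant weights vanishing (so the equation is trivial but $\mu_{f;G_\ell}$ need not concentrate as claimed), and a block of size one, replaced by a free vertex and only needed when $k\geq 3$ so that the other block has size at least two and Lemma~\ref{lem:equalst} applies — are exactly the right ones, and your error control (bad terminal configurations have total mass at most $\epsilon$ by Remark~\ref{rem:equality} and bounded $f$-values) matches what the paper does.
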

\begin{proof} 

Let $k\geq 2$ and suppose that $f$ supports 2-equality but neither pinning-to-0 nor pinning-to-1.
Let $H$ be the hypergraph with the vertex set $\{v_1,\hdots,v_{k}\}$ and the single hyperedge $e=\{v_1,\hdots,v_{k}\}$.  
We may assume that $f$ is not a constant function since a constant function does not support 2-equality. Thus, at least one of the $w_{\ell}$'s is non-zero and hence $Z_{f;H}>0$.

We will start by establishing the claim for $\ell=0$, 
by showing that $w_0=w_k$.
Assume for contradiction   
that $w_0\neq w_k$. W.l.o.g we may assume that $w_0>w_k$ (if it is the other way around, then we can swap $0$'s and $1$'s in the following argument).
Since $f$ supports $2$-equality, it also supports $k$-equality by Lemma~\ref{lem:equalst}.
Choose $\epsilon>0$ sufficiently small so that 
$$
 w_0  \left(\frac{1-\epsilon}{2}\right)
>  w_k \left(\frac{1+\epsilon}{2}\right) + 2^k \epsilon.
$$
Construct
 $H'$ by taking $H$ and a distinct $\epsilon$-realisation $H''$ of $k$-equality
and identifying  the vertices of~$H$ with the terminals of~$H''$ (see Figure~\ref{fig:caseonea}). Then
the contribution to $Z_{f,H'}$ 
from configurations with $\sigma(v_k)=0$ is
at least the contribution from  all configurations 
 that assign spin~$0$ to all terminals
(giving a contribution of at least 
$w_0  \left(\frac{1-\epsilon}{2}\right) Z_{f,H''}$
since $H''$ is an $\epsilon$-realisation of $k$-equality)
so we get
$$
\mu_{f,H'}(\sigma(v_k)=0) Z_{f,H'} \geq w_0 \left(\tfrac{1-\epsilon}{2}\right) Z_{f,H''}.$$
Now consider the contribution to $Z_{f,H'}$
from configurations with $\sigma(v_k)=1$.
The contribution 
from   configurations which   map all terminals to spin~$1$
is at most   $w_k \left(\frac{1+\epsilon}{2}\right) Z_{f,H''}$ since
$H''$ is an $\epsilon$-realisation of $k$-equality.
In addition, configurations which do not
make the spins at the terminals equal
contribute at most
$ 2^k \epsilon Z_{f,H''}$.
So we get
$$\mu_{f;H'}(\sigma(v_k)=1) Z_{f,H'} \leq  w_k \left(\tfrac{1+\epsilon}{2}\right) Z_{f,H''} 
  +  2^k \epsilon Z_{f;H''}.$$
It follows by the choice of $\epsilon$ that  
 $\mu_{f;H'}(\sigma(v_k)=0) > \mu_{f,H'}(\sigma(v_k)=1)$,
 so Lemma~\ref{lem:gadgets} shows that $f$ supports pinning-to-0, contrary to the statement of the lemma.
Thus, we have  shown that $w_0=w_k$.

\begin{figure}[t]
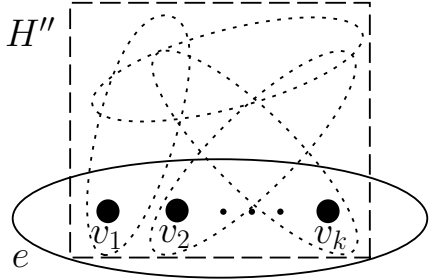

  \centering
  \scalebox{0.7}[0.7]{
\begin{pgfpicture}{0bp}{0bp}{223.715623bp}{148.599998bp}
\begin{pgfscope}
\pgfsetlinewidth{1.0bp}
\pgfsetrectcap 
\pgfsetmiterjoin \pgfsetmiterlimit{10.0}
\pgfpathmoveto{\pgfpoint{58.808531bp}{36.16308bp}}
\pgfpathcurveto{\pgfpoint{58.808531bp}{32.907934bp}}{\pgfpoint{56.277424bp}{30.269114bp}}{\pgfpoint{53.155139bp}{30.269114bp}}
\pgfpathcurveto{\pgfpoint{50.032855bp}{30.269114bp}}{\pgfpoint{47.501742bp}{32.907934bp}}{\pgfpoint{47.501742bp}{36.16308bp}}
\pgfpathcurveto{\pgfpoint{47.501742bp}{39.418227bp}}{\pgfpoint{50.032855bp}{42.057045bp}}{\pgfpoint{53.155139bp}{42.057045bp}}
\pgfpathcurveto{\pgfpoint{56.277424bp}{42.057045bp}}{\pgfpoint{58.808531bp}{39.418227bp}}{\pgfpoint{58.808531bp}{36.16308bp}}
\pgfclosepath
\color[rgb]{0.0,0.0,0.0}\pgfseteorule\pgfusepath{fill}
\pgfpathmoveto{\pgfpoint{58.808531bp}{36.16308bp}}
\pgfpathcurveto{\pgfpoint{58.808531bp}{32.907934bp}}{\pgfpoint{56.277424bp}{30.269114bp}}{\pgfpoint{53.155139bp}{30.269114bp}}
\pgfpathcurveto{\pgfpoint{50.032855bp}{30.269114bp}}{\pgfpoint{47.501742bp}{32.907934bp}}{\pgfpoint{47.501742bp}{36.16308bp}}
\pgfpathcurveto{\pgfpoint{47.501742bp}{39.418227bp}}{\pgfpoint{50.032855bp}{42.057045bp}}{\pgfpoint{53.155139bp}{42.057045bp}}
\pgfpathcurveto{\pgfpoint{56.277424bp}{42.057045bp}}{\pgfpoint{58.808531bp}{39.418227bp}}{\pgfpoint{58.808531bp}{36.16308bp}}
\pgfclosepath
\color[rgb]{0.0,0.0,0.0}
\pgfusepath{stroke}
\end{pgfscope}
\begin{pgfscope}
\pgfsetlinewidth{1.0bp}
\pgfsetrectcap 
\pgfsetmiterjoin \pgfsetmiterlimit{10.0}
\pgfpathmoveto{\pgfpoint{95.608531bp}{36.56308bp}}
\pgfpathcurveto{\pgfpoint{95.608531bp}{33.307934bp}}{\pgfpoint{93.077424bp}{30.669114bp}}{\pgfpoint{89.955139bp}{30.669114bp}}
\pgfpathcurveto{\pgfpoint{86.832855bp}{30.669114bp}}{\pgfpoint{84.301742bp}{33.307934bp}}{\pgfpoint{84.301742bp}{36.56308bp}}
\pgfpathcurveto{\pgfpoint{84.301742bp}{39.818227bp}}{\pgfpoint{86.832855bp}{42.457045bp}}{\pgfpoint{89.955139bp}{42.457045bp}}
\pgfpathcurveto{\pgfpoint{93.077424bp}{42.457045bp}}{\pgfpoint{95.608531bp}{39.818227bp}}{\pgfpoint{95.608531bp}{36.56308bp}}
\pgfclosepath
\color[rgb]{0.0,0.0,0.0}\pgfseteorule\pgfusepath{fill}
\pgfpathmoveto{\pgfpoint{95.608531bp}{36.56308bp}}
\pgfpathcurveto{\pgfpoint{95.608531bp}{33.307934bp}}{\pgfpoint{93.077424bp}{30.669114bp}}{\pgfpoint{89.955139bp}{30.669114bp}}
\pgfpathcurveto{\pgfpoint{86.832855bp}{30.669114bp}}{\pgfpoint{84.301742bp}{33.307934bp}}{\pgfpoint{84.301742bp}{36.56308bp}}
\pgfpathcurveto{\pgfpoint{84.301742bp}{39.818227bp}}{\pgfpoint{86.832855bp}{42.457045bp}}{\pgfpoint{89.955139bp}{42.457045bp}}
\pgfpathcurveto{\pgfpoint{93.077424bp}{42.457045bp}}{\pgfpoint{95.608531bp}{39.818227bp}}{\pgfpoint{95.608531bp}{36.56308bp}}
\pgfclosepath
\color[rgb]{0.0,0.0,0.0}
\pgfusepath{stroke}
\end{pgfscope}
\begin{pgfscope}
\pgfsetlinewidth{1.0bp}
\pgfsetrectcap 
\pgfsetmiterjoin \pgfsetmiterlimit{10.0}
\pgfpathmoveto{\pgfpoint{176.008531bp}{36.163077bp}}
\pgfpathcurveto{\pgfpoint{176.008531bp}{32.907931bp}}{\pgfpoint{173.477424bp}{30.269111bp}}{\pgfpoint{170.355139bp}{30.269111bp}}
\pgfpathcurveto{\pgfpoint{167.232855bp}{30.269111bp}}{\pgfpoint{164.701742bp}{32.907931bp}}{\pgfpoint{164.701742bp}{36.163077bp}}
\pgfpathcurveto{\pgfpoint{164.701742bp}{39.418224bp}}{\pgfpoint{167.232855bp}{42.057042bp}}{\pgfpoint{170.355139bp}{42.057042bp}}
\pgfpathcurveto{\pgfpoint{173.477424bp}{42.057042bp}}{\pgfpoint{176.008531bp}{39.418224bp}}{\pgfpoint{176.008531bp}{36.163077bp}}
\pgfclosepath
\color[rgb]{0.0,0.0,0.0}\pgfseteorule\pgfusepath{fill}
\pgfpathmoveto{\pgfpoint{176.008531bp}{36.163077bp}}
\pgfpathcurveto{\pgfpoint{176.008531bp}{32.907931bp}}{\pgfpoint{173.477424bp}{30.269111bp}}{\pgfpoint{170.355139bp}{30.269111bp}}
\pgfpathcurveto{\pgfpoint{167.232855bp}{30.269111bp}}{\pgfpoint{164.701742bp}{32.907931bp}}{\pgfpoint{164.701742bp}{36.163077bp}}
\pgfpathcurveto{\pgfpoint{164.701742bp}{39.418224bp}}{\pgfpoint{167.232855bp}{42.057042bp}}{\pgfpoint{170.355139bp}{42.057042bp}}
\pgfpathcurveto{\pgfpoint{173.477424bp}{42.057042bp}}{\pgfpoint{176.008531bp}{39.418224bp}}{\pgfpoint{176.008531bp}{36.163077bp}}
\pgfclosepath
\color[rgb]{0.0,0.0,0.0}
\pgfusepath{stroke}
\end{pgfscope}
\begin{pgfscope}
\pgftransformcm{1.0}{0.0}{0.0}{1.0}{\pgfpoint{43.815625bp}{19.099997bp}}
\pgftext[left,base]{\sffamily\mdseries\upshape\huge
\color[rgb]{0.0,0.0,0.0}$v_1$}
\end{pgfscope}
\begin{pgfscope}
\pgftransformcm{1.0}{0.0}{0.0}{1.0}{\pgfpoint{80.015625bp}{19.499997bp}}
\pgftext[left,base]{\sffamily\mdseries\upshape\huge
\color[rgb]{0.0,0.0,0.0}$v_2$}
\end{pgfscope}
\begin{pgfscope}
\pgftransformcm{1.0}{0.0}{0.0}{1.0}{\pgfpoint{162.815625bp}{18.899997bp}}
\pgftext[left,base]{\sffamily\mdseries\upshape\huge
\color[rgb]{0.0,0.0,0.0}$v_{k}$}
\end{pgfscope}
\begin{pgfscope}
\pgfsetlinewidth{1.0bp}
\pgfsetrectcap 
\pgfsetmiterjoin \pgfsetmiterlimit{10.0}
\pgfpathmoveto{\pgfpoint{223.215625bp}{32.300003bp}}
\pgfpathcurveto{\pgfpoint{223.215625bp}{14.737335bp}}{\pgfpoint{173.78786bp}{0.5bp}}{\pgfpoint{112.815631bp}{0.5bp}}
\pgfpathcurveto{\pgfpoint{51.843402bp}{0.5bp}}{\pgfpoint{2.415637bp}{14.737335bp}}{\pgfpoint{2.415637bp}{32.300003bp}}
\pgfpathcurveto{\pgfpoint{2.415637bp}{49.862656bp}}{\pgfpoint{51.843402bp}{64.099991bp}}{\pgfpoint{112.815631bp}{64.099991bp}}
\pgfpathcurveto{\pgfpoint{173.78786bp}{64.099991bp}}{\pgfpoint{223.215625bp}{49.862656bp}}{\pgfpoint{223.215625bp}{32.300003bp}}
\pgfclosepath
\color[rgb]{0.0,0.0,0.0}
\pgfusepath{stroke}
\end{pgfscope}
\begin{pgfscope}
\pgfsetlinewidth{1.0bp}
\pgfsetrectcap 
\pgfsetmiterjoin \pgfsetmiterlimit{10.0}
\pgfsetdash{{10.0bp}{5.0bp}}{0.0bp}
\pgfpathmoveto{\pgfpoint{192.615625bp}{11.299997bp}}
\pgfpathlineto{\pgfpoint{192.615625bp}{148.099997bp}}
\pgfpathlineto{\pgfpoint{33.015625bp}{148.099997bp}}
\pgfpathlineto{\pgfpoint{33.015625bp}{11.299997bp}}
\pgfpathlineto{\pgfpoint{192.615625bp}{11.299997bp}}
\pgfclosepath
\color[rgb]{0.0,0.0,0.0}
\pgfusepath{stroke}
\end{pgfscope}
\begin{pgfscope}
\pgftransformcm{1.0}{0.0}{0.0}{1.0}{\pgfpoint{-1.584375bp}{125.099997bp}}
\pgftext[left,base]{\sffamily\mdseries\upshape\huge
\color[rgb]{0.0,0.0,0.0}$H''$}
\end{pgfscope}
\begin{pgfscope}
\pgftransformcm{1.0}{0.0}{0.0}{1.0}{\pgfpoint{2.015625bp}{6.299997bp}}
\pgftext[left,base]{\sffamily\mdseries\upshape\huge
\color[rgb]{0.0,0.0,0.0}$e$}
\end{pgfscope}
\begin{pgfscope}
\pgfsetlinewidth{1.0bp}
\pgfsetrectcap 
\pgfsetmiterjoin \pgfsetmiterlimit{10.0}
\pgfsetdash{{1.0bp}{5.0bp}}{0.0bp}
\pgfpathmoveto{\pgfpoint{184.024375bp}{15.091239bp}}
\pgfpathcurveto{\pgfpoint{175.823378bp}{6.890242bp}}{\pgfpoint{145.621386bp}{23.795743bp}}{\pgfpoint{116.566393bp}{52.850737bp}}
\pgfpathcurveto{\pgfpoint{87.511377bp}{81.905752bp}}{\pgfpoint{70.605887bp}{112.107733bp}}{\pgfpoint{78.806884bp}{120.30873bp}}
\pgfpathcurveto{\pgfpoint{87.007903bp}{128.509749bp}}{\pgfpoint{117.209862bp}{111.604237bp}}{\pgfpoint{146.264877bp}{82.549222bp}}
\pgfpathcurveto{\pgfpoint{175.319871bp}{53.494228bp}}{\pgfpoint{192.225394bp}{23.292258bp}}{\pgfpoint{184.024375bp}{15.091239bp}}
\pgfclosepath
\color[rgb]{0.0,0.0,0.0}
\pgfusepath{stroke}
\end{pgfscope}
\begin{pgfscope}
\pgfsetlinewidth{1.0bp}
\pgfsetrectcap 
\pgfsetmiterjoin \pgfsetmiterlimit{10.0}
\pgfsetdash{{1.0bp}{5.0bp}}{0.0bp}
\pgfpathmoveto{\pgfpoint{77.809597bp}{14.891466bp}}
\pgfpathcurveto{\pgfpoint{69.608599bp}{23.092463bp}}{\pgfpoint{86.514101bp}{53.294455bp}}{\pgfpoint{115.569094bp}{82.349448bp}}
\pgfpathcurveto{\pgfpoint{144.624109bp}{111.404463bp}}{\pgfpoint{174.826091bp}{128.309954bp}}{\pgfpoint{183.027088bp}{120.108957bp}}
\pgfpathcurveto{\pgfpoint{191.228107bp}{111.907938bp}}{\pgfpoint{174.322594bp}{81.705979bp}}{\pgfpoint{145.267579bp}{52.650964bp}}
\pgfpathcurveto{\pgfpoint{116.212586bp}{23.59597bp}}{\pgfpoint{86.010615bp}{6.690447bp}}{\pgfpoint{77.809597bp}{14.891466bp}}
\pgfclosepath
\color[rgb]{0.0,0.0,0.0}
\pgfusepath{stroke}
\end{pgfscope}
\begin{pgfscope}
\pgfsetlinewidth{1.0bp}
\pgfsetrectcap 
\pgfsetmiterjoin \pgfsetmiterlimit{10.0}
\pgfsetdash{{1.0bp}{5.0bp}}{0.0bp}
\pgfpathmoveto{\pgfpoint{50.757224bp}{12.842333bp}}
\pgfpathcurveto{\pgfpoint{39.554454bp}{15.844107bp}}{\pgfpoint{39.094051bp}{50.45255bp}}{\pgfpoint{49.728917bp}{90.142409bp}}
\pgfpathcurveto{\pgfpoint{60.363791bp}{129.832298bp}}{\pgfpoint{75.066728bp}{143.973873bp}}{\pgfpoint{86.269499bp}{140.972099bp}}
\pgfpathcurveto{\pgfpoint{97.472299bp}{137.970318bp}}{\pgfpoint{100.932675bp}{118.961898bp}}{\pgfpoint{90.297802bp}{79.272009bp}}
\pgfpathcurveto{\pgfpoint{79.662936bp}{39.58215bp}}{\pgfpoint{61.960024bp}{9.840552bp}}{\pgfpoint{50.757224bp}{12.842333bp}}
\pgfclosepath
\color[rgb]{0.0,0.0,0.0}
\pgfusepath{stroke}
\end{pgfscope}
\begin{pgfscope}
\pgfsetlinewidth{1.0bp}
\pgfsetrectcap 
\pgfsetmiterjoin \pgfsetmiterlimit{10.0}
\pgfsetdash{{1.0bp}{5.0bp}}{0.0bp}
\pgfpathmoveto{\pgfpoint{44.952803bp}{93.043648bp}}
\pgfpathcurveto{\pgfpoint{41.951029bp}{104.246419bp}}{\pgfpoint{71.692619bp}{121.94936bp}}{\pgfpoint{111.382478bp}{132.584226bp}}
\pgfpathcurveto{\pgfpoint{151.072367bp}{143.2191bp}}{\pgfpoint{185.680795bp}{142.758693bp}}{\pgfpoint{188.682568bp}{131.555923bp}}
\pgfpathcurveto{\pgfpoint{191.68435bp}{120.353123bp}}{\pgfpoint{161.942767bp}{102.650215bp}}{\pgfpoint{122.252878bp}{92.015341bp}}
\pgfpathcurveto{\pgfpoint{82.563019bp}{81.380476bp}}{\pgfpoint{47.954584bp}{81.840848bp}}{\pgfpoint{44.952803bp}{93.043648bp}}
\pgfclosepath
\color[rgb]{0.0,0.0,0.0}
\pgfusepath{stroke}
\end{pgfscope}
\begin{pgfscope}
\pgftransformcm{1.40625}{-0.0}{0.0}{1.40625}{\pgfpoint{109.886426bp}{34.237497bp}}
\pgftext[left,base]{\sffamily\mdseries\upshape\fontsize{36.135}{36.135}\selectfont
\color[rgb]{0.0,0.0,0.0}$\hdots$}
\end{pgfscope}
\end{pgfpicture}}
  \caption{The hypergraph $H'$ used to show that $w_0=w_k$ in Lemma~\ref{lem:selfdual}. It consists of a hypergraph $H''$ (dashed in the figure) which $\epsilon$-realises $k$-equality among its terminals $v_1,\hdots,v_k$  and the hyperedge $e=\{v_1,\hdots,v_k\}$. We show that if $w_0> w_k$, then $H'$ realises pinning-to-0.}
\label{fig:caseonea}
\end{figure}

For ease of notation, we let $Z_{0,k} = w_0$ and $Z_{1,k} = w_k$.
Then, for $t\in\{1, \ldots,k-1\}$ let $V_t=\{v_1,\hdots,v_t\}$.  
For  $s\in \{0,1\}$ 
let $Z'_{s,t} = \sum^{k-1-t}_{\ell=0}\binom{k-1-t}{\ell} w_{\ell+s}$
and $Z''_{s,t} = \sum^{k-1-t}_{\ell=0}\binom{k-1-t}{\ell} w_{\ell+s+t}$.
Let $Z_{s,t} = Z'_{s,t} + Z''_{s,t}$.
We will   establish the following system of equalities.
\begin{equation}
\label{eq:Zbal}
\text{For $t\in \{1,\ldots,k\},$ } 
Z_{0,t} = Z_{1,t}.
\end{equation}

We have already dealt with the case $t=k$.
Next, consider $t=1$.
Note that $\mu_{f;H}(\sigma(v_k)=0)=Z_{0,1}/Z_{f;H}$ and
$\mu_{f;H}(\sigma(v_k)=1) = Z_{1,1}/Z_{f,H}$  (to see these, observe that $Z'_{s,1}$ is the contribution from configurations where, say, vertex $v_{1}$ has spin 0 and $v_k$ has spin $s$; similarly, $Z''_{s,1}$ is the contribution where $v_{1}$ has spin 1 and $v_k$ has spin $s$). 
Thus, if $Z_{0,1} \neq Z_{1,1}$ then 
$\mu_{f;H}(\sigma(v_k)=0)\neq \mu_{f;H}(\sigma(v_k)=1)$ so Lemma~\ref{lem:gadgets} shows
that $f$ supports pinning-to-0 or pinning-to-1, contrary to the statement of the lemma.

We will now show that Equation~\eqref{eq:Zbal} holds for $t \in \{2,\ldots,k-1\}$. 
The proof is similar to the proof that $w_0=w_k$ above.
Consider some $t\in\{2,\ldots,k-1\}$, and
suppose for contradiction that $Z_{0,t} > Z_{1,t}$. 
Let $\rho= Z_{1,t}\geq 0$ and 
$\delta = Z_{0,t} - Z_{1,t}>0$.
Again, since $f$ supports $2$-equality, it also supports $t$-equality by Lemma~\ref{lem:equalst}.
Choose $\epsilon$ sufficiently small so that 
$$
(\rho+\delta)  \left(\frac{1-\epsilon}{2}\right)
> \rho \left(\frac{1+\epsilon}{2}\right) + 2^k \epsilon
$$
Construct
 $H'$ by taking $H$ and a distinct $\epsilon$-realisation $H''$ of $t$-equality
and identifying  the terminals $v_1,\ldots,v_t$ in~$H$ and~$H''$ (see Figure~\ref{fig:caseoneb}).
Note that vertex~$v_k$ is not a vertex of~$H''$.
So
the contribution to $Z_{f,H'}$ 
from configurations with $\sigma(v_k)=0$ is
at least the contribution from such configurations 
which also satisfy $\sigma(v_1)=\cdots=\sigma(v_t)=0$
(giving a contribution of at least $Z'_{0,t} \left(\frac{1-\epsilon}{2}\right) Z_{f,H''}$)
and a similar contribution 
from configurations which also satisfy $\sigma(v_1)=\cdots=\sigma(v_t)=1$
(giving a contribution of at least $Z''_{0,t} \left(\frac{1-\epsilon}{2}\right) Z_{f,H''}$)
so  we get
\begin{align*}
\mu_{f,H'}(\sigma(v_k)=0) Z_{f,H'}& \geq Z'_{0,t} \left(\tfrac{1-\epsilon}{2}\right) Z_{f,H''}+ Z''_{0,t}  \left(\tfrac{1-\epsilon}{2}\right) Z_{f,H''} = Z_{0,t}  \left(\tfrac{1-\epsilon}{2}\right) Z_{f,H''}\\
&\geq  (\rho+\delta) \left(\tfrac{1-\epsilon}{2}\right) Z_{f,H''}.
\end{align*}
Now consider the contribution to $Z_{f,H'}$
from configurations with $\sigma(v_k)=1$.
This is at most the contributions which also satisfy 
$\sigma(v_1)=\cdots=\sigma(v_t)=0$
(giving a contribution at most $Z'_{1,t} \left(\frac{1+\epsilon}{2}\right) Z_{f,H''}$)
and a similar term 
$Z''_{1,t} \left(\frac{1+\epsilon}{2}\right) Z_{f,H''}$ from the contributions which also satisfy 
$\sigma(v_1)=\cdots=\sigma(v_t)=1$.
In addition, configurations which do not
satisfy $\sigma(v_1)=\cdots=\sigma(v_t)$
contribute at most
$ 2^k \epsilon Z_{f,H''}$.
So we get
$$\mu_{f;H'}(\sigma(v_k)=1) Z_{f,H'} \leq  \rho \left(\tfrac{1+\epsilon}{2}\right) Z_{f,H''} 
  +  2^k \epsilon Z_{f;H''}.$$
  Again by the choice of $\epsilon$, we have  
 $\mu_{f;H'}(\sigma(v_k)=0) > \mu_{f,H'}(\sigma(v_k)=1)$,
 so Lemma~\ref{lem:gadgets} shows that $f$ supports pinning-to-0, contrary to the statement of the lemma.
 So we have now established the system of equations~\eqref{eq:Zbal}.

\begin{figure}[t]
  \centering
  \scalebox{0.7}[0.7]{\input{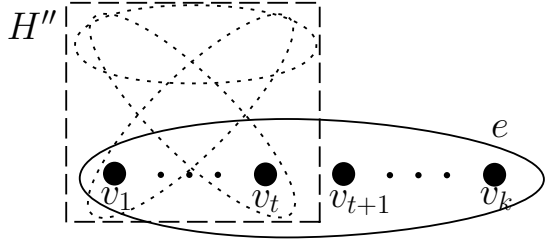}}
  \caption{The hypergraph $H'$ used to show that $Z_{0,t}=Z_{1,t}$ for $t\in\{2,\hdots,k-1\}$. It consists of a hypergraph $H''$ (dashed in the figure) which $\epsilon$-realises $t$-equality among its terminals $v_1,\hdots,v_t$  and the hyperedge $e=\{v_1,\hdots,v_t,v_{t+1},\hdots,v_k\}$. We show that if $Z_{0,t}>Z_{1,t}$, then $H'$ realises pinning-to-0.}
\label{fig:caseoneb}
\end{figure}

We view~\eqref{eq:Zbal}
as a system of equations over the $k+1$ (real) variables $w_0,\hdots,w_k$. 
We refer to the equation $Z_{0,t} = Z_{1,t}$ as  ``Equation~$t$'' or ``the $t$'th equation''.
In the remainder of the proof we will show
that the solutions to this system of equations
are exactly the assignments of real values to the variables  
satisfying
\begin{equation}
\label{eq:stmt}
w_\ell=w_{k-\ell}  \text{ for } \ell=0,\hdots,k.
\end{equation}

We first show that any solution satisfying~\eqref{eq:stmt} 
satisfies~\eqref{eq:Zbal}.
The case $t=k$ is obvious. For $t\in\{1,\ldots,k-1\}$,
we will show that any solution satisfying~\eqref{eq:stmt}
satisfies
$Z'_{0,t} = Z''_{1,t}$ and $Z'_{1,t} = Z''_{0,t}$. 
To see the first of these, 
consider the coefficient $\binom{k-1-t}{\ell'}$ of $w_{\ell'+1+t}$ in $Z''_{1,t}$.
Now let $\ell = k- (\ell'+1+t)$ so $w_{\ell} = w_{\ell'+1+t}$.
The coefficient of $w_\ell$ in $Z'_{0,t}$ is
$\binom{k-1-t}{\ell} = \binom{k-1-t}{k-1-t-\ell} = \binom{k-1-t}{\ell'}$.
This establishes $Z'_{0,t} = Z''_{1,t}$. The proof  that $Z'_{1,t} = Z''_{0,t}$ is similar.
We will now show that the solution in~\eqref{eq:stmt} has the appropriate
dimension, so there are no other solutions of~\eqref{eq:Zbal}.
It is simplest to split the calculation into two cases, depending on the parity of~$k$.
 
First, suppose $k=2r+1$ is odd.
Consider $1\leq t \leq t' \leq k-1$. We will compute the 
coefficients of the variable $w_t$ in the
quantities $Z'_{0,t'}$, $Z''_{0,t'}$, $Z''_{1,t'}$ and $Z''_{1,t'}$ using the convention
that $\binom{a}{b}=0$ if $b\not\in\{0,\ldots,a\}$.
These are
$\binom{2r-t'}{t}$, $\binom{2r-t'}{t-t'}$, $\binom{2r-t'}{t-1}$ and $\binom{2r-t'}{t-t'-1}$, respectively.
Consider the case where $t\geq r+1$. 
Then $2r-t'<t-1<t$ so these can be simplified to 
$0$, $\binom{2r-t'}{t-t'}$, $0$ and $\binom{2r-t'}{t-t'-1}$, respectively.
If $t'>t$ then all four coefficients are~$0$ so $w_t$ is not in Equation~$t'$.
If $t'=t$ then the final coefficient is~$0$, but the 2nd of these coefficients is~$1$, so
$w_t$  has a non-zero coefficient in Equation~$t$.
We conclude for $t\geq r+1$ that
 $w_t$ is not in Equations~$t+1,\ldots,k-1$,
but it has a non-zero coefficient in Equation~$t$.

Thus, Equations $k,k-1,\ldots,r+1$ 
give us $k-r=r+1$ linearly independent equations.
 Thus the solution space of the system 
over our $k+1$ variables
has dimension at most 
$k+1 - (r+1)=
r+1$. But we have already shown that Equation~\eqref{eq:stmt}
 gives a solution, and the dimension of this solution 
    is $r+1$.

Similarly, suppose that $k=2r$ is even.
Once again consider $1\leq t \leq t' \leq k-1$ and compute the 
coefficients of the variable $w_t$ in the
quantities $Z'_{0,t'}$, $Z''_{0,t'}$, $Z''_{1,t'}$ and $Z''_{1,t'}$.
These are
$\binom{2r-1-t'}{t}$, $\binom{2r-1-t'}{t-t'}$, $\binom{2r-1-t'}{t-1}$ and $\binom{2r-1-t'}{t-t'-1}$, respectively.
Consider the case where $t\geq r+1$.
Then $2r-1-t'<t-1<t$ so these can again be simplified to 
$0$, $\binom{2r-1-t'}{t-t'}$, $0$ and $\binom{2r-1-t'}{t-t'-1}$, respectively.
Once again, if $t'>t$ then all four coefficients are~$0$ so $w_t$ is not in Equation~$t'$.
However, if $t'=t$ then the final coefficient is~$0$, but the 2nd of these coefficients is~$1$, so
$w_t$  has a non-zero coefficient in Equation~$t$.
As before, we conclude  that the equations $k,k-1,\ldots,r+1$ 
give us $r$
linearly independent equations, so
the solution space of the system has dimension at most $(k+1)-(r)=r+1$.
 But $w_\ell=w_{k-\ell}$ is a solution whose dimension is $r+1$. \end{proof}

\subsection{Realising conditional distributions induced by pinning and equality}

We will use pinnings or equality to construct and analyze gadgets; the upcoming Lemma~\ref{lem:pineq} is a first step in doing this effortlessly. It asserts that when $f$ supports one of the properties pinning-to-0, pinning-to-1 or equality we can consider appropriate conditional distributions (depending on the property); these conditional distributions can then be realised using appropriately constructed hypergraphs.

Given a set~$S$ of vertices,
 it will be convenient to write $\sigma_S=\zeros$ to denote 
the event that all vertices in $S$ are assigned the spin 0 under 
the assignment $\sigma$.  We will similarly write
 $\sigma_S=\ones$. 
  We will also use $\sigma^{\eq}_S$ to denote 
  the event that all vertices in $S$ have the same spin under $\sigma$ 
  (the spin could be~$0$ or~$1$).
  
\begin{lemma}\label{lem:pineq}  
Let $f:\{0,1\}^k\rightarrow\Rplus$ be symmetric. Let $H=(V,\mathcal{F})$ be a hypergraph and $S\subseteq V$.  Let $\epsilon>0$.
\begin{enumerate}
\item \label{it:pinningfix} Suppose that $f$ supports pinning-to-$s$ for some $s\in\{0,1\}$. Suppose that $u\in V$ satisfies $\mu_{f;H}(\sigma_{u}=s)>0$. Then
there is a hypergraph~$H'=(V',\mathcal{F}')$  with  $V\subseteq V'$ and $\mathcal{F} \subseteq \mathcal{F}'$ such that  for every 
$\tau:S\rightarrow\{0,1\}$ it holds that 
\[\big|\mu_{f;H'}(\sigma_{S}=\tau)-\mu_{f;H}\big(\sigma_S=\tau\mid \sigma_{u}=s\big)\big|\leq \epsilon.\]
\item \label{it:equalityfix} Suppose that $f$ supports equality and that $R\subseteq V$ satisfies $\mu_{f;H}(\sigma^{\eq}_{R})>0$. Then
there is a hypergraph ~$H'=(V',\mathcal{F}')$  with  $V\subseteq V'$ and $\mathcal{F} \subseteq \mathcal{F}'$ such that 
   for every 
$\tau:S\rightarrow\{0,1\}$ it holds that  
\[\Big|\mu_{f;H'}(\sigma_{S}=\tau)- \mu_{f;H}\big(\sigma_S=\tau\mid \sigma^{\eq}_{R}\big)\Big|\leq \epsilon,\]
\end{enumerate}
\end{lemma}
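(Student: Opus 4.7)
The plan is to prove both parts by taking an appropriate accuracy-$\eps'$ realisation of pinning-to-$s$ (respectively, of equality) and gluing it to $H$ by identifying its terminal(s) with the relevant vertex $u$ (respectively, with the vertices of $R$). In each case the resulting hypergraph $H'$ extends $H$ (so $V\subseteq V'$ and $\mathcal{F}\subseteq\mathcal{F}'$), and by choosing $\eps'$ small enough the marginal of $H'$ on $S$ will be $\eps$-close to the desired conditional of $H$. The positivity hypotheses $\mu_{f;H}(\sigma_u=s)>0$ (respectively, $\mu_{f;H}(\sigma^{\eq}_R)>0$) are precisely what is needed to keep the denominators bounded away from zero so that the error terms can be driven below $\eps$.

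For part~\ref{it:pinningfix}, let $G$ be an $\eps'$-realisation of pinning-to-$s$ with terminal $v$, guaranteed to exist for any $\eps'>0$ since $f$ supports pinning-to-$s$. Let $H'$ be the disjoint union of $H$ and $G$ with $v$ identified with $u$. Write $Z_H^{(s')}$ (respectively $Z_H^{(\tau,s')}$) for the total weight, under $w_{f;H}$, of configurations on $H$ with $\sigma_u=s'$ (respectively $\sigma_u=s'$ and $\sigma_S=\tau$), and analogously $Z_G^{(s')}$ for $G$ and $v$. Since each hyperedge of $H'$ lies entirely in $H$ or entirely in $G$, one obtains
\[\mu_{f;H'}(\sigma_S=\tau)=\frac{Z_H^{(\tau,s)}Z_G^{(s)}+Z_H^{(\tau,1-s)}Z_G^{(1-s)}}{Z_H^{(s)}Z_G^{(s)}+Z_H^{(1-s)}Z_G^{(1-s)}}=\frac{Z_H^{(\tau,s)}+\alpha\, Z_H^{(\tau,1-s)}}{Z_H^{(s)}+\alpha\, Z_H^{(1-s)}},\]
where $\alpha:=Z_G^{(1-s)}/Z_G^{(s)}\leq \eps'/(1-\eps')$ by the pinning bound. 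As $\alpha\to 0$, this ratio tends to $Z_H^{(\tau,s)}/Z_H^{(s)}=\mu_{f;H}(\sigma_S=\tau\mid \sigma_u=s)$; the assumption $\mu_{f;H}(\sigma_u=s)>0$ ensures $Z_H^{(s)}>0$, so choosing $\eps'$ small enough in terms of $\eps$ and the (fixed) quantities $Z_H^{(s)}$, $Z_H^{(1-s)}$ delivers the claim.

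For part~\ref{it:equalityfix}, let $t=|R|$; by Lemma~\ref{lem:equalst}, $f$ supports $t$-equality, so for any $\eps'>0$ there is an $\eps'$-realisation $G$ of $t$-equality with distinct terminals $v_1,\dots,v_t$. List $R=\{u_1,\dots,u_t\}$ and form $H'$ by identifying $v_i$ with $u_i$. Decompose a configuration of $H'$ by its restriction $\rho:R\to\{0,1\}$. For $\rho=\zeros_R$ and $\rho=\ones_R$ the weight contributed by $G$ is $Z_G^{(=,s')}\in\bigl[\tfrac{1-\eps'}{2},\tfrac{1+\eps'}{2}\bigr]Z_{f;G}$ by Definition~\ref{def:equality} and Remark~\ref{rem:equality}; for every other $\rho$ the total contribution from $G$ is at most $\eps' Z_{f;G}$. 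Writing $Z_H^{(\tau,\rho)}$ for the weight of $H$-configurations with $\sigma_S=\tau$ and $\sigma_R=\rho$ (and $Z_H^{(\rho)}$ for the corresponding unrestricted quantity), it follows that
\[\mu_{f;H'}(\sigma_S=\tau)=\frac{Z_G^{(=,0)}Z_H^{(\tau,\zeros_R)}+Z_G^{(=,1)}Z_H^{(\tau,\ones_R)}+E_1}{Z_G^{(=,0)}Z_H^{(\zeros_R)}+Z_G^{(=,1)}Z_H^{(\ones_R)}+E_2},\]
with $|E_1|,|E_2|\leq \eps'\,Z_{f;G}\,Z_{f;H}$. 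Using the bracket bound on $Z_G^{(=,s')}/Z_{f;G}$ and letting $\eps'\to 0$, the ratio converges to $\bigl(Z_H^{(\tau,\zeros_R)}+Z_H^{(\tau,\ones_R)}\bigr)/\bigl(Z_H^{(\zeros_R)}+Z_H^{(\ones_R)}\bigr)=\mu_{f;H}(\sigma_S=\tau\mid \sigma^{\eq}_R)$. The assumption $\mu_{f;H}(\sigma^{\eq}_R)>0$ guarantees that $Z_H^{(\zeros_R)}+Z_H^{(\ones_R)}>0$, so once again a sufficiently small $\eps'$ yields the claimed $\eps$-approximation.

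The only real obstacle is the bookkeeping in part~\ref{it:equalityfix}: one must show that the two error terms $E_1,E_2$ do not produce instability in the quotient. This is handled by first lower-bounding the main denominator by a constant multiple of $\tfrac{1}{2}Z_{f;G}(Z_H^{(\zeros_R)}+Z_H^{(\ones_R)})$ (using the positivity hypothesis) and only then choosing $\eps'$ small relative to this constant; after that, standard manipulation of ratios of the form $(A+\delta)/(B+\delta')$ with $B$ bounded below and $|\delta|,|\delta'|$ small delivers the bound $\eps$.
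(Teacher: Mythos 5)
Your proposal is correct and uses the same construction as the paper: glue an $\eps'$-realisation of pinning (resp.\ $|R|$-equality) onto $H$ by identifying its terminal(s) with $u$ (resp.\ with $R$), exploit the fact that every hyperedge of $H'$ lies wholly in $H$ or wholly in the gadget so the weights factorise, and then drive $\eps'$ to zero using the positivity hypothesis to keep the denominator bounded away from zero. The only difference is cosmetic bookkeeping — you manipulate explicit partition-function ratios of the form $(A+\alpha B)/(C+\alpha D)$, whereas the paper decomposes $\mu_{f;H'}(\sigma_S=\tau)$ via conditional independence given the spins at the identified vertices — and both routes yield the same uniform-in-$\tau$ bound.
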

\begin{proof}
Denote $\mu_{f;H}$ by~$\mu$. 
We begin with the proof of Item~\ref{it:pinningfix}. 
We will take $s=0$. The proof for $s=1$ is similar, swapping $0$'s and $1$'s.
Let $\epsilon'>0$ be
sufficiently small so that    
\begin{equation}
\label{eq:29th}
\left(\frac{\epsilon'}{1-\epsilon'} \right) \left(\frac{\mu(\sigma_u=1)}{\mu(\sigma_u=0)}\right) \leq \frac{\epsilon}{1-\epsilon}.
\end{equation}

 Since $f$ supports pinning-to-0, there exists a hypergraph $H_0$ with vertex set $V_0$ which $\epsilon'$-realises pinning-to-0. Denote by $v_0$ the terminal of $H_0$. For $i\in\{0,1\}$, let $p_{i}=\mu_{f;H_0}(\sigma_{v_0}=i)$, so that $p_0\geq 1-\epsilon'$, $p_1\leq \epsilon'$. To construct the hypergraph $H'$ (see Figure~\ref{fig:pin0}), take a (distinct)  copy of $H_0$ and identify $v_0$ with $u$ (the copy of $H_0$ is otherwise disjoint from the rest of $H$). Let $\mu':=\mu_{f;H'}$. Note that $\mu'(\sigma_u=s)\propto p_s\, \mu(\sigma_u=s)$ for $s\in\{0,1\}$, so that 
\[\frac{\mu'(\sigma_u=1)}{\mu'(\sigma_u=0)}=\left(\frac{p_1}{p_0}\right)\left( \frac{\mu(\sigma_u=1)}{\mu(\sigma_u=0)}\right).\]
Equation~\eqref{eq:29th} yields $\mu'(\sigma_u=1)\leq \epsilon$.

Let $\tau:S\rightarrow \{0,1\}$. Note that
\begin{align*}
\mu'(\sigma_S=\tau)&=\mu'(\sigma_u=0)\mu'(\sigma_S=\tau\mid \sigma_u=0)+\mu'(\sigma_S=\tau, \sigma_u=1)\\
&=\mu'(\sigma_u=0)\,\mu(\sigma_S=\tau\mid \sigma_u=0)+\mu'(\sigma_S=\tau, \sigma_u=1),
\end{align*}
where in the second equality we used that conditioned on the spin of $u$, $\sigma_S$ is independent of the configuration $\sigma_{V_0\setminus \{u\}}$. It follows that
\[|\mu'(\sigma_S=\tau)-\mu(\sigma_S=\tau\mid \sigma_u=0)|\leq \mu'(\sigma_u=1)\leq \epsilon\]
completing the proof of Item~\ref{it:pinningfix}.

\begin{figure}[t]
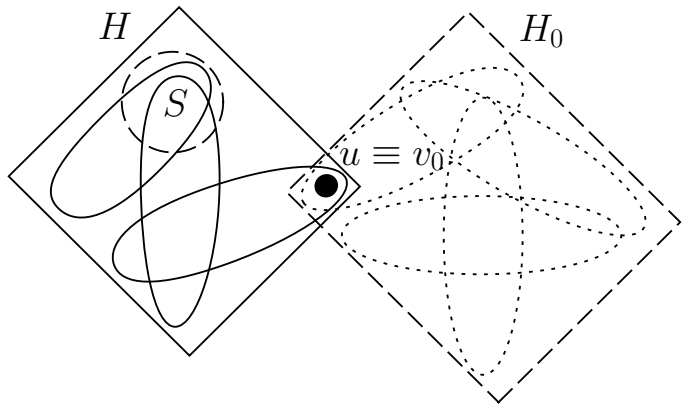

  \centering
  \scalebox{0.7}[0.7]{
\begin{pgfpicture}{0bp}{0bp}{359.057678bp}{213.557648bp}
\begin{pgfscope}
\pgfsetlinewidth{1.0bp}
\pgfsetrectcap 
\pgfsetmiterjoin \pgfsetmiterlimit{10.0}
\pgfpathmoveto{\pgfpoint{173.843603bp}{112.964151bp}}
\pgfpathcurveto{\pgfpoint{171.541868bp}{110.662415bp}}{\pgfpoint{167.886178bp}{110.586251bp}}{\pgfpoint{165.678389bp}{112.79404bp}}
\pgfpathcurveto{\pgfpoint{163.470601bp}{115.001828bp}}{\pgfpoint{163.546761bp}{118.657523bp}}{\pgfpoint{165.848496bp}{120.959258bp}}
\pgfpathcurveto{\pgfpoint{168.150233bp}{123.260995bp}}{\pgfpoint{171.805927bp}{123.337154bp}}{\pgfpoint{174.013715bp}{121.129366bp}}
\pgfpathcurveto{\pgfpoint{176.221503bp}{118.921577bp}}{\pgfpoint{176.14534bp}{115.265888bp}}{\pgfpoint{173.843603bp}{112.964151bp}}
\pgfclosepath
\color[rgb]{0.0,0.0,0.0}\pgfseteorule\pgfusepath{fill}
\pgfpathmoveto{\pgfpoint{173.843603bp}{112.964151bp}}
\pgfpathcurveto{\pgfpoint{171.541868bp}{110.662415bp}}{\pgfpoint{167.886178bp}{110.586251bp}}{\pgfpoint{165.678389bp}{112.79404bp}}
\pgfpathcurveto{\pgfpoint{163.470601bp}{115.001828bp}}{\pgfpoint{163.546761bp}{118.657523bp}}{\pgfpoint{165.848496bp}{120.959258bp}}
\pgfpathcurveto{\pgfpoint{168.150233bp}{123.260995bp}}{\pgfpoint{171.805927bp}{123.337154bp}}{\pgfpoint{174.013715bp}{121.129366bp}}
\pgfpathcurveto{\pgfpoint{176.221503bp}{118.921577bp}}{\pgfpoint{176.14534bp}{115.265888bp}}{\pgfpoint{173.843603bp}{112.964151bp}}
\pgfclosepath
\color[rgb]{0.0,0.0,0.0}
\pgfusepath{stroke}
\end{pgfscope}
\begin{pgfscope}
\pgfsetlinewidth{1.0bp}
\pgfsetrectcap 
\pgfsetmiterjoin \pgfsetmiterlimit{10.0}
\pgfsetdash{{10.0bp}{5.0bp}}{0.0bp}
\pgfpathmoveto{\pgfpoint{261.61835bp}{0.70709bp}}
\pgfpathlineto{\pgfpoint{358.350557bp}{97.439298bp}}
\pgfpathlineto{\pgfpoint{245.496315bp}{210.29354bp}}
\pgfpathlineto{\pgfpoint{148.764107bp}{113.561333bp}}
\pgfpathlineto{\pgfpoint{261.61835bp}{0.70709bp}}
\pgfclosepath
\color[rgb]{0.0,0.0,0.0}
\pgfusepath{stroke}
\end{pgfscope}
\begin{pgfscope}
\pgfsetlinewidth{1.0bp}
\pgfsetrectcap 
\pgfsetmiterjoin \pgfsetmiterlimit{10.0}
\pgfsetdash{{1.0bp}{5.0bp}}{0.0bp}
\pgfpathmoveto{\pgfpoint{253.424232bp}{15.462834bp}}
\pgfpathcurveto{\pgfpoint{241.82627bp}{15.462834bp}}{\pgfpoint{232.424232bp}{48.772862bp}}{\pgfpoint{232.424232bp}{89.862828bp}}
\pgfpathcurveto{\pgfpoint{232.424232bp}{130.952825bp}}{\pgfpoint{241.82627bp}{164.262837bp}}{\pgfpoint{253.424232bp}{164.262837bp}}
\pgfpathcurveto{\pgfpoint{265.022224bp}{164.262837bp}}{\pgfpoint{274.424232bp}{130.952825bp}}{\pgfpoint{274.424232bp}{89.862828bp}}
\pgfpathcurveto{\pgfpoint{274.424232bp}{48.772862bp}}{\pgfpoint{265.022224bp}{15.462834bp}}{\pgfpoint{253.424232bp}{15.462834bp}}
\pgfclosepath
\color[rgb]{0.0,0.0,0.0}
\pgfusepath{stroke}
\end{pgfscope}
\begin{pgfscope}
\pgfsetlinewidth{1.0bp}
\pgfsetrectcap 
\pgfsetmiterjoin \pgfsetmiterlimit{10.0}
\pgfsetdash{{1.0bp}{5.0bp}}{0.0bp}
\pgfpathmoveto{\pgfpoint{178.17778bp}{90.426763bp}}
\pgfpathcurveto{\pgfpoint{178.17778bp}{102.024725bp}}{\pgfpoint{211.487808bp}{111.426763bp}}{\pgfpoint{252.577774bp}{111.426763bp}}
\pgfpathcurveto{\pgfpoint{293.667771bp}{111.426763bp}}{\pgfpoint{326.977783bp}{102.024725bp}}{\pgfpoint{326.977783bp}{90.426763bp}}
\pgfpathcurveto{\pgfpoint{326.977783bp}{78.828771bp}}{\pgfpoint{293.667771bp}{69.426763bp}}{\pgfpoint{252.577774bp}{69.426763bp}}
\pgfpathcurveto{\pgfpoint{211.487808bp}{69.426763bp}}{\pgfpoint{178.17778bp}{78.828771bp}}{\pgfpoint{178.17778bp}{90.426763bp}}
\pgfclosepath
\color[rgb]{0.0,0.0,0.0}
\pgfusepath{stroke}
\end{pgfscope}
\begin{pgfscope}
\pgfsetlinewidth{1.0bp}
\pgfsetrectcap 
\pgfsetmiterjoin \pgfsetmiterlimit{10.0}
\pgfsetdash{{1.0bp}{5.0bp}}{0.0bp}
\pgfpathmoveto{\pgfpoint{157.599909bp}{108.106724bp}}
\pgfpathcurveto{\pgfpoint{151.800928bp}{118.150853bp}}{\pgfpoint{175.94724bp}{142.948272bp}}{\pgfpoint{211.532194bp}{163.493254bp}}
\pgfpathcurveto{\pgfpoint{247.117174bp}{184.038253bp}}{\pgfpoint{267.513325bp}{183.641309bp}}{\pgfpoint{273.312306bp}{173.59718bp}}
\pgfpathcurveto{\pgfpoint{279.111302bp}{163.553024bp}}{\pgfpoint{268.117174bp}{147.665186bp}}{\pgfpoint{232.532194bp}{127.120187bp}}
\pgfpathcurveto{\pgfpoint{196.94724bp}{106.575205bp}}{\pgfpoint{163.398905bp}{98.062568bp}}{\pgfpoint{157.599909bp}{108.106724bp}}
\pgfclosepath
\color[rgb]{0.0,0.0,0.0}
\pgfusepath{stroke}
\end{pgfscope}
\begin{pgfscope}
\pgfsetlinewidth{1.0bp}
\pgfsetrectcap 
\pgfsetmiterjoin \pgfsetmiterlimit{10.0}
\pgfsetdash{{1.0bp}{5.0bp}}{0.0bp}
\pgfpathmoveto{\pgfpoint{210.206457bp}{168.921963bp}}
\pgfpathcurveto{\pgfpoint{216.005438bp}{178.966093bp}}{\pgfpoint{249.553787bp}{170.453483bp}}{\pgfpoint{285.138742bp}{149.9085bp}}
\pgfpathcurveto{\pgfpoint{320.723722bp}{129.363502bp}}{\pgfpoint{344.87002bp}{104.566091bp}}{\pgfpoint{339.07104bp}{94.521962bp}}
\pgfpathcurveto{\pgfpoint{333.272044bp}{84.477806bp}}{\pgfpoint{299.723722bp}{92.990435bp}}{\pgfpoint{264.138742bp}{113.535433bp}}
\pgfpathcurveto{\pgfpoint{228.553787bp}{134.080416bp}}{\pgfpoint{204.407461bp}{158.877808bp}}{\pgfpoint{210.206457bp}{168.921963bp}}
\pgfclosepath
\color[rgb]{0.0,0.0,0.0}
\pgfusepath{stroke}
\end{pgfscope}
\begin{pgfscope}
\pgftransformcm{1.0}{0.0}{0.0}{1.0}{\pgfpoint{176.757336bp}{128.30031bp}}
\pgftext[left,base]{\sffamily\mdseries\upshape\huge
\color[rgb]{0.0,0.0,0.0}$u\equiv v_0$}
\end{pgfscope}
\begin{pgfscope}
\pgftransformcm{1.0}{0.0}{0.0}{1.0}{\pgfpoint{272.357336bp}{194.70031bp}}
\pgftext[left,base]{\sffamily\mdseries\upshape\huge
\color[rgb]{0.0,0.0,0.0}$H_{0}$}
\end{pgfscope}
\begin{pgfscope}
\pgfsetlinewidth{1.0bp}
\pgfsetrectcap 
\pgfsetmiterjoin \pgfsetmiterlimit{10.0}
\pgfpathmoveto{\pgfpoint{91.499617bp}{212.850536bp}}
\pgfpathlineto{\pgfpoint{0.707106bp}{122.058026bp}}
\pgfpathlineto{\pgfpoint{97.01505bp}{25.750082bp}}
\pgfpathlineto{\pgfpoint{187.80756bp}{116.542593bp}}
\pgfpathlineto{\pgfpoint{91.499617bp}{212.850536bp}}
\pgfclosepath
\color[rgb]{0.0,0.0,0.0}
\pgfusepath{stroke}
\end{pgfscope}
\begin{pgfscope}
\pgfsetlinewidth{1.0bp}
\pgfsetrectcap 
\pgfsetmiterjoin \pgfsetmiterlimit{10.0}
\pgfsetdash{{10.0bp}{5.0bp}}{0.0bp}
\pgfpathmoveto{\pgfpoint{114.957361bp}{161.900313bp}}
\pgfpathcurveto{\pgfpoint{114.957361bp}{146.988615bp}}{\pgfpoint{102.869043bp}{134.900313bp}}{\pgfpoint{87.957361bp}{134.900313bp}}
\pgfpathcurveto{\pgfpoint{73.045618bp}{134.900313bp}}{\pgfpoint{60.957361bp}{146.988615bp}}{\pgfpoint{60.957361bp}{161.900313bp}}
\pgfpathcurveto{\pgfpoint{60.957361bp}{176.811995bp}}{\pgfpoint{73.045618bp}{188.900313bp}}{\pgfpoint{87.957361bp}{188.900313bp}}
\pgfpathcurveto{\pgfpoint{102.869043bp}{188.900313bp}}{\pgfpoint{114.957361bp}{176.811995bp}}{\pgfpoint{114.957361bp}{161.900313bp}}
\pgfclosepath
\color[rgb]{0.0,0.0,0.0}
\pgfusepath{stroke}
\end{pgfscope}
\begin{pgfscope}
\pgftransformcm{1.0}{0.0}{0.0}{1.0}{\pgfpoint{82.757336bp}{154.50031bp}}
\pgftext[left,base]{\sffamily\mdseries\upshape\huge
\color[rgb]{0.0,0.0,0.0}$S$}
\end{pgfscope}
\begin{pgfscope}
\pgfsetlinewidth{1.0bp}
\pgfsetrectcap 
\pgfsetmiterjoin \pgfsetmiterlimit{10.0}
\pgfpathmoveto{\pgfpoint{56.8647bp}{74.379615bp}}
\pgfpathcurveto{\pgfpoint{52.897964bp}{85.278133bp}}{\pgfpoint{64.783465bp}{100.70586bp}}{\pgfpoint{103.395403bp}{114.759456bp}}
\pgfpathcurveto{\pgfpoint{142.007369bp}{128.813063bp}}{\pgfpoint{176.524229bp}{131.370732bp}}{\pgfpoint{180.490965bp}{120.472213bp}}
\pgfpathcurveto{\pgfpoint{184.457712bp}{109.573666bp}}{\pgfpoint{156.372215bp}{89.345973bp}}{\pgfpoint{117.760249bp}{75.292366bp}}
\pgfpathcurveto{\pgfpoint{79.148311bp}{61.23877bp}}{\pgfpoint{60.831447bp}{63.481067bp}}{\pgfpoint{56.8647bp}{74.379615bp}}
\pgfclosepath
\color[rgb]{0.0,0.0,0.0}
\pgfusepath{stroke}
\end{pgfscope}
\begin{pgfscope}
\pgfsetlinewidth{1.0bp}
\pgfsetrectcap 
\pgfsetmiterjoin \pgfsetmiterlimit{10.0}
\pgfpathmoveto{\pgfpoint{91.201483bp}{175.781875bp}}
\pgfpathcurveto{\pgfpoint{102.799444bp}{175.781875bp}}{\pgfpoint{113.050011bp}{161.157128bp}}{\pgfpoint{113.050011bp}{120.067162bp}}
\pgfpathcurveto{\pgfpoint{113.050011bp}{78.977166bp}}{\pgfpoint{100.999444bp}{41.243603bp}}{\pgfpoint{89.401483bp}{41.243603bp}}
\pgfpathcurveto{\pgfpoint{77.803491bp}{41.243603bp}}{\pgfpoint{71.050011bp}{78.977166bp}}{\pgfpoint{71.050011bp}{120.067162bp}}
\pgfpathcurveto{\pgfpoint{71.050011bp}{161.157128bp}}{\pgfpoint{79.603491bp}{175.781875bp}}{\pgfpoint{91.201483bp}{175.781875bp}}
\pgfclosepath
\color[rgb]{0.0,0.0,0.0}
\pgfusepath{stroke}
\end{pgfscope}
\begin{pgfscope}
\pgfsetlinewidth{1.0bp}
\pgfsetrectcap 
\pgfsetmiterjoin \pgfsetmiterlimit{10.0}
\pgfpathmoveto{\pgfpoint{104.472669bp}{179.63225bp}}
\pgfpathcurveto{\pgfpoint{112.673666bp}{171.431253bp}}{\pgfpoint{109.580654bp}{153.84175bp}}{\pgfpoint{80.52566bp}{124.786757bp}}
\pgfpathcurveto{\pgfpoint{51.470645bp}{95.731742bp}}{\pgfpoint{35.359833bp}{96.663004bp}}{\pgfpoint{27.158836bp}{104.864001bp}}
\pgfpathcurveto{\pgfpoint{18.957817bp}{113.06502bp}}{\pgfpoint{21.77216bp}{125.430226bp}}{\pgfpoint{50.827175bp}{154.485241bp}}
\pgfpathcurveto{\pgfpoint{79.882169bp}{183.540235bp}}{\pgfpoint{96.27165bp}{187.833269bp}}{\pgfpoint{104.472669bp}{179.63225bp}}
\pgfclosepath
\color[rgb]{0.0,0.0,0.0}
\pgfusepath{stroke}
\end{pgfscope}
\begin{pgfscope}
\pgftransformcm{1.0}{0.0}{0.0}{1.0}{\pgfpoint{48.157336bp}{195.70031bp}}
\pgftext[left,base]{\sffamily\mdseries\upshape\huge
\color[rgb]{0.0,0.0,0.0}$H$}
\end{pgfscope}
\end{pgfpicture}}
  \caption{The hypergraph $H'$ used in the proof of Item~\ref{it:pinningfix} in Lemma~\ref{lem:pineq}. It consists of the hypergraph $H$ and a hypergraph $H_0$ with terminal $v_0$ which $\epsilon'$-realises pinning-to-0. We identify the vertex $u$ of $H$ with the terminal vertex $v_0$ of $H_0$ keeping otherwise the two hypergraphs $H,H_0$ disjoint. Then, for any subset $S$ of vertices in $H$ (which may in general include $u$ as well), the distribution $\mu_{f;H'}(\sigma_S=\cdot)$ is well approximated by the conditional distribution $\mu_{f,H}(\sigma_S=\cdot\mid \sigma_{u}=0)$.}
\label{fig:pin0}
\end{figure}

The proof of Item~\ref{it:equalityfix} is completely analogous, though slightly more technical. Let $r=|R|$. As before, let $\epsilon'>0$ be sufficiently small, to be picked later.  Since $f$ supports equality, there exists a hypergraph $H_1$ with vertex set $V_1$ which $\epsilon'$-realises $r$-equality. Denote by $T=\{v_1,\hdots,v_r\}$ the set of terminals of $H_1$. For $\eta:T\rightarrow\{0,1\}$, let $p_{\eta}:=\mu_{f;H_1}(\sigma_{R}=\eta)$. Since $H_1$ is an  $\epsilon'$-realisation of  $r$-equality, we have that 
\[(1-\epsilon')/2\leq p_\zeros,p_\ones\leq (1+\epsilon')/2, \quad \sum_{\eta\neq \zeros,\ones}p_\eta\leq \epsilon'.\]

To construct the hypergraph $H'$, take a (distinct)  copy of $H_1$ and identify (in an arbitrary way) vertices in  the set $R$ with terminals in the set $T$ (the copy of $H_1$ is otherwise disjoint from the rest of $H$), we will keep the notation $R$ for the merged vertices in $H'$ (see Figure~\ref{fig:pineq}). Let $\mu':=\mu_{f;H'}$.  For $\eta:R\rightarrow\{0,1\}$, we  $\mu'(\sigma_R=\eta)\propto p_\eta\, \mu(\sigma_R=\eta)$, so that 
\[\mu'(\sigma_R=\eta)=\frac{p_\eta\, \mu(\sigma_R=\eta)}{p_\zeros\,\mu(\sigma_R=\zeros)+p_\ones\,\mu(\sigma_R=\ones)+\sum_{\eta'\neq\zeros,\ones}p_{\eta'}\, \mu(\sigma_R=\eta')}.\]
For $\mathbf{s}\in\{\zeros,\ones\}$, 
we use the upper and lower bounds on $p_{\mathbf{s}}$ to obtain
\begin{align} 
\mu'(\sigma_R=\mathbf{s})&\leq \frac{p_{\mathbf{s}}\, \mu(\sigma_R={\mathbf s})}{p_\zeros\, \mu(\sigma_R=\zeros)+p_\ones\,\mu(\sigma_R=\ones)}\leq 
\left(\frac{1+\epsilon'}{1-\epsilon'}\right)
\left(\frac{\mu(\sigma_R={\mathbf{s}})}{\mu(\sigma_R=\zeros)+\mu(\sigma_R=\ones) }\right),
\label{eq:uppermu}\\
\mu'(\sigma_R={\mathbf{s}})&\geq \frac{p_{\mathbf{s}}\, \mu(\sigma_R={\mathbf{s}})}{p_\zeros\,\mu(\sigma_R=\zeros)+p_\ones\,\mu(\sigma_R=\ones)+\epsilon'}\geq 
\left(\frac{1-\epsilon'}{1+\epsilon'}\right)
\left(\frac{\mu(\sigma_R={\mathbf{s}})}{\mu(\sigma_R=\zeros)+\mu(\sigma_R=\ones)+ \frac{2\epsilon'}{1+\epsilon'}}\right).
\label{eq:lowermu}
\end{align}

From \eqref{eq:uppermu} and \eqref{eq:lowermu}, we obtain that for ${\mathbf{s}}\in\{\zeros,\ones\}$, as $\epsilon'\downarrow 0$, it holds that 
\begin{equation}\label{eq:sigmaRs}
\mu'(\sigma_R={\mathbf{s}})\rightarrow \frac{\mu(\sigma_R=\mathbf{s})}{\mu(\sigma_R=\zeros)+\mu(\sigma_R=\ones)}=\mu(\sigma_R={\mathbf{s}}\mid \sigma_R^{\eq}), \quad \mbox{ and thus } \mu'(\neg \sigma_R^{\eq})\rightarrow 0,
\end{equation} 
where the latter limit follows by observing that  $\mu'(\neg \sigma_R^{\eq})=1-\mu'(\sigma_R=\zeros)-\mu'(\sigma_R=\ones)$.

Let $\tau:S\rightarrow \{0,1\}$. Note that
\begin{align}
\mu'(\sigma_S=\tau)&=\mu'(\sigma_R=\zeros)\mu'(\sigma_S=\tau\mid \sigma_R=\zeros)+\notag\\
&\hskip 2cm \mu'(\sigma_R=\ones)\mu'(\sigma_S=\tau\mid \sigma_R=\ones)+\mu'(\sigma_S=\tau, \neg\sigma_R^\eq)\notag\\
&=\mu'(\sigma_R=\zeros)\mu(\sigma_S=\tau\mid \sigma_R=\zeros)+\notag\\
&\hskip 2cm \mu'(\sigma_R=\ones)\mu(\sigma_S=\tau\mid \sigma_R=\ones)+\mu'(\sigma_S=\tau, \neg\sigma_R^\eq),\label{eq:core}
\end{align}
where again in the second equality we used that conditioned on $\sigma_R$, $\sigma_S$ is independent of the configuration $\sigma_{V_1\setminus R}$. Using the limits in \eqref{eq:sigmaRs} and the equality \eqref{eq:core}, it is not hard to see that as $\epsilon'\downarrow 0$, it holds that
\begin{equation}\label{eq:finallimit}
\mu'(\sigma_S=\tau)\rightarrow \mu(\sigma_R=\zeros\mid \sigma_R^{\eq})\,\mu(\sigma_S=\tau\mid \sigma_R=\zeros)+\mu(\sigma_R=\ones\mid \sigma_R^{\eq})\,\mu(\sigma_S=\tau\mid \sigma_R=\ones).
\end{equation}
The right-hand side in \eqref{eq:finallimit} is equal to $\mu(\sigma_S=\tau\mid \sigma_R^{\eq})$, from where it follows that by choosing small $\epsilon'$, the hypergraph $H'$ satisfies $\big|\mu'(\sigma_{S}=\tau)- \mu\big(\sigma_S=\tau\mid \sigma^{\eq}_{R}\big)\big|\leq \epsilon$, as wanted.

\begin{figure}[t]
  \centering
  \scalebox{0.7}[0.7]{\input{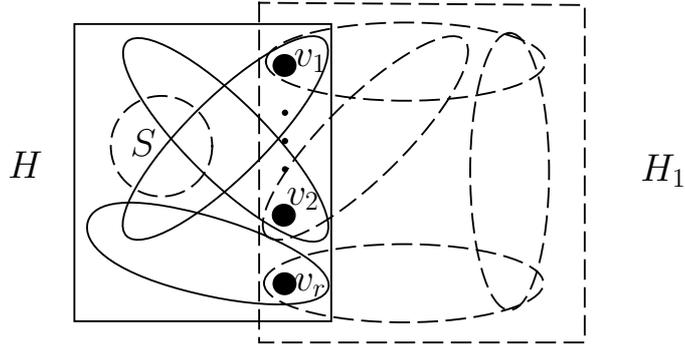}}
  \caption{The hypergraph $H'$ used in the proof of Item~\ref{it:equalityfix}  in Lemma~\ref{lem:pineq}. It consists of the hypergraph $H$ and a hypergraph $H_1$ which realises $r$-equality among its terminals  $v_1,\hdots,v_r$. We identify the vertices in $R$ with the terminal vertices $v_1,\hdots,v_r$ of $H_1$ keeping otherwise the two hypergraphs $H,H_1$ disjoint. Then, for any subset $S$ of vertices in $H$ (which may in general include any of the vertices $v_1,\hdots,v_r$), the distribution $\mu_{f;H'}(\sigma_S=\cdot)$ is well approximated by the conditional distribution $\mu_{f,H}(\sigma_S=\cdot\mid \sigma_R^{\eq})$.}
\label{fig:pineq}
\end{figure}

This concludes the proof of Lemma~\ref{lem:pineq}.
\end{proof}

Typically, when $f$ satisfies, say, pinning-to-0 we will be interested in ``pinning" more than one vertex to zero, while Item~\ref{it:pinningfix} from Lemma~\ref{lem:pineq} accomodates only one vertex. We will also be interested in cases where $f$ satisfies  multiple properties amongst pinning-to-0, pinning-to-1 and equality. We extend Lemma~\ref{lem:pineq} to address this more general framework. Prior to that, it will be useful for our applications to set up a convenient notation.

\begin{definition}\label{conditioned}
Let $f:\{0,1\}^k\rightarrow\Rplus$ be symmetric. Let $H=(V,\mathcal{F})$ be a hypergraph and assume that  $\mathcal{V}:=(V_0,V_1,V_2,\hdots,V_r)$ is a labelled collection of disjoint subsets of $V$ such that: (i) $V_0=\emptyset$ if $f$ does not support pinning-to-0, (ii) $V_1=\emptyset$ if $f$ does not support pinning-to-1, (iii) $V_2=\hdots=V_r=\emptyset$ if $f$ does not support equality, (iv) it holds that $\mu_{f;H}(\sigma_{V_0}=\mathbf{0},\sigma_{V_1}=\mathbf{1}, \sigma_{V_2}^{\eq},\hdots,\sigma_{V_r}^{\eq})>0$. 
We will then say that $\mathcal{V}$ is \emph{admissible}  for the hypergraph $H$ (with respect to $f$) and denote by $\mu_{f;H}^{\condV}$ the probability distribution $\mu_{f;H}(\cdot \mid \sigma_{V_0}=\mathbf{0},\sigma_{V_1}=\mathbf{1}, \sigma_{V_2}^{\eq},\hdots,\sigma_{V_r}^{\eq})$.
\end{definition}

Definition~\ref{conditioned} provides the framework to prove the following generalization of Lemma~\ref{lem:pineq}. In simple words, we will now be able to combine the conditional distributions that we can realize via hypergraphs using $f$.

\begin{lemma}\label{lem:generalpin}
Let $f:\{0,1\}^k\rightarrow\Rplus$ be a symmetric function. Let $H$ be a hypergraph with vertex set $V$ 
and let $S$ be a subset of~$V$.
Let $\mathcal{V}$ an admissible collection of subsets of $V$ with respect to $H$.  
Then, for every $\epsilon>0$, there is a 
hypergraph~$H'=(V',\mathcal{F}')$  with  $V\subseteq V'$ and $\mathcal{F} \subseteq \mathcal{F}'$ such that, for every $\tau:S\rightarrow \{0,1\}$, it holds that 
\[\big|\mu_{f;H'}(\sigma_S=\tau)-\mu_{f;H}^{\condV}(\sigma_S=\tau)\big|\leq \epsilon,\]
where $\mu_{f;H}^{\condV}(\cdot)$ is as in Definition~\ref{conditioned}.
\end{lemma}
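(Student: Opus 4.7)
The plan is to reduce Lemma~\ref{lem:generalpin} to Lemma~\ref{lem:pineq} by iterating the latter one conditioning event at a time. Enumerate the conditions implicit in $\mathcal{V}$ as $E_1,E_2,\ldots,E_N$, where $N=|V_0|+|V_1|+\max(0,r-1)$: each $E_j$ is either ``$\sigma_v=0$'' for some $v\in V_0$, or ``$\sigma_v=1$'' for some $v\in V_1$, or ``$\sigma^{\eq}_{V_i}$'' for some $i\in\{2,\ldots,r\}$. Let $S^{*}:=S\cup V_0\cup V_1\cup V_2\cup\cdots\cup V_r$. I will prove by induction on $j\in\{0,1,\ldots,N\}$ the following statement: for every $\delta>0$ there exists a hypergraph $H_j$ with $V\subseteq V(H_j)$ and $\mathcal{F}\subseteq\mathcal{F}(H_j)$ such that
\[
\bigl|\mu_{f;H_j}(\sigma_{S^{*}}=\tau)-\mu_{f;H}(\sigma_{S^{*}}=\tau\mid E_1,\ldots,E_j)\bigr|\leq \delta \quad\text{for every }\tau:S^{*}\to\{0,1\}.
\]
Once this is established, summing over $\tau':S^{*}\to\{0,1\}$ extending $\tau:S\to\{0,1\}$ and using $|S^{*}|\leq |V|$ to trade a factor of $2^{|S^{*}|}$ for $\delta$ gives the lemma.

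The base case $j=0$ is trivial with $H_0:=H$ and $\delta_0=0$. For the inductive step, suppose $H_j$ has been constructed with error $\delta_j$ on the set $S^{*}$. Let $p^{*}:=\mu_{f;H}(E_{j+1}\mid E_1,\ldots,E_j)$; this is strictly positive by admissibility of $\mathcal{V}$ (the joint event $\sigma_{V_0}=\zeros,\sigma_{V_1}=\ones,\sigma^{\eq}_{V_2},\ldots,\sigma^{\eq}_{V_r}$ has positive probability under $\mu_{f;H}$). The event $E_{j+1}$ depends only on $\sigma_{S^{*}}$, so
\[
\mu_{f;H_j}(E_{j+1})=\sum_{\tau\models E_{j+1}}\mu_{f;H_j}(\sigma_{S^{*}}=\tau)
\]
differs from $p^{*}$ by at most $2^{|S^{*}|}\delta_j$. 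Choosing $\delta_j<p^{*}/(2\cdot 2^{|S^{*}|})$ ensures $\mu_{f;H_j}(E_{j+1})>p^{*}/2>0$, so the hypothesis ``$\mu_{f;H}(\sigma_u=s)>0$'' or ``$\mu_{f;H}(\sigma^{\eq}_R)>0$'' of Lemma~\ref{lem:pineq} is met for $H_j$ in place of $H$.

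Now apply the appropriate case of Lemma~\ref{lem:pineq} to $H_j$, taking the role of ``$S$'' in that lemma to be $S^{*}$ and the pinning/equality event to be $E_{j+1}$ (invoking item~\ref{it:pinningfix} for pinnings and item~\ref{it:equalityfix} for equalities; the relevant supporting property of $f$ is guaranteed by the constraints in Definition~\ref{conditioned}). This yields a hypergraph $H_{j+1}$ extending $H_j$ such that, for a target accuracy $\eta>0$ of our choice,
\[
\bigl|\mu_{f;H_{j+1}}(\sigma_{S^{*}}=\tau)-\mu_{f;H_j}(\sigma_{S^{*}}=\tau\mid E_{j+1})\bigr|\leq \eta\quad\text{for all }\tau.
\]
The remaining task is to compare $\mu_{f;H_j}(\sigma_{S^{*}}=\tau\mid E_{j+1})$ with $\mu_{f;H}(\sigma_{S^{*}}=\tau\mid E_1,\ldots,E_{j+1})$. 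Both are obtained by restricting the respective distributions to configurations satisfying $E_{j+1}$ and renormalizing by $\mu_{f;H_j}(E_{j+1})$ and $p^{*}$, respectively. Since the inductive hypothesis controls the numerator to within $\delta_j$ and $\mu_{f;H_j}(E_{j+1})\geq p^{*}/2$, a standard perturbation calculation gives a bound of the form $C(p^{*},|S^{*}|)\,\delta_j$ on this difference. Adding the two contributions yields $\delta_{j+1}\leq\eta+C(p^{*},|S^{*}|)\,\delta_j$, which can be made arbitrarily small by first choosing $\delta_j$ small enough and then $\eta$ small enough.

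The main technical obstacle is keeping the conditioning events strictly positive along the induction and controlling the error amplification when renormalizing by quantities that shrink in the denominator; both are handled by the quantitative lower bound $\mu_{f;H_j}(E_{j+1})\geq p^{*}/2$ obtained from the positivity granted by admissibility of $\mathcal{V}$. No new gadget construction is required beyond those already produced in the proof of Lemma~\ref{lem:pineq}; the argument is purely compositional.
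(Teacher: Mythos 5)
Your proposal is correct and follows essentially the same route as the paper's proof: an induction on the number of conditioning events, using Lemma~\ref{lem:pineq} as the one-step tool, with admissibility supplying the positive lower bound on the conditioning probability that controls the denominators in the perturbation estimate. The only differences are cosmetic --- you run the induction forward over an enumeration $E_1,\ldots,E_N$ while tracking the fixed set $S^{*}$, whereas the paper peels off one condition at a time and enlarges $S$ to $S\cup X$ at each level; the error-propagation analysis is the same.
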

\begin{proof}
To treat the different cases which arise as uniformly as possible, it will be convenient to use the following notation for the purposes of this proof. Let $\sigma:V\rightarrow \{0,1\}$ and $X\subseteq V$.  For $\mathbf{s}\in \{\zeros,\ones\}$, we will write $\sigma_X^{\mathbf{s}}$ as an alternative notation to $\sigma_X=\mathbf{s}$.

We are now set to prove the claim. Suppose that $V=(V_0,V_1,V_2,\hdots,V_r)$. We proceed by induction on $W:=|V_0|+|V_1|+(r-1)$. The base case $W=0$ is trivial --- we can take $H'=H$. 
To carry out the induction step, assume that the claim holds when $W=t$, we show it when $W=t+1$. We have that at least one of the following holds: (i) $|V_0|\geq 1$, (ii) $|V_1|\geq 1$, (iii) $r\geq 2$. For each of these cases, we will have a collection of sets $\overline{\mathcal{V}}$ for which we will invoke the inductive hypothesis and a set $X$ which we wish to add to $\overline{\mathcal{V}}$ to conclude the claim for $\mathcal{V}$. Also, we want to be able to condition on the event $\sigma_X^{\mathbf{s}}$ for some $\mathbf{s}\in\{\zeros,\ones,\eq\}$ (the value of $\mathbf{s}$ will depend on the case that we consider). More precisely, the set $X$, the value of $\mathbf{s}\in\{\zeros,\ones,\eq\}$ and the collection of sets $\overline{\mathcal{V}}$ are specified as follows for the respective cases: 
\begin{itemize}
\item $X=\{v\}$ for some $v\in V_0$, $\mathbf{s}=\zeros$, and $\overline{\mathcal{V}}=(V_0\setminus X,V_1,V_2,\hdots,V_r)$, \item $X=\{v\}$ for some $v\in V_1$, $\mathbf{s}=\ones$, and $\overline{\mathcal{V}}=(V_0,V_1\setminus X,V_2,\hdots,V_r)$, 
\item $X=V_r$, $\mathbf{s}=\eq$, and $\overline{\mathcal{V}}=(V_0,V_1,V_2,\hdots,V_{r-1})$.
\end{itemize}
Note that since $\mathcal{V}$ is admissible for $H$ under $f$, the same is trivially true for $\overline{\mathcal{V}}$ as well.

Let $S'$ be an arbitrary subset of $V$ (which may be our target set $S$). By induction, for the collection of subsets $\overline{\mathcal{V}}$, we have that for every $\epsilon>0$, there exists a hypergraph $H'$ with 
$V\subseteq V'$ and $\mathcal{F} \subseteq \mathcal{F}'$ such that, for every $\tau':S'\rightarrow \{0,1\}$, it holds that 
\begin{equation}\label{dealwithzero}
\big|\mu_{f;H'}(\sigma_{S'}=\tau')-\mu_{f;H}^{\mathrm{cond}(\overline{\mathcal{V}})}(\sigma_{S'}=\tau')\big|\leq \epsilon,
\end{equation}

Since $\mu(\sigma^{0}_{V_0},\sigma^{1}_{V_1},\sigma^{\eq}_{V_2},\hdots,\sigma^{\eq}_{V_r})>0$ (by the assumption that $\mathcal{V}$ is admissible), we also have that $M:=\mu_{f;H}^{\mathrm{cond}(\overline{\mathcal{V}})}(\sigma_X^{\mathbf{s}})>0$, where $X,\mathbf{s},\overline{\mathcal{V}}$ were defined above. Let 
\begin{equation*}
\epsilon_1:=\epsilon/2,\quad \epsilon_2:=\epsilon_1 M^2/4, \quad \epsilon_3:=\epsilon_2/2^{|S|}.
\end{equation*}

Now consider $\tau: S \to \{0,1\}$.
 if $\mu_{f;H}(\sigma_S=\tau,\sigma_X^{\mathbf{s}})=0$ then     $ \mu_{f;H}^{\mathrm{cond}(\overline{\mathcal{V}})}\big(\sigma_{S}=\tau,\sigma_X^{\mathbf{s}}) =0$.
 Also, since $V \subseteq V'$ and $\mathcal{F} \subseteq \mathcal{F}'$.
 $\mu_{f;H'}\big(\sigma_{S}=\tau,\sigma_{X}^{\mathbf{s}}\big)
= 0$. 

Otherwise, 
$\mu_{f;H}(\sigma_S=\tau,\sigma_X^{\mathbf{s}})>0$. In this case,  applying \eqref{dealwithzero} with $S'=S\cup X$
we find that  there is a hypergraph $H'=(V',\mathcal{F}')$ with vertex set $V \subseteq V'$
and $\mathcal{F} \subseteq \mathcal{F}'$
such that for each assignment $\tau': S' \to \{0,1\}$ that
is consistent with $\tau'_S = \tau$ and ${\tau'}_X^{\mathbf{s}}$,
we have 
$$\big|\mu_{f;H'}(\sigma_{S'}=\tau')-\mu_{f;H}^{\mathrm{cond}(\overline{\mathcal{V}})}(\sigma_{S'}=\tau')\big|\leq \epsilon.$$
Summing over all $\tau':S' \to \{0,1\}$ that satisfy both $\tau'_S=\tau$ and ${\tau'}_X^{\mathbf{s}}$, we conclude that 
 \begin{equation}\label{eq:sigmasUu}
\big|\mu_{f;H'}\big(\sigma_{S}=\tau,\sigma_{X}^{\mathbf{s}}\big)-\mu_{f;H}^{\mathrm{cond}(\overline{\mathcal{V}})}\big(\sigma_{S}=\tau,\sigma_{X}^{\mathbf{s}}\big)\big|\leq \epsilon_3.
\end{equation}

So now we have established~\eqref{eq:sigmasUu} for all $\tau: S \to \{0,1\}$
(whether $\mu_{f;H}(\sigma_S=\tau,\sigma_X^{\mathbf{x}})$ is~$0$ or not).
Summing over  $\tau: S \to \{0,1\}$, we obtain
\begin{equation}\label{eq:sigmau}
\big|\mu_{f;H'}\big(\sigma_{X}^{\mathbf{s}}\big)-\mu_{f;H}^{\mathrm{cond}(\overline{\mathcal{V}})}\big(\sigma_{X}^{\mathbf{s}})\big|\leq 2^{|S|}\epsilon_3=\epsilon_2.
\end{equation} 
This implies $\mu_{f;H'}(\sigma_{X}^{\mathbf{s}})\geq M-\epsilon_2>M/2>0$. 

Now  to simplify the notation 
  define
$\A:=\mu_{f;H}^{\mathrm{cond}(\overline{\mathcal{V}})}\big(\sigma_{S}=\tau,\sigma_{X}^{\mathbf{s}}\big)$.
Now note that
$$\mu^{\mathrm{cond}(\mathcal{V})}_{f;H}\big(\sigma_{S}=\tau\big)=\A/M.$$
Let $A' = \mu_{f;H'}(\sigma_S=\tau,\sigma_X^{\mathbf{s}})$
and let $M' = \mu_{f;H'}(\sigma_X^{\mathbf{s}})$. Equation~$\eqref{eq:sigmasUu}$  
(together with $\epsilon_3 \leq \epsilon_2$) shows
that $A' - \epsilon_2 \leq A \leq  A' +\epsilon_2$. 
Also, Equation~\eqref{eq:sigmau} shows $M' - \epsilon_2 \leq M \leq M' + \epsilon_2$.
Then we will use 
the bound 
$$
\max\left(
\frac{A + \epsilon_2}{M - \epsilon_2} - \frac{A}{M} \>,\>
\frac{A}{M} - \frac{A - \epsilon_2}{M + \epsilon_2}
\right)
 \leq \frac{\epsilon_2(M+\A)}{M(M-\epsilon_2)}\leq \frac{2\epsilon_2}{M(M-\epsilon_2)}\leq 4\epsilon_2/M^2=\epsilon_1$$
to conclude that
\begin{equation}\label{eq:asdf}
\left|
\frac{A'}{M'} - \frac{A}{M}
\right| = 
\left|\mu_{f;H'}\big(\sigma_{S}=\tau\mid \sigma_{X}^{\mathbf{s}}\big)-\mu^{\mathrm{cond}({\mathcal{V}})}_{f;H}\big(\sigma_{S}=\tau\big)\right|\leq \epsilon_1.
\end{equation}

We will now apply Lemma~\ref{lem:pineq}
to~$H'$ and~$S$ with error parameter $\epsilon_1$. 
To apply the lemma, we need the fact that $f$ supports pinning-to-$s$ if $\mathbf{s}$ is $\zeros$ or $\ones$.
If $\mathbf{s} = \eq$ then we need the fact that $f$ supports equality.
Both of these follow from the admissibility of $\mathcal{V}$ and the construction of~$X$ and
$\mathbf{s}$.
We aso need the fact that $\mu_{f;H'}(\sigma_X^{\mathbf{s}}) > 0$, which we have established above.
Then
Lemma~\ref{lem:pineq} shows that  there exists a hypergraph $H''=(V'',\mathcal{F}'')$ 
with $V'\subseteq V''$ and $\mathcal{F}' \subseteq \mathcal{F''}$ such that for every 
$\tau:S\rightarrow \{0,1\}$, it holds that 
\begin{equation}\label{eq:sigmaS2}
\big|\mu_{f;H''}(\sigma_{S}=\tau)-\mu_{f;H'}\big(\sigma_{S}=\tau\mid \sigma_{X}^{\mathbf{s}}\big)\big|\leq \epsilon_1.
\end{equation}
It follows by \eqref{eq:asdf} and \eqref{eq:sigmaS2} that 
\begin{equation*}
\big|\mu_{f;H''}(\sigma_{S}=\tau)-\mu^{\condV}_{f;H}\big(\sigma_{S}=\tau\big)\big|\leq 2\epsilon_1=\epsilon.
\end{equation*}
This completes the induction.
\end{proof}

\section{A general inapproximability lemma}
The purpose of this section is to prove the following lemma, which will allow us to exploit our study of pinning-to-0, pinning-to-1 and equality.

\newcommand{\statelemnonumhardness}{
Let $f:\{0,1\}^k\rightarrow\Rplus$ be symmetric. Let $H$ be a hypergraph, let $\mathcal{V}$ be admissible for $H$ and let~$x$ and~$y$  be vertices of $H$. For $s_1,s_2\in\{0,1\}$, 
define $\Zmu_{s_1s_2}$ by
$$\Zmu_{s_1s_2}:=\Zmu_{f;H}^{\condV}(\sigma(x)=s_1,\sigma(y)=s_2).$$ 
Suppose that all of the following hold:
\begin{equation*} 
\begin{gathered}
\quad \Zmu_{00}+\Zmu_{11}>0,\ \min\{\Zmu_{00},\Zmu_{11}\}<\sqrt{\Zmu_{01}\Zmu_{10}},\ \max\{\Zmu_{00},\Zmu_{11}\}\leq \sqrt{\Zmu_{01}\Zmu_{10}}.
\end{gathered}
\end{equation*}
Then, for all sufficiently large $\Delta$, there exists $c>1$ such that $\Hyper2Spinf$ is $\NP$-hard. 
}

\newcommand{\statelemnumhardness}{
Let $f:\{0,1\}^k\rightarrow\Rplus$ be symmetric. Let $H$ be a hypergraph, let $\mathcal{V}$ be admissible for $H$ and let~$x$ and~$y$  be vertices of $H$. For $s_1,s_2\in\{0,1\}$, 
define $\Zmu_{s_1s_2}$ by
$$\Zmu_{s_1s_2}:=\Zmu_{f;H}^{\condV}(\sigma(x)=s_1,\sigma(y)=s_2).$$ 
Suppose that all of the following hold:
\begin{equation}\label{eq:antiferro}
\begin{gathered}
\quad \Zmu_{00}+\Zmu_{11}>0,\ \min\{\Zmu_{00},\Zmu_{11}\}<\sqrt{\Zmu_{01}\Zmu_{10}},\ \max\{\Zmu_{00},\Zmu_{11}\}\leq \sqrt{\Zmu_{01}\Zmu_{10}}.
\end{gathered}
\end{equation}
Then, for all sufficiently large $\Delta$, there exists $c>1$ such that $\Hyper2Spinf$ is $\NP$-hard. 
}

\begin{lemma}\label{hardness}
\statelemnonumhardness
\end{lemma}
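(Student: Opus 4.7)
The plan is to use $H$ as an edge-gadget in a reduction from the binary antiferromagnetic 2-spin inapproximability theorem of Sly and Sun (Theorem~\ref{thm:slysun}). First I would \emph{symmetrise} $H$: take two disjoint copies, each equipped with its own copy of the admissible family $\mathcal{V}$, and identify $x$ in the first copy with $y$ in the second and $y$ in the first with $x$ in the second. Letting $\widehat H$ denote the resulting gadget (with terminals $\hat x,\hat y$), the conditioned joint marginal at $(\hat x,\hat y)$ becomes proportional to $\widehat Z_{s_1s_2}:=Z_{s_1s_2}\,Z_{s_2s_1}$, which is symmetric in $(s_1,s_2)$. Setting $\beta:=\widehat Z_{00}/\widehat Z_{01}$ and $\gamma:=\widehat Z_{11}/\widehat Z_{01}$ (the hypotheses force $\widehat Z_{01}=Z_{01}Z_{10}>0$), the assumptions of the lemma translate into
\[\beta,\gamma\in[0,1],\quad \beta+\gamma>0,\quad \beta\gamma<1,\]
so $(\beta,\gamma)$ is a non-trivial antiferromagnetic binary 2-spin pair.

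Next, given a $\Delta$-regular graph $G=(V_G,E_G)$, I would form a hypergraph $H_G$ by taking, for each edge $\{u,v\}\in E_G$, a fresh copy of $\widehat H$ (with disjoint inner vertices) and identifying its terminals $\hat x,\hat y$ with $u$ and $v$. If $\mathcal{V}$ were realised exactly on every copy, summing out the inner variables would give $Z_{f;H_G}=K\cdot Z_G(\beta,\gamma)$, where $Z_G(\beta,\gamma)$ is the partition function of the binary 2-spin model on $G$ with edge interaction matrix $\bigl(\begin{smallmatrix}\beta&1\\ 1&\gamma\end{smallmatrix}\bigr)$ and $K>0$ is a constant that is easily computable from $G$. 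Since Lemma~\ref{lem:generalpin} only realises $\mathcal{V}$ approximately, I would instead apply it to each copy with a small \emph{constant} error parameter $\epsilon>0$; the multiplicative error in $Z_{f;H_G}$ then accumulates to at most ${(1+O(\epsilon))}^{|E_G|}$, which can be driven below any prescribed exponential $(c')^{|V_G|}$ by fixing $\epsilon$ sufficiently small. Crucially, because $\epsilon$ is a constant, $\widehat H$ together with its conditioning auxiliaries has constant maximum degree $D$, so $H_G$ has maximum degree at most $\Delta\cdot D$; a $c^n$-approximation of $Z_{f;H_G}$ thus converts into an exponential-factor approximation of $Z_G(\beta,\gamma)$ on $G$.

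The final step is to choose $\Delta$ large enough that $(\beta,\gamma)$ lies strictly in the non-uniqueness region of the infinite $\Delta$-regular tree. For this I would appeal to the strip result of Section~\ref{sec:binary} (Lemma~\ref{lem:strip}), which, for any $(\beta,\gamma)$ with $\beta,\gamma\leq 1$, $\beta\gamma<1$ and $\beta+\gamma>0$, supplies an open neighbourhood of $(\beta,\gamma)$ of constant width lying in non-uniqueness for all sufficiently large $\Delta$. Sly--Sun (Theorem~\ref{thm:slysun}) then asserts that approximating $Z_G(\beta,\gamma)$ within an exponential factor on $\Delta$-regular $G$ is $\NP$-hard, and the reduction above transports this hardness to $\Hyper2Spinf$ at any degree bound $\geq \Delta\cdot D$ for a suitable $c>1$.

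The main obstacle is precisely this robustness issue: because $\mathcal{V}$ is only realised to within $\epsilon$, the effective edge parameters differ from $(\beta,\gamma)$ by $O(\epsilon)$, and the argument would collapse if the idealised antiferromagnetic pair were to sit on the uniqueness/non-uniqueness boundary. Lemma~\ref{lem:strip} is designed exactly to rule this out, by exhibiting a full open non-uniqueness neighbourhood rather than only non-uniqueness at the point $(\beta,\gamma)$ itself. The remaining work is bookkeeping: keeping $\epsilon$ a fixed constant (so that the gadget, and hence the maximum degree of $H_G$, remain bounded independently of $G$) while ensuring that the accumulated $(1+O(\epsilon))^{|E_G|}$ factor is absorbed into the $c^n$ slack allowed on the hypergraph side.
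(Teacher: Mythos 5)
Your overall architecture is the right one and matches the paper's: symmetrise $H$ so that the off-diagonal terminal marginals coincide, use the resulting gadget to replace the edges of a $\Delta'$-regular graph, and derive hardness from Theorem~\ref{thm:slysun} (via Corollary~\ref{thm:isinghardness}) together with the robustness supplied by Lemma~\ref{lem:strip}. The translation of the hypotheses into $0\le\beta<1$, $0<\gamma\le 1$, $\beta\gamma<1$ (after ordering so that $\mu_{00}\le\mu_{11}$) is also correct. Where you diverge is in how the imperfection of Lemma~\ref{lem:generalpin} is handled, and this is where there is a genuine gap.

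You keep the \emph{idealised} parameters $(\beta,\gamma)$ as the target of the reduction and claim that using the $\epsilon$-approximate gadget on every edge perturbs $Z_{f;H_G}$ by a multiplicative factor ${(1+O(\epsilon))}^{|E_G|}$ relative to $K\cdot Z_G(\beta,\gamma)$. That bound fails whenever one of the idealised edge weights is zero, which is the typical situation in the applications of this lemma (e.g.\ $\mu_{00}=0$, giving $\beta=0$). In that case the perturbed gadget assigns weight roughly $\epsilon$ to terminal pattern $00$, so configurations of $G$ containing a $00$-edge are forbidden in the ideal model but have positive weight in the actual one; there is no per-configuration multiplicative comparison, and the ratio of the two partition functions can be as large as $1+\epsilon\cdot C^{|V_G|}$ rather than ${(1+O(\epsilon))}^{|E_G|}$. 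Forcing this to be small would require $\epsilon$ to shrink with the instance, which destroys the constant degree bound. The paper sidesteps this entirely: it never compares against the idealised model. Summing out the internal vertices of the (imperfect, symmetrised) gadget gives the \emph{exact} identity $Z_{f;H'''}={(\mu''_{01}Z_{f;H''})}^{|E|}\,Z_{\beta,\gamma,1;G}$ where $\beta,\gamma$ are defined from the actual gadget marginals $\mu''$; all of the $\epsilon$-error is thereby pushed into the location of $(\beta,\gamma)$, which lies within $1/\Delta'$ of $(\beta_0,\gamma_0)$, and Lemma~\ref{lem:strip} is then invoked at the perturbed point to conclude that the \emph{actual} system is in non-uniqueness. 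Your closing paragraph gestures at exactly this use of Lemma~\ref{lem:strip}, but the reduction as you set it up applies Sly--Sun at the ideal point and relies on the flawed accumulation bound; to repair it you should define the binary model from the realised gadget and apply the hardness result there.
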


 The proof of Lemma~\ref{hardness} uses inapproximability results for antiferromagnetic 2-spin systems on bounded-degree graphs. Thus, before giving its proof, it will be helpful to make a detour to extract the results that will be useful in the proof of Lemma~\ref{hardness}.
\begin{remark}
Note that the inequalities for the $\mu_{ij}$'s are stronger than the standard antiferromagnetic condition $\mu_{00}\mu_{11}< \mu_{01}\mu_{10}$ for 2-spin models on graphs. This is to ensure that the corresponding 2-spin system lies in the non-uniqueness region for all sufficiently large $\Delta$ (and hence is intractable). In fact, if $\max\{\mu_{00},\mu_{11}\}>\sqrt{\mu_{01}\mu_{10}}$, for the corresponding binary 2-spin system (even if it is antiferromagnetic), approximating its partition function may be tractable for all graphs (when the external field is fixed).
\end{remark}

\subsection{Inapproximability for antiferromagnetic 2-spin systems}
\label{sec:binary}

We review inapproximability results for the partition function of   antiferromagnetic 2-spin models on graphs. We start with  a few relevant definitions following \cite{LLY}. A 2-spin model on a graph is specified by three parameters $\beta,\gamma\geq 0$ and $\lambda>0$. For a graph $G=(V,E)$, configurations of the model are all possible assignments $\sigma:V\rightarrow \{0,1\}$ and the partition function is given by
\[Z_{\beta,\gamma,\lambda;G}=\sum_{\sigma:V\rightarrow\{0,1\}}\lambda^{|\sigma^{-1}(0)|}\prod_{(u,v)\in E}\beta^{\mathbf{1}\{\sigma(u)=\sigma(v)=0\}}\gamma^{\mathbf{1}\{\sigma(u)=\sigma(v)=1\}},\]
where we adopt the convention that $0^0\equiv 1$ when one of the parameters $\beta,\gamma$ is equal to zero. The case $\beta=\gamma$ corresponds to the Ising model, while the case $\beta=0$ and $\gamma=1$ corresponds to the hard-core model.

The 2-spin system with parameters $\beta,\gamma, \lambda$ is called \textit{antiferromagnetic} if $\beta \gamma<1$.  In \cite{SlySun}, it was shown that the computational hardness of approximating the partition function in antiferromagnetic 2-spin systems on $\Delta$-regular graphs is captured by the so-called uniqueness threshold on the infinite $\Delta$-regular tree.  More precisely, we have the following. 

\begin{theorem}[{\cite[Theorems 2 \& 3]{SlySun}}]\label{thm:slysun}
Consider the 2-spin systems specified by the following parameters $\beta,\gamma,\lambda$: (i) $0<\beta=\gamma<1$, $\lambda>0$ (antiferromagnetic Ising model), (ii) $\beta=0,\gamma=1$, $\lambda>0$ (hard-core model). If the 2-spin system specified by the parameters $\beta,\gamma,\lambda$ is in the non-uniqueness regime of the infinite $\Delta$-regular tree for $\Delta \geq 3$, then
there is a $c>1$ such that  it is $\NP$-$\hard$ to approximate $Z_{\beta,\gamma,\lambda;G}$ within 
a factor of $c^{n}$ on the class of $\Delta$-regular graphs $G$.
\end{theorem}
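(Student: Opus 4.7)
The plan is to follow the by-now-standard Sly / Sly--Sun strategy \cite{Sly,SlySun}: use a uniformly random $\Delta$-regular bipartite graph $G(n,\Delta)$ as a ``phase gadget'', exploit the two-fixed-point structure of the tree recursion on $\TD$ in non-uniqueness, and reduce $\mathsf{MAXCUT}$ on an appropriate regular graph class to the approximation of $Z_{\beta,\gamma,\lambda;G}$. For the specified parameters $(\beta,\gamma,\lambda)$, I would first solve the standard tree recursion on $\TD$ and identify the two locally stable root marginals $q^+$ and $q^-$ guaranteed by the non-uniqueness hypothesis. On a large bipartite $\Delta$-regular graph, antiferromagnetism pairs these fixed points \emph{across} the two sides, producing two macroscopic phases $\mathcal{P}^{LR}$ (with spin-$0$ densities close to $q^+$ on the left and $q^-$ on the right) and $\mathcal{P}^{RL}$ (the symmetric alternative). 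A Bethe--type calculation gives that the free-energy density of the unbalanced phases, $\Phi_{\mathrm{un}}$, strictly exceeds that of the symmetric profile $(q^*,q^*)$, which is the analytic content of non-uniqueness.

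The technical core is a second-moment argument. Let $Z^{\pm}$ denote the restriction of $Z_{\beta,\gamma,\lambda;G(n,\Delta)}$ to configurations whose side-magnetisations lie in small neighbourhoods of $\mathcal{P}^{LR}$ or $\mathcal{P}^{RL}$. I would compute $\mathbb{E}[Z^+ + Z^-]$ and, more delicately, $\mathbb{E}[(Z^+ + Z^-)^2]$, reducing the latter to a variational problem over the joint overlap profile of two independent configurations each drawn from a phase neighbourhood. The goal is to certify that the global maximiser of the resulting two-dimensional auxiliary free energy over the constrained overlap domain is attained at the aligned overlap (the product of two marginals of the same phase). A Laplace expansion then yields $\mathbb{E}[(Z^+ + Z^-)^2] \le e^{o(n)}\,(\mathbb{E}[Z^+ + Z^-])^2$, a Paley--Zygmund bound gives $Z^+ + Z^- = \Omega(\mathbb{E}[Z^+ + Z^-])$ with probability $\Omega(1)$, and small-subgraph conditioning plus contiguity between the uniform random regular model and the configuration model promote this to $\log(Z^+ + Z^-) = n\Phi_{\mathrm{un}} + o(n)$ with high probability over $G(n,\Delta)$. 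Moreover, by planting a small auxiliary spin on one side, one arranges that a $c^n$-approximation of $Z$ with $c$ sufficiently close to $1$ suffices to identify, with inverse-polynomial advantage, which of the two phases $\mathcal{P}^{LR}$ or $\mathcal{P}^{RL}$ the random graph falls into.

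Given such a phase gadget, the reduction is combinatorial. For an instance $H$ of $\mathsf{MAXCUT}$ on a suitably regular graph, I would replace each vertex $v$ of $H$ by a fresh copy $G_v$ of the phase gadget and each edge $(u,v)$ of $H$ by a constant-size ``interaction gadget'' linking $O(1)$ vertices of $G_u$ to $O(1)$ vertices of $G_v$, arranged so that the resulting graph is globally $\Delta$-regular. Each $G_v$ independently contributes one of the two macroscopic phases; the full phase assignment encodes a cut $S$ of $H$, and the interaction gadgets apply a multiplicative weight of the form $w^{|\partial S|}$ for a constant $w \neq 1$ depending on $(\beta,\gamma,\lambda)$ and the gadget design. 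To exponential order, $Z$ on the full graph therefore computes $\sum_{S} w^{|\partial S|}$, so a $c^n$-approximation with $c$ close enough to $1$ extracts $\mathsf{MAXCUT}(H)$, yielding the claimed $\NP$-hardness.

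The main obstacle is the second-moment estimate. The relevant variational problem is genuinely two-dimensional (one dimension per side of the bipartition), and certifying that the aligned overlap is the global maximiser requires both ruling out competing interior critical points and verifying negative-definiteness of the Hessian at the aligned optimum. The non-uniqueness hypothesis is consumed precisely here: it is equivalent to the instability (on $\TD$) of the symmetric fixed point, which is what drives the Hessian to have the correct sign at the aligned overlap. Carrying this analysis through uniformly over the full antiferromagnetic Ising regime $0 < \beta = \gamma < 1$ (rather than only the hard-core case $\beta = 0,\gamma = 1$ treated in \cite{Sly}) is what necessitated the extension in \cite{SlySun} and is where essentially all of the analytic work resides.
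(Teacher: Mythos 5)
This theorem is not proved in the paper at all---it is imported verbatim from \cite{SlySun} (Theorems 2 and 3), so there is no in-paper argument to compare your proposal against; the paper simply invokes the result as a black box. Your outline faithfully reproduces the strategy of the cited proof (random bipartite $\Delta$-regular phase gadgets, first- and second-moment analysis of the phase-restricted partition function with small-subgraph conditioning, and a reduction from $\mathsf{MAXCUT}$ that encodes cuts by gadget phases), but it remains a plan rather than a proof: the decisive content of \cite{SlySun}---certifying that the aligned overlap is the global maximiser of the second-moment functional uniformly over the whole non-uniqueness regime---is named but not carried out, and that is precisely the part that cannot be taken for granted.
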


Note that it is not important for us that Theorem~\ref{thm:slysun} shows hardness all the
way to the uniqueness threshold. It would suffice to have a weaker bound for the 
antiferrogmanetic Ising model and the hard-core model. Luby and Vigoda~\cite{LV} provide such a result for the hard-core model. Rather than explicitly deriving such a bound for the antiferromagnetic Ising model, we use Theorem~\ref{thm:slysun}. Cai et al \cite[Theorems 1 \& 2]{Cai} give similar results that apply in some of the relevant parameter space, however it is simpler to work with the later paper \cite{SlySun}, especially since the latter shows that the partition function is hard to approximate even within an exponential factor.

For future use, we point out the following characterisation of the uniqueness regime on the infinite  $\Delta$-regular tree. For a 2-spin system with parameters $\beta,\gamma,\lambda$, let $h(x):=\lambda\big(\frac{\beta x+1}{x+\gamma}\big)^{\Delta-1}$ and let $x^*$ be the (unique) positive solution of $x^*=h(x^*)$. Then, uniqueness holds on the infinite  $\Delta$-regular tree iff $|h'(x^*)|\leq 1$, i.e., the absolute value of the derivative of $h$ evaluated at $x^*$ is less than or equal than 1. When, instead, $|h'(x^*)|> 1$,  non-uniqueness holds on the infinite  $\Delta$-regular tree. Equivalently, one can derive the following equivalent criterion: non-uniqueness on the infinite $\Delta$-regular tree holds iff
 the system of equations 
\begin{equation}\label{eq:wed}
x=\lambda\Big(\frac{\beta y+1}{y+\gamma}\Big)^{\Delta-1},\quad y=\lambda\Big(\frac{\beta x+1}{x+\gamma}\Big)^{\Delta-1}
\end{equation}
has multiple (i.e., more than one) positive solutions $(x,y)$.

It is well-known (see, e.g., \cite{SlySun}) that antiferromagnetic 2-spin systems on $\Delta$-regular graphs  
can be expressed in terms of either the Ising model  or the hard-core model. Theorem~\ref{thm:slysun} thus also gives the regime where general antiferromagnetic 2-spin systems are hard, albeit somewhat implicitly. For the sake of completeness we do this explicitly in the following simple corollary of Theorem~\ref{thm:slysun} (which is nevertheless lengthy to prove).

\begin{corollary}\label{thm:isinghardness}
Let $\beta,\gamma\geq 0$ with $\beta \gamma<1$, $\gamma>0$, $\lambda>0$ and $\Delta\geq 3$. If the 2-spin system specified by the parameters $\beta,\gamma,\lambda$ is in the non-uniqueness regime of the infinite $\Delta$-regular tree, then
there is a $c>1$ such that  it is $\NP$-$\hard$ to approximate $Z_{\beta,\gamma,\lambda;G}$ within  
a factor of $c^n$ on the class of
 $\Delta$-regular graphs $G$.
\end{corollary}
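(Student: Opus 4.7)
The plan is to reduce the general antiferromagnetic case to either the antiferromagnetic Ising model ($0<\beta=\gamma<1$) or the hard-core model ($\beta=0,\gamma=1$), both of which are already covered by Theorem~\ref{thm:slysun}. The reduction is the standard gauge (``holographic'') transformation that rewrites the edge matrix $M=\bigl(\begin{smallmatrix}\beta & 1\\ 1 & \gamma\end{smallmatrix}\bigr)$ as $DMD$ for a diagonal matrix $D=\mathrm{diag}(d_0,d_1)$ with $d_0,d_1>0$. Since the input graph $G$ is $\Delta$-regular, the per-vertex scaling factor $d_{\sigma(v)}^{\Delta}$ can be absorbed by modifying the external field, and the edge-scaling contributes only an overall factor of $(d_0 d_1)^{|E|}$. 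This yields an identity of the form
$$Z_{\beta,\gamma,\lambda;G}\;=\;C\cdot Z_{\beta',\gamma',\lambda';G},$$
where $(\beta',\gamma')$ and $\lambda'$ are determined by $D$ and $C$ is a constant that depends only on $\beta,\gamma,\lambda,n,\Delta$ and is computable in polynomial time.

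I would apply this transformation in two cases. If $\beta>0$, I would take $d_0=\sqrt{\gamma}$ and $d_1=\sqrt{\beta}$, so that the diagonal entries of $DMD$ coincide; after normalising the off-diagonal to~$1$, the new parameters are $\beta'=\gamma'=\sqrt{\beta\gamma}\in(0,1)$ (the condition $\beta\gamma<1$ is precisely what keeps us antiferromagnetic), together with some explicitly computable $\lambda'>0$, giving an instance of the antiferromagnetic Ising model. If $\beta=0$, I would instead take $d_0=\sqrt{\gamma}$ and $d_1=1/\sqrt{\gamma}$, so that $DMD=\bigl(\begin{smallmatrix}0&1\\1&1\end{smallmatrix}\bigr)$; here $(\beta',\gamma')=(0,1)$ and $\lambda'=\lambda\gamma^{\Delta}$, i.e., the hard-core model with positive fugacity.

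The remaining step, which I expect to be the main obstacle, is to verify that the non-uniqueness regime on the infinite $\Delta$-regular tree is preserved by the transformation. I would do this by substituting $y=(d_1/d_0)x$ in the tree recursion $h(x)=\lambda\bigl(\frac{\beta x+1}{x+\gamma}\bigr)^{\Delta-1}$: this carries the fixed-point equation for $(\beta,\gamma,\lambda)$ bijectively, and with preserved derivatives, onto the corresponding recursion for $(\beta',\gamma',\lambda')$. Equivalently, the system~\eqref{eq:wed} has the same number of positive solutions for the two parameter sets, so non-uniqueness is preserved. Theorem~\ref{thm:slysun} then gives a $c'>1$ such that approximating $Z_{\beta',\gamma',\lambda';G}$ within a factor of $(c')^n$ on $\Delta$-regular $G$ is $\NP$-hard, and the polynomial-time computable constant~$C$ immediately transfers this inapproximability to $Z_{\beta,\gamma,\lambda;G}$ with some (possibly smaller, but still strictly greater than~$1$) constant~$c>1$.
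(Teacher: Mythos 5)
Your proposal follows essentially the same route as the paper's proof: the same diagonal gauge transformation mapping $(\beta,\gamma,\lambda)$ to $(\sqrt{\beta\gamma},\sqrt{\beta\gamma},\lambda(\beta/\gamma)^{\Delta/2})$ when $\beta>0$ and to the hard-core model when $\beta=0$, the same absorption of the per-vertex factor into the field using $\Delta$-regularity, and the same substitution $y=(d_1/d_0)x$ in the tree recursion to show non-uniqueness is preserved, before invoking Theorem~\ref{thm:slysun}. The only discrepancy is a slip in the hard-core case: the transformed fugacity should be $\lambda'=\lambda/\gamma^{\Delta}$ (the vertex factors are $d_{\sigma(v)}^{-\Delta}$, not $d_{\sigma(v)}^{\Delta}$), not $\lambda\gamma^{\Delta}$, but this does not affect the validity of the approach.
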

\begin{proof}
For $\beta,\gamma,\lambda,\Delta$ as in the statement of the lemma, consider the following map:
\[R(\beta,\gamma,\lambda)=\begin{cases} \big(\sqrt{\beta \gamma},\sqrt{\beta \gamma},\lambda (\beta/\gamma)^{\Delta/2}\big) , & \mbox{ if } \beta>0,\\ \big(0, 1, \lambda/\gamma^{\Delta}\big), & \mbox{ if } \beta=0.\end{cases}\]
To prove the claim, it suffices to show the following two  facts and then to use Theorem~\ref{thm:slysun}.
\begin{enumerate}[{Fact}~1.]
\item \label{equiv:approx} For a $\Delta$-regular graph $G$, a multiplicative approximation of $Z_{\beta,\gamma,\lambda;G}$ within a factor $C$ yields a multiplicative approximation of $Z_{R(\beta,\gamma,\lambda);G}$ within a factor $C$.
\item \label{equiv:uniq}  The 2-spin system with parameters $\beta,\gamma,\lambda$ is in the non-uniqueness regime of the infinite $\Delta$-regular tree iff the 2-spin system with parameters $R(\beta,\gamma,\lambda)$ is in the non-uniqueness regime of the infinite $\Delta$-regular tree.
\end{enumerate}
We consider first the case $\beta>0$. Let $\lambda'$ be defined from $(\lambda')^{1/\Delta}=\lambda^{1/\Delta}\sqrt{\beta}/\sqrt{\gamma}$ and let $\beta'=\sqrt{\beta \gamma}$. Note that $R(\beta,\gamma,\lambda)=(\beta',\beta',\lambda')$. 

We first show  Fact~\ref{equiv:approx}. Let $G=(V,E)$ be a $\Delta$-regular graph.  Observe that
\begin{align*}
Z_{\beta,\gamma,\lambda;G}&=\sum_{\sigma:V\rightarrow\{0,1\}}\lambda^{|\sigma^{-1}(0)|}\prod_{(u,v)\in E}\beta^{\mathbf{1}\{\sigma(u)=\sigma(v)=0\}}\gamma^{\mathbf{1}\{\sigma(u)=\sigma(v)=1\}}\\
&=\sum_{\sigma:V\rightarrow\{0,1\}}\prod_{(u,v)\in E}(\beta\lambda^{2/\Delta})^{\mathbf{1}\{\sigma(u)=\sigma(v)=0\}}(\lambda^{1/\Delta})^{\mathbf{1}\{\sigma(u)\neq \sigma(v)\}}\gamma^{\mathbf{1}\{\sigma(u)=\sigma(v)=1\}}\\
&=\Big(\frac{\sqrt{\gamma}}{\sqrt{\beta}}\Big)^{|E|}\sum_{\sigma:V\rightarrow\{0,1\}}\prod_{(u,v)\in E}\big(\beta'(\lambda')^{2/\Delta}\big)^{\mathbf{1}\{\sigma(u)=\sigma(v)=0\}}\big((\lambda')^{1/\Delta}\big)^{\mathbf{1}\{\sigma(u)\neq \sigma(v)\}}\big(\beta'\big)^{\mathbf{1}\{\sigma(u)=\sigma(v)=1\}}\\
&=\Big(\frac{\sqrt{\gamma}}{\sqrt{\beta}}\Big)^{|E|}\sum_{\sigma:V\rightarrow\{0,1\}}(\lambda')^{|\sigma^{-1}(0)|}\prod_{(u,v)\in E}\big(\beta'\big)^{\mathbf{1}\{\sigma(u)=\sigma(v)\}}\\
&=\Big(\frac{\sqrt{\gamma}}{\sqrt{\beta}}\Big)^{|E|}Z_{\beta',\beta',\lambda';G},
\end{align*}
which clearly yields the desired fact.

We next show Fact~\ref{equiv:uniq}.  Note that for positive $x,y$, \eqref{eq:wed} is equivalent to 
\begin{equation}\label{eq:wedwed}
x\Big(\frac{\beta y+1}{y+\gamma}\Big)=\lambda\Big(\frac{\beta y+1}{y+\gamma}\Big)^{\Delta},\quad y\Big(\frac{\beta x+1}{x+\gamma}\Big)=\lambda\Big(\frac{\beta x+1}{x+\gamma}\Big)^{\Delta}
\end{equation}
Thus, it suffices to show that  the positive solutions $(x,y)$ of \eqref{eq:wedwed} are in one-to-one correspondence with positive solutions $(x',y')$ to 
\begin{equation}\label{eq:wedwedtwo}
x'\Big(\frac{\beta' y'+1}{y'+\beta'}\Big)=\lambda'\Big(\frac{\beta' y'+1}{y'+\beta'}\Big)^{\Delta},\quad y'\Big(\frac{\beta' x'+1}{x'+\beta'}\Big)=\lambda'\Big(\frac{\beta' x'+1}{x'+\beta'}\Big)^{\Delta},
\end{equation}
where $\beta'=\sqrt{\beta \gamma}$ and $\lambda'=\lambda(\sqrt{\beta}/\sqrt{\gamma})^{\Delta}$ are as before. 

Let $x'=(\sqrt{\beta}/\sqrt{\gamma})x$ and $y'=(\sqrt{\beta}/\sqrt{\gamma})y$. Note that for $\beta>0$ we have that $x,y$ are positive iff $x',y'$ are positive and that $x,y$ are in one-to-one correspondence with $x',y'$.  It is also simple to verify that $x,y,x',y'$ satisfy the equations
\begin{equation}\label{eq:qazwsx}
\frac{\beta x+1}{x+\gamma}=\frac{\sqrt{\beta}}{\sqrt{\gamma}}\cdot \frac{\sqrt{\beta \gamma}x'+1}{x'+\sqrt{\beta \gamma}},\quad \frac{\beta y+1}{y+\gamma}=\frac{\sqrt{\beta}}{\sqrt{\gamma}}\cdot \frac{\sqrt{\beta \gamma}y'+1}{y'+\sqrt{\beta \gamma}},
\end{equation}
and 
\begin{equation}\label{eq:qazwsx2}
x\Big(\frac{\beta y+1}{y+\gamma}\Big)=x'\Big( \frac{\sqrt{\beta \gamma}y'+1}{y'+\sqrt{\beta \gamma}}\Big),\quad y\Big(\frac{\beta x+1}{x+\gamma}\Big)=y'\Big( \frac{\sqrt{\beta \gamma}x'+1}{x'+\sqrt{\beta \gamma}}\Big).
\end{equation}
Using \eqref{eq:qazwsx} and \eqref{eq:qazwsx2}, one can easily check that $x,y$ satisfy \eqref{eq:wedwed} iff $x',y'$ satisfy \eqref{eq:wedwedtwo}.

Next consider the case $\beta=0$. The arguments are completely analogous to the case $\beta>0$, up to minor technical details. Let $\lambda'$ be defined from $(\lambda')^{1/\Delta}=\lambda^{1/\Delta}/\gamma$ and note that $R(\beta,\gamma,\lambda)=(0,1,\lambda')$. 

For  Fact~\ref{equiv:approx}, we have that 
\begin{align*}
Z_{\beta,\gamma,\lambda;G}&=\sum_{\sigma:V\rightarrow\{0,1\}}\lambda^{|\sigma^{-1}(0)|}\prod_{(u,v)\in E}\beta^{\mathbf{1}\{\sigma(u)=\sigma(v)=0\}}\gamma^{\mathbf{1}\{\sigma(u)=\sigma(v)=1\}}\\
&=\sum_{\sigma:V\rightarrow\{0,1\}}\prod_{(u,v)\in E}(\beta\lambda^{2/\Delta})^{\mathbf{1}\{\sigma(u)=\sigma(v)=0\}}(\lambda^{1/\Delta})^{\mathbf{1}\{\sigma(u)\neq \sigma(v)\}}\gamma^{\mathbf{1}\{\sigma(u)=\sigma(v)=1\}}\\
&=\gamma^{|E|}\sum_{\sigma:V\rightarrow\{0,1\}}\prod_{(u,v)\in E}\big(\beta\gamma(\lambda')^{2/\Delta}\big)^{\mathbf{1}\{\sigma(u)=\sigma(v)=0\}}\big((\lambda')^{1/\Delta}\big)^{\mathbf{1}\{\sigma(u)\neq \sigma(v)\}}\\
&=\gamma^{|E|}\sum_{\sigma:V\rightarrow\{0,1\}}(\lambda')^{|\sigma^{-1}(0)|}\prod_{(u,v)\in E}\big(\beta \gamma\big)^{\mathbf{1}\{\sigma(u)=\sigma(v)=0\}}\\
&=\gamma^{|E|}Z_{0,1,\lambda';G},
\end{align*}
which again yields the desired  fact.

For  Fact~\ref{equiv:uniq},  set $x=\gamma x'$ and $y=\gamma y'$. The following equivalence is easy to see by inspection: $x,y$ satisfy 
\begin{equation}\label{eq:hardw}
\frac{x}{y+\gamma}=\lambda\Big(\frac{1}{y+\gamma}\Big)^{\Delta},\quad \frac{y}{x+\gamma}=\lambda\Big(\frac{1}{x+\gamma}\Big)^{\Delta}
\end{equation}
iff $x',y'$ satisfy
\begin{equation}\label{eq:hardw2}
\frac{x'}{y'+1}=\lambda'\Big(\frac{1}{y'+1}\Big)^{\Delta},\quad \frac{y'}{x'+1}=\lambda'\Big(\frac{1}{x'+1}\Big)^{\Delta}.
\end{equation}
Note that \eqref{eq:hardw} and \eqref{eq:hardw2} correspond to  equation \eqref{eq:wedwed} for the 2-spin systems with parameters $\beta=0,\gamma,\lambda$ and $\beta=0,\gamma=1,\lambda'$, respectively. 

This concludes the proof of the corollary.
\end{proof}

For us, the case $\lambda=1$ (which is usually referred to as the  case without an external field) 
will be especially important. To motivate what follows, the reader should first bear in mind the following two facts~\cite[Lemma 21]{LLY}
about the uniqueness regime for antiferromagnetic 2-spin systems. 
The two cases correspond to whether or not one of the parameters $\beta,\gamma$ is larger than 1. These parameters cannot both be larger than~1  because of the antiferromagnetic condition $\beta \gamma < 1$. 
\begin{enumerate}
\item \label{it:rew} when $\beta$ and $\gamma$  satisfy $0\leq \beta<1$ and $0<\gamma\leq 1$, non-uniqueness holds on the infinite $\Delta$-regular tree for all sufficiently large $\Delta$. 
\item \label{it:rew2} 
when $\beta$ and $\gamma$ satisfy $0 \leq \beta < 1$ and $\gamma>1$ then uniqueness holds on the
infinite $\Delta$-regular tree for all sufficiently large~$\Delta$.
\end{enumerate}

In order to prove Theorem~\ref{thm:main},  
we will construct a family of $k$-uniform hypergraphs
so that the $2$-spin model that $f$ induces on these hypergraphs
simulates an anti-ferromagnetic binary $2$-spin model.
Thus, the constructed hypergraphs will be viewed as binary gadgets.
It will be important that the induced binary $2$-spin model is in the non-uniqueness region
so that we can prove hardness using Theorem~\ref{thm:slysun}.
 Our constructions will use the   conditional distributions induced by pinning or equality to simplify the analysis of the gadgets.

 The conditional distribution will yield an idealised antiferromagnetic 2-spin system with parameters $\beta_0$ and
 $\gamma_0$, say. The delicate issue that arises is that the
 hypergraphs that we can construct to simulate these conditional distributions (see Lemmas~\ref{lem:pineq}
 and~\ref{lem:generalpin}) are imperfect. There is always a small error~$\epsilon$.
 So even if the ideal antiferromagnetic spin-system given by~$\beta_0$ and~$\gamma_0$
 is in the non-uniqueness region, we will
 have constructed some nearby binary spin-system given by (say) parameters~$\beta$
 and~$\gamma$ and we will have to prove that the spin-system given by~$\beta$ and~$\gamma$
 is also an anti-ferromagnetic spin system in the non-uniqueness region.
 
In general, the error bound that we will get from Lemma~\ref{lem:generalpin}
will tell us that for some small constant $\epsilon$,
 $|\beta-\beta_0|< \epsilon$  and
$|\gamma-\gamma_0|<\epsilon$. The most difficult case will be when $\gamma_0$ is close to~$1$
(including the case where $\gamma_0$ is actually~$1$).
In this case, we might have $\gamma$ slightly larger than~$1$ and we will thus need
to  exclude Item~\ref{it:rew2} above. 

In order to overcome these obstacles, we rely on making the
error $\epsilon$ very small, at the expense, of course, of potentially increasing the
degree bound~$\Delta$. By a continuity-type of argument, 
we will show that for $\beta_0$ strictly less than 1 and $\gamma_0\leq 1$, for all $\beta,\gamma$ which are sufficiently close to $\beta_0,\gamma_0$, there exists a $\Delta$ such that the 2-spin system with parameters $\beta,\gamma$ is in the non-uniqueness regime of the infinite $\Delta$-regular tree  (which can then be used to derive hardness).\footnote{Of course, other approaches may make it possible
to    directly construct ``strictly antiferromagnetic'' gadgets, without relying on Lemma~\ref{lem:strip}.} 
We will prove a slightly stronger statement by giving a bound on the required accuracy $\epsilon$ in terms of the degree $\Delta$, which  will allow us to switch the order of quantifiers. Also, our
result will be monotone   in the degree-bound $\Delta$ (as in Item~\ref{it:rew} above). 
\begin{lemma}\label{lem:strip} 
Suppose $0\leq \beta_0<1$. Then, for all sufficiently large $\Delta$, for $\epsilon=1/\Delta$, for all $\beta,\gamma$ which satisfy
\begin{equation}\label{hardnessregion}
\max\{\beta_0-\epsilon,0\}\leq  \beta< \beta_0+\epsilon\mbox{ and }0<\gamma<1+\epsilon,
\end{equation}
the 2-spin system with parameters $\beta,\gamma$ and $\lambda=1$ (no external field) is antiferromagnetic and in the non-uniqueness regime of the infinite $\Delta$-regular tree.
\end{lemma}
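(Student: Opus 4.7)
The proof has two parts. First, antiferromagnetism is immediate: $\beta\gamma < (\beta_0+1/\Delta)(1+1/\Delta) \to \beta_0 < 1$, so $\beta\gamma < 1$ once $\Delta$ exceeds a threshold depending only on $\beta_0$. Second, for non-uniqueness I use the standard characterisation recalled in Section~\ref{sec:binary}: the system is in the non-uniqueness regime iff $|h'(x^*)| > 1$, where $h(x) = ((\beta x + 1)/(x + \gamma))^{\Delta-1}$ and $x^* > 0$ is the unique fixed point of $h$. A logarithmic-derivative computation gives the closed form
\begin{equation*}
|h'(x^*)| \;=\; \frac{(\Delta - 1)(1 - \beta\gamma)\, x^*}{(\beta x^* + 1)(x^* + \gamma)}.
\end{equation*}
Setting $\beta_1 := (1+\beta_0)/2 \in (\beta_0,1)$, one has $\beta \le \beta_1$ throughout the rectangle for $\Delta$ large, so $1 - \beta\gamma \ge (1-\beta_1)/2$. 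The remaining task is to show that the rational factor $(\Delta-1) x^*/((\beta x^* + 1)(x^* + \gamma))$ grows without bound in $\Delta$, uniformly on the rectangle.

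The plan is to estimate $x^*$ by splitting on $\gamma$. When $\gamma \le 1 - 1/\sqrt\Delta$, I use antiferromagnetic monotonicity of $h$ together with the identity $h((1-\gamma)/(1-\beta)) = 1$ to show that $x^*$ lies on the same side of $1$ as $(1-\gamma)/(1-\beta)$; this places $x^*$ either in $[1/\sqrt\Delta,\,1]$ (when $\gamma \ge \beta$) or in $[1,\,1/(1-\beta_1)]$ (when $\gamma < \beta$). In both sub-cases the denominator is $O(1)$ and $x^* \ge 1/\sqrt\Delta$, so $|h'(x^*)| = \Omega(\sqrt\Delta)$. The harder case is $\gamma \in (1 - 1/\sqrt\Delta,\ 1 + 1/\Delta)$. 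Writing $x^* = u/\Delta$ and expanding $(\Delta-1)\log((\beta x^*+1)/(x^*+\gamma)) = \log x^*$ to leading order in $1/\Delta$, I obtain $(1-\beta)\, u = L + O(1 + \log u)$, where $L := \log\Delta - (\gamma-1)\Delta$. Since $\gamma \in (1-1/\sqrt\Delta,\,1+1/\Delta)$ gives $L \in [\log\Delta - 1,\,\log\Delta + \sqrt\Delta]$, one has $u = \Theta(L/(1-\beta)) = \Omega(\log\Delta)$ uniformly across the strip, hence $x^* \ge c(\log\Delta)/\Delta$ for some $c>0$ depending only on $\beta_0$. With $\beta x^* + 1$ and $x^* + \gamma$ both in $[1/2,\,2]$ for $\Delta$ large, the formula then yields $|h'(x^*)| = \Omega(\log \Delta)$, which exceeds $1$ for $\Delta$ sufficiently large.

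The main obstacle is this last regime, because for any fixed $\gamma > 1$ uniqueness eventually holds as $\Delta \to \infty$, so the non-uniqueness region contracts toward $\gamma = 1$, and I must verify that the width-$(1/\Delta)$ strip above $\gamma = 1$ still lies inside. The argument succeeds precisely because the boundary recedes at the logarithmic-in-$\Delta$ rate $\Theta((\log\Delta)/\Delta)$, comfortably above the width $1/\Delta$ of the strip we need to cover; the slack is exactly the $\log\Delta$ factor appearing in the lower bound on $|h'(x^*)|$.
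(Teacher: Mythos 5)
Your proof is correct in outline and reaches the same conclusion by a genuinely different route. The paper never estimates the fixed point $x^*$ at all: it invokes the characterisation from \cite{LLY} that non-uniqueness at $\lambda=1$ is equivalent to $\lambda_1<1<\lambda_2$, where $\lambda_1,\lambda_2$ are computed from the two roots $x_1<x_2$ of the explicit quadratic $\beta x^2+((d+1)\beta\gamma-(d-1))x+\gamma=0$ (which is precisely the equation $|h'(x)|=1$), and then it bounds $\lambda_1$ and $\lambda_2$ crudely using closed forms for $x_1,x_2$. You instead verify $|h'(x^*)|>1$ directly by locating $x^*$, which forces you into the asymptotics $x^*\asymp(\log\Delta)/\Delta$ in the critical strip $\gamma\in(1-1/\sqrt{\Delta},1+1/\Delta)$ -- exactly the regime the paper's route sidesteps, since the quadratic's roots are insensitive to where the fixed point sits. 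What your approach buys: it is self-contained (no appeal to Lemmas 21--22 of \cite{LLY} beyond the derivative criterion the paper already recalls), it treats $\beta=0$ and $\beta>0$ uniformly (the paper handles $\beta=0$ separately via $\lambda_c(\gamma,d)$), and your closing observation that the non-uniqueness boundary sits at $\gamma\approx 1+\Theta((\log\Delta)/\Delta)$ while the strip has width $1/\Delta$ is an accurate and illuminating explanation of why the lemma is true at all. What the paper's approach buys: no fixed-point asymptotics, hence shorter and less delicate estimates.

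One detail in your hard regime needs to be supplied before the argument is airtight: to truncate $(\Delta-1)\log\bigl((\beta x^*+1)/(x^*+\gamma)\bigr)$ after the linear term with an error of only $O(1+\log u)$, you need the quadratic correction $(\Delta-1)\cdot O\bigl((x^*)^2\bigr)=O(u^2/\Delta)$ to be bounded, i.e.\ an a priori bound $u=O(\sqrt{\Delta})$; your assertion $u=\Theta(L/(1-\beta))$ presumes this but does not establish it, and the crude bound $x^*\le 1$ is not enough. It is fillable by a short bootstrap: if $x^*\ge A/\sqrt{\Delta}$ with $A=3/(1-\beta_1)$, then $(\beta x^*+1)/(x^*+\gamma)\le 1-\tfrac{2}{3\sqrt{\Delta}}$ (using $\gamma\ge 1-1/\sqrt{\Delta}$), whence $x^*=h(x^*)\le e^{-\Omega(\sqrt{\Delta})}$, a contradiction. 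With that inserted, and with the routine verification that $\log u=O(\log\Delta)$ cannot swallow the main term $L\ge\log\Delta-1$, your argument goes through.
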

\begin{proof}
By choosing $\Delta$ sufficiently large, for all $\beta,\gamma$ which satisfy \eqref{hardnessregion}, it clearly holds that $\beta \gamma<1$ and thus the corresponding 2-spin system is antiferromagnetic. We next show that for all $\Delta$ sufficiently large the 2-spin system is also in the non-uniqueness regime of the infinite $\Delta$-regular tree.

We first  consider the ``soft-constrained" case $\beta_0>0$, where we will assume throughout that $\Delta>1/\beta_0$, so that for all $\beta,\gamma$ satisfying \eqref{hardnessregion} it holds that $\beta\gamma>0$ and $\beta<1$. 
We first recall basic facts about the uniqueness regime of soft-contrained antiferromagnetic 2-spin systems on the infinite $\Delta$-regular tree for $\Delta\geq 3$. The reader is referred to, e.g., \cite[Lemma 21]{LLY} for more details.

We have already seen  that non-uniqueness holds on the infinite $\Delta$-regular tree   when the system of equations \eqref{eq:wed} has multiple positive solutions. We also saw that
this corresponds to the case where $|h'(x^*)|>1$ where $x^*$ is the unique positive solution of $x^*= h(x^*)$
for a function $h(x)$ defined shortly before Equation~\eqref{eq:wed}.
 In \cite[Lemma 21]{LLY}, it is shown that there exist values $\lambda_1:=\lambda_1(\beta,\gamma,\Delta)$ and $\lambda_2:=\lambda_2(\beta,\gamma,\Delta)$ such that the condition $|h'(x^*)|< 1$ holds either when (i) $\sqrt{\beta \gamma}>(\Delta-2)/\Delta$, or (ii) $\sqrt{\beta \gamma}\leq (\Delta-2)/\Delta$ and $\lambda<\lambda_1$ or $\lambda>\lambda_2$. Adapting the proof \cite[Proof of Lemma 21, Item 7]{LLY} it is not hard to see that the condition $|h'(x^*)|> 1$ which we are interested in holds iff $\sqrt{\beta \gamma}< (\Delta-2)/\Delta$ and $\lambda\in (\lambda_1,\lambda_2)$, where $\lambda_1,\lambda_2$ are as in \cite[Lemma 21, Item 7]{LLY}. Thus, for $\beta_0>0$, our goal is to show that, for all sufficiently large $\Delta$,  for all $\beta,\gamma$ satisfying \eqref{hardnessregion} with $\epsilon=1/\Delta$, it holds that $\lambda_1<1<\lambda_2$.

To show the inequalities for $\lambda_1$ and $\lambda_2$, we next describe explicitly the values of  $\lambda_1$ and $\lambda_2$. For fixed $\beta,\gamma>0$ and $\Delta\geq 3$ with $\sqrt{\beta \gamma}<(\Delta-2)/\Delta$, the values of $\lambda_1,\lambda_2$ can be obtained as follows (see \cite[Lemma 22]{LLY}). To align with the setting in \cite{LLY}, we denote $d:=\Delta-1$. When $\sqrt{\beta \gamma}< (d-1)/(d+1)$, it is not hard to show that the equation 
\begin{equation}\label{eq:nonuniqueness}
\frac{d(1-\beta \gamma)x}{(\beta x+1)(x+\gamma)}=1, \mbox{ which is equivalent to } \beta x^2+\big((d+1)\beta \gamma-(d-1)\big)x+\gamma=0,
\end{equation} 
has two distinct positive solutions $x_1,x_2$. Without loss of generality, we may   assume that   $x_1< x_2$. For future use, we remark that  
\begin{align}
x_1&=\Big((d-1)-(d+1)\beta \gamma-\sqrt{\big((d-1)-(d+1)\beta \gamma\big)^2-4\beta \gamma}\Big)/2\beta\notag\\
&=2\gamma/\Big((d-1)-(d+1)\beta \gamma+\sqrt{\big((d-1)-(d+1)\beta \gamma\big)^2-4\beta \gamma}\Big),\label{eq:asymptotics1}
\end{align}
where the latter expression follows by taking the conjugate expresion. We thus obtain the crude bounds \begin{equation}\label{eq:boundsx1x2}
x_1<\frac{2\gamma}{T}, \quad \frac{T}{2\beta}<x_2,\mbox{ where }T:=(d-1)-(d+1)\beta \gamma.
\end{equation}
(The upper bound for $x_1$ is obtained by ignoring the square root and the lower bound for $x_2$ is obtained from the upper bound for $x_1$ and noticing, from \eqref{eq:nonuniqueness}, that $x_1x_2=\gamma/\beta$.)  

The values of $\lambda_1,\lambda_2$ in terms of $x_1,x_2$ are given by
\begin{equation}\label{eq:lambda1lambda2}
\lambda_1=\lambda_1(\beta,\gamma,d):=x_1\left(\frac{x_1+\gamma}{\beta x_1+ 1}\right)^{d},\quad \lambda_2=\lambda_2(\beta,\gamma,d):=x_2\left(\frac{x_2+\gamma}{\beta x_2+ 1}\right)^{d}.
\end{equation}

For future use, note that $(x_1+\gamma)/(\beta x_1+1)< x_1+\gamma$.
Also, since $\beta<1$ and $\gamma>0$,
 $(x_2+\gamma)/(\beta x_2+1)>x_2/(x_2+1)> (x_2-1)/x_2$. So, using also the bounds for $x_1$ and $x_2$ from \eqref{eq:boundsx1x2}, the expressions in \eqref{eq:lambda1lambda2} yield the bounds 
 \begin{equation}\label{boundslam}
\begin{aligned}
\lambda_1&< x_1(x_1+\gamma)^d< \frac{2\gamma^{d+1}}{T}(1+2/T)^d<\frac{6}{T}(1+2/T)^d, \\ 
\lambda_2&> x_2(1-1/x_2)^d> \frac{T}{2\beta}(1-2\beta/T)^d>\frac{T}{2}(1-2/T)^d,
\end{aligned}
\end{equation}
where in the rightmost inequalities we used the bounds $\gamma^{d+1}\leq \big(1+1/(d+1)\big)^{d+1}<3$ for $d\geq 2$ (since $\gamma\leq 1+\epsilon$ from \eqref{hardnessregion}) and $\beta<1$ (since $\beta_0<1$ and 
we can use our initial assumption that $d$ is sufficiently large and hence $\epsilon$ in \eqref{hardnessregion} small).

For $\beta_0>0$  and $\beta,\gamma$ satisfying \eqref{hardnessregion}, we have the bound $(d+1)\beta \gamma< (d+2)\beta_0+2$, so for all sufficiently large $d$ (depending only on $\beta_0$), it holds that $T>d(1-\beta_0)/2$. Hence, the bounds in \eqref{boundslam} yield that, for all sufficiently large $d$,
\[\lambda_1<\frac{12}{d(1-\beta_0)}\Big(1+\frac{4}{d(1-\beta_0)}\Big)^d, \quad \lambda_2>  \frac{d(1-\beta_0)}{4}\Big(1-\frac{4}{d(1-\beta_0)}\Big)^d. \]
Since $1>\beta_0>0$, we clearly obtain that for all sufficiently large $d$ (depending only on $\beta_0$), it holds that $\lambda_1<1<\lambda_2$, as wanted. This completes the proof of the lemma for the case $\beta_0>0$.

We next consider the case  $\beta_0=0$. For $\gamma>0$, the 2-spin system specified by $\beta=0,\gamma$ and $\lambda=1$ is in the uniqueness regime of the infinite $(d+1)$-regular tree iff $1\leq \lambda_c(\gamma,d):=\frac{\gamma^{d+1}d^d}{(d-1)^{d+1}}$ (see, for example, \cite[Proof of Item 5 in Lemma 21]{LLY}). Since $\lambda_c(\gamma,d)$ is an increasing function of $\gamma$, it suffices to show that $\lambda_c(1+\epsilon,d)<1$ for all sufficiently large $d\geq 2$. Note that $(1+\epsilon)^{d+1}<3$ for all $d\geq 2$ and, for $d\geq 10$, we have $\frac{d^d}{(d-1)^{d+1}}\leq 1/3$. It follows that for all $d\geq 10$, it holds that $\lambda_c(1+\epsilon,d)<1$, as needed.

To complete the proof for the case $\beta_0=0$, we need to argue that for all sufficiently large $d$, for all $(\beta,\gamma)\in (0,\epsilon)\times (0,1+\epsilon)$ with $\epsilon=1/(d+1)$,   the 2-spin system with parameters $\beta,\gamma,\lambda=1$, is in the non-uniqueness regime of the infinite $(d+1)$-regular tree. For $d\geq 10$ and this range of the parameters $\beta,\gamma$ we have that $\beta\gamma>0$ and $\sqrt{\beta \gamma}\leq (d-1)/(d+1)$. Thus it suffices to establish that for all sufficiently large $d$ it holds that $\lambda_1(\beta,\gamma,d)<1<\lambda_2(\beta,\gamma,d)$, where $\lambda_1,\lambda_2$ are as in \eqref{eq:lambda1lambda2}. In fact, we can use the bounds in \eqref{boundslam}, so we only need to provide a lower bound on $T$  since our derivation previously was to account  for the case $\beta_0>0$. For this range of the parameters $\beta,\gamma$, we have that $(d+1)\beta \gamma<2$. It follows that $T>d-3$.  Thus, the bounds in \eqref{boundslam} yield 
\[\lambda_1< \frac{6}{d-3}\Big(1+\frac{2}{d-3}\Big)^d, \quad \lambda_2> \frac{d-3}{2}\Big(1-\frac{2}{d-3}\Big)^d,\] 
so that $\lambda_1<1<\lambda_2$ for all sufficiently large $d$.
This concludes the proof of the lemma. \end{proof}

\subsection{Proof of Lemma~\ref{hardness}}
Using Lemmas~\ref{lem:generalpin} and~\ref{lem:strip}, we now give the proof of Lemma~\ref{hardness}. 
The idea is to use Lemma~\ref{lem:generalpin} to obtain a hypergraph that realises the conditional distribution with sufficient accuracy $\epsilon$. The resulting hypergraph can be used to simulate an antiferromagnetic 2-spin system which, by Lemma~\ref{lem:strip} and  Corollary~\ref{thm:isinghardness},
will be hard to approximate on $\Delta$-regular graphs (for large $\Delta$). The formal proof  is as follows. 

\begin{lemhardness}
\statelemnumhardness
\end{lemhardness}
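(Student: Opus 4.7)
The hypotheses say exactly that the $2\times 2$ weight matrix $(\Zmu_{ij})$ induced on $(x,y)$ by the conditioned distribution $\mu^{\condV}_{f;H}$ is (after symmetrisation) anti-ferromagnetic and on the correct side of the uniqueness boundary in the no-field regime. My plan is to read off the idealised 2-spin parameters from this matrix, then build a hypergraph gadget that plays the role of an edge in a binary 2-spin instance, and finally reduce from inapproximability of that binary instance via Corollary~\ref{thm:isinghardness} and Lemma~\ref{lem:strip}.

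First I would set up the idealised parameters. Swapping the two spins if necessary, I may assume $\Zmu_{00}\leq\Zmu_{11}$; the hypotheses then give $\Zmu_{01}\Zmu_{10}>0$ and
\[
\beta_0:=\frac{\Zmu_{00}^2}{\Zmu_{01}\Zmu_{10}}\in[0,1),\qquad \gamma_0:=\frac{\Zmu_{11}^2}{\Zmu_{01}\Zmu_{10}}\in(0,1],\qquad \beta_0\gamma_0<1.
\]
Since $\beta_0<1$, Lemma~\ref{lem:strip} supplies a $D_0$ such that for every $D\geq D_0$, taking $\eta:=1/D$, any $\beta\in[\max\{\beta_0-\eta,0\},\beta_0+\eta)$ and $\gamma\in(0,1+\eta)$ land in the non-uniqueness regime of the infinite $D$-regular tree for the no-field 2-spin model with parameters $(\beta,\gamma,1)$. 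Fix such a $D$. Next I would invoke Lemma~\ref{lem:generalpin} with $S=\{x,y\}$ and a sufficiently small accuracy $\epsilon'$ (to be chosen in terms of $\eta$, $\Zmu_{00},\Zmu_{11},\Zmu_{01}\Zmu_{10}$) to obtain a hypergraph $H'\supseteq H$ with marginal weights $\mu'_{ij}$ satisfying $|\mu'_{ij}-\Zmu_{ij}|\leq \epsilon'$.

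The gadget $G_{xy}$ is then built by taking two disjoint copies $H'^{(1)}$ and $H'^{(2)}$ of $H'$, introducing fresh terminals $u,v$, and identifying $x^{(1)}{=}u$, $y^{(1)}{=}v$, $x^{(2)}{=}v$, $y^{(2)}{=}u$. Swapping $x$ and $y$ in the second copy is what forces the $u,v$ entries of the effective weight matrix to be symmetric: summing over internal configurations gives (up to a factor of $Z_{f;H'}^2$) the weights
\[
W_{00}=(\mu'_{00})^2,\qquad W_{01}=W_{10}=\mu'_{01}\mu'_{10},\qquad W_{11}=(\mu'_{11})^2.
\]
Setting $\beta:=W_{00}/W_{01}$ and $\gamma:=W_{11}/W_{01}$, continuity gives $\beta\to\beta_0$ and $\gamma\to\gamma_0$ as $\epsilon'\to 0$, so by picking $\epsilon'$ small enough I can force $|\beta-\beta_0|,|\gamma-\gamma_0|<\eta$. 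For the reduction I start from a $D$-regular graph $G=(V_G,E_G)$ and form a hypergraph $H_G$ by placing an independent copy of $G_{xy}$ on every edge $\{a,b\}$, identifying $u$ with $a$ and $v$ with $b$. Summing over the internal spins of each gadget copy telescopes to
\[
Z_{f;H_G}=W_{01}^{|E_G|}\,Z_{\beta,\gamma,1;G}.
\]
The number of vertices of $H_G$ is $O(|V_G|)$; its maximum degree is bounded by a constant depending only on $H'$ and $D$ (internal gadget vertices keep their degree in $H'$, while each graph vertex $a$ sees at most $D$ copies of the terminal degree of $x$ or $y$ in $H'$). So for every $\Delta$ above that threshold, $H_G$ is a legal instance of $\Hyper2Spinf$. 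Combining the displayed identity with Corollary~\ref{thm:isinghardness} applied to the $(\beta,\gamma,1)$ system on $D$-regular graphs transports an exponential hardness factor $c^{|V_G|}$ from $Z_{\beta,\gamma,1;G}$ to $Z_{f;H_G}$, which yields the claimed $\NP$-hardness.

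The main obstacle is the control of the drift $(\beta,\gamma)$ away from the ideal $(\beta_0,\gamma_0)$. The gadget can only realise the conditional distribution approximately, so the effective 2-spin parameters are never exactly $(\beta_0,\gamma_0)$; because $\gamma_0$ is allowed to equal $1$ and the system could in principle drift to the uniqueness side of the boundary, it is not a priori obvious that the perturbed system remains anti-ferromagnetic and non-unique. This is precisely the role of Lemma~\ref{lem:strip}: it carves out a strip of positive width around the line $\gamma=1$ that remains in non-uniqueness for all sufficiently large $\Delta$, and into this strip I can fit the perturbed parameters by choosing $\epsilon'$ small enough, exploiting the \emph{strict} inequality $\min\{\Zmu_{00},\Zmu_{11}\}<\sqrt{\Zmu_{01}\Zmu_{10}}$ in the hypotheses.
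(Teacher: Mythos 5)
Your proposal is correct and follows essentially the same route as the paper's proof: the same idealised parameters $\beta_0,\gamma_0$, the same two-copy symmetrisation gadget built from the output of Lemma~\ref{lem:generalpin}, the same edge-replacement reduction from a regular graph, and the same use of Lemma~\ref{lem:strip} to absorb the $\epsilon'$-drift of $(\beta,\gamma)$ near the boundary $\gamma=1$. No gaps.
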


\begin{proof} 
Let $H=(V,\mathcal{F})$.
We start by applying Lemma~\ref{lem:generalpin}
with $S = \{x,y\}$.
For every $\epsilon'>0$ and every $s_1,s_2\in \{0,1\}$,
the lemma shows that there is a hypergraph $H'=(V',\mathcal{F}')$ with $V \subseteq V'$
and $\mathcal{F} \subseteq \mathcal{F}'$ so that
\begin{equation}\label{eq:limitlimit}
\big|\mu_{f;H'}(\sigma_x=s_1,\sigma_y=s_2)-\mu^{\condV}_{f;H}(\sigma_x=s_1,\sigma_y=s_2)\big|\leq \epsilon'.
\end{equation}
For $s_1,s_2\in\{0,1\}$, let $\Zmu'_{s_1s_2}=\mu_{f;H'}(\sigma_x=s_1,\sigma_y=s_2)$. Thus, \eqref{eq:limitlimit} becomes 
\begin{equation}\label{eq:sec}
|\Zmu'_{s_1s_2}-\Zmu_{s_1s_2}|\leq \epsilon'.
\end{equation}

The conditions in~\eqref{eq:antiferro} guarantee that $\Zmu_{01}$ and $\Zmu_{10}$
are positive, so by choosing $\epsilon'$ sufficiently small, we can also guarantee
that $\Zmu'_{01}$ and $\Zmu'_{10}$ are positive.

Assume without loss of generality that $\Zmu_{00} \leq \Zmu_{11}$.
(Otherwise, we will swap the role of the spins~$0$ and~$1$.)
Let $\beta_0 = \Zmu_{00}^2/\Zmu_{01}\Zmu_{10}$ and let
$\gamma_0 = \Zmu_{11}^2/\Zmu_{01}\Zmu_{10}$.
By~\eqref{eq:antiferro}, $0 \leq \beta_0 <1 $ and $0< \gamma_0 \leq 1$.

Next, we ``symmetrise" the hypergraph $H'$ to obtain a hypergraph $H''$ (an analogous argument was used previously in the proof of Lemma~\ref{lem:gadgets}). To do this, take two disjoint copies of $H'$ which we denote by $H_1',H_2'$. For $i=1,2$, denote by $x_i,y_i$ the images of the vertices $x,y$ in $H_i'$.  Now identify vertices $x_1$ and $y_2$ into a single vertex $x$, and similarly identify vertices $x_2$ and $y_1$ into a single vertex $y$. Let $H''$ be the final hypergraph
and let $\mu''_{s_1 s_2}$
denote $\mu_{f;H''}(\sigma(x)=s_1,\sigma(y)=s_2)$.
Then for all $s_1,s_2\in \{0,1\}$, we have 
$$\mu''_{s_1 s_2} =   \frac{\mu'_{s_1 s_2} \mu'_{s_2 s_1}}
{\sum_{t_1,t_2\in\{0,1\}} \mu'_{t_1 t_2} \mu'_{t_2 t_1}}.$$  
Note that $\Zmu_{01}''=\Zmu_{10}''>0$.

Consider the 2-spin system with parameters  $\beta:=\Zmu''_{00}/\Zmu''_{01}$, $\gamma:=\Zmu''_{11}/\Zmu''_{01}$, and $\lambda=1$. 
Using the definitions of~$\beta_0$ and~$\gamma_0$
and
Equation~\eqref{eq:sec} we find that
for every $\Delta'\geq 3$, and for every sufficiently small $\epsilon'>0$, we have
\begin{equation}\label{eq:wer}
|\beta-\beta_0|< 1/\Delta'\mbox{ and } |\gamma-\gamma_0|<1/\Delta'.
\end{equation}
 
 By Lemma~\ref{lem:strip},
   there is a $\Delta'_0\geq 3$ such that
 if $\Delta'\geq \Delta'_0$   then the spin system with parameters $\beta,\gamma,\lambda=1$ is in the non-uniqueness regime of the infinite $\Delta'$-regular tree.  Thus, by Corollary~\ref{thm:isinghardness}, 
 there is a $c>1$ such that approximating $Z_{\beta,\gamma,1;G}$ 
 within a factor of~$c^n$ is $\NP$-hard on the class of $\Delta'$-regular 
 $n$-vertex graphs~$G$.
  
 Let $\Delta''$ be the maximum degree of the hypergraph $H''$. We will show the lemma  
 for all $\Delta \geq \Delta'\Delta''$.

Let $G=(V,E)$ be a $\Delta'$-regular graph for which we want to compute $Z_{\beta,\gamma,1;G}$. We construct the hypergraph $H'''$ by replacing each edge of $G$ with a copy of the hypergraph $H''$ as follows. For each edge $(u,v)\in E$, take a (distinct) copy $H_{uv}=(V_{uv},\mathcal{F}_{uv})$ of the hypergraph $H''$. Denote by $x_{uv},y_{uv}$ the images of the vertices $x,y$ in $H_{uv}$. Now for each $u\in V$ identify the vertices $x_{uv_1}, \hdots,x_{uv_{\Delta'}}$ into a single vertex $u$, where $v_1,\hdots,v_{\Delta'}$ denote the neighbors of $u$ in $G$. It is clear that the hypergraph $H'''$ has maximum degree $\Delta$.

Let $V''',\mathcal{F}'''$ denote the vertex and hyperedge sets of $H'''$, respectively. For $\sigma:V\rightarrow \{0,1\}$, let $\Sigma(\sigma)=\{\tau:V'''\rightarrow \{0,1\}\mid \tau_V=\sigma\}$. The total contribution to the partition function 
$Z_{f;H'''}$ from configurations in $\Sigma(\sigma)$ is   exactly 
$\prod_{(u,v)\in E}(\Zmu''_{\sigma(u)\sigma(v)} Z_{f;H''})$. Thus, we obtain
\begin{equation}\label{eq:conn1}
\begin{aligned}
Z_{f;H'''}& ={(\Zmu_{01}'' Z_{f;H''})}^{|E|}\sum_{\sigma:V\rightarrow\{0,1\}}\prod_{(u,v)\in E}\frac{\Zmu_{\sigma(u)\sigma(v)}''}{\Zmu_{01}''} =(\Zmu_{01}'' Z_{f;H''})^{|E|}\, Z_{\beta,\gamma,1;G}.
\end{aligned}
\end{equation}
Since $\Zmu_{01}'' Z_{f;H''}$ is an explicitly computable constant,  it follows that an approximation to the partition function $Z_{f;H'''}$ within   a factor of $c^n$ yields an approximation to $Z_{\beta,\gamma,1;G}$ within 
a factor of $c^n$. This completes the proof
since $\Delta'$ and $H''$ are fixed, which guarantees that the number of vertices of~$H'''$
is a constant multiple of the number of vertices of~$G$. \end{proof}
  
\section{Proof of Theorem~\ref{thm:main}}
\label{sec:proof}

In this section, we give the proof of Theorem~\ref{thm:main}.
Let $k\geq 2$ and  $f:\{0,1\}^k\rightarrow \{0,1\}$ be a \emph{symmetric Boolean} function with $f\notin \EASYk$. 
Our goal is to show that there exists $\Delta_0$ such that for all  $\Delta\geq \Delta_0$, there exists $c>1$ such that $\Hyper2Spinf$ is $\NP$-hard.

The case $k=2$ corresponds to approximating the partition function of \emph{unweighted} 2-spin systems in graphs. It is not hard to see that the only symmetric arity-2 Boolean functions with $f\notin \EASYk$ are 
the function which is~$1$ if at least one of $x_1$ and $x_2$ is~$0$
and the function which is~$1$ if at least one of $x_1$ and $x_2$ is~$1$.
  The partition function in both of these cases corresponds to counting the number of the independent sets, or equivalently to the 2-spin system with $\beta=0$ and $\gamma=1$ (and no external field). For this model, it is well known that non-uniqueness holds in the infinite $\Delta$-regular tree when $\Delta\geq 6$, which in conjuction with Theorem~\ref{thm:isinghardness} completes the proof of Theorem~\ref{thm:main} in the special case $k=2$  (alternatively, one may use Lemma~\ref{lem:strip} to argue that non-uniqueness holds for all sufficiently large $\Delta$).

Thus, for the rest of the proof, we will assume that $k\geq 3$. We may further assume that at least one of $w_0,\hdots,w_k$ is 0 (otherwise $f$ is the constant function $\onef^{(k)}$) and at least one is 1 (otherwise $f$ is the constant function $\zerof^{(k)}$).

By Lemma~\ref{lem:classify}, to prove Theorem~\ref{thm:main} we may split the analysis into the following cases

\begin{enumerate}
\item $f$ supports both pinning-to-0 and pinning-to-1.
\item $f$ supports 2-equality (but neither pinning-to-0 nor pinning-to-1).
\item $f$ supports pinning-to-0 or pinning-to-1 (but not both).
(By  swapping~$0$ and~$1$, it would be identical to assume that $f$ supports pinning-to-0 but not pinning-to-1.)
\end{enumerate}
In each case, the goal is 
to show that there exists $\Delta_0$ such that for all  $\Delta\geq \Delta_0$, there exists $c>1$ such that $\Hyper2Spinf$ is $\NP$-hard.

For example, when $f$ is the function corresponding to weak independent sets, 
$f$ supports pinning-to-0 but not pinning-to-1, so $f$ is in  Case~3.
The same is true when $f$ is the function corresponding to strong independent sets.
On the other hand, if $f$ is the ``not-all-equal'' function, then it supports $2$-equality, but neither pinning-to-0 nor
pinning-to-1, so it is in Case 2.

Before delving into the proofs for each of this cases, we  give a piece of terminology which will simplify the exposition. 
The reader may wish to recall Definition~\ref{conditioned}.
We will typically invoke Lemma~\ref{hardness} for a hypergraph $H$ and an admissible collection of sets $\mathcal{V}=(V_0,V_1,V_2,\hdots,V_r)$. Rather than formally defining $\mathcal{V}$ in each such application of Lemma~\ref{hardness}, it will be convenient (and more instructive) to say, e.g.,  pin vertices $x_1,x_2$ to 0 (instead of specifying $V_0$ as $V_0=\{x_1,x_2\}$),  pin vertices $x_3,x_4$ to 1 (instead of specifying $V_1$ as $V_1=\{x_3,x_4\}$), and force equality among $x_5,x_6,x_7$ (instead of specifying $V_2$ as $V_2=\{x_5,x_6,x_7\}$).

\subsection{Case I}\label{sec:caseI} 

In this section, we assume that the function $f$ supports both pinning-to-0 and pinning-to-1.  In this case, Lemma~\ref{hardness} will always be applied to the hypergraph with a single edge $e:=\{x_1,x_2,\hdots,x_k\}$. For explicitness and with a slight abuse of notation we will denote by $e$ this hypergraph. 

\vskip 0.2cm \textbf{Case Ia} Suppose first that exactly one of $w_0,\hdots,w_k$ is equal to 1, say $w_j=1$. We may assume that $j\neq 0$ and $j\neq k$, otherwise $f=\allzerof^{(k)}$ or $f=\allonef^{(k)}$, respectively. We will consider separately the cases $j=1$ and $k>j\geq 2$. 

Suppose first that $k>j\geq 2$. Pin $x_1,\hdots,x_{j-2}$ to 1 (if $j=2$, no vertex is pinned to one), pin $x_{j+2},\hdots,x_k$ to 0 (if $j+2>k$, no vertex is pinned to zero) and set $x:=x_{j-1}$, $y:=x_{j}$ (since $j< k$, note that vertex $x_{j+1}$ is ``free"). We have:
\begin{align*}
\mu_{f;e}^{\condV}(\sigma_x=\sigma_y=0)&\propto w_{j-2}+w_{j-1}=0,\\
\mu_{f;e}^{\condV}(\sigma_x=0,\sigma_y=1)&\propto w_{j-1}+w_j=1, \\
\mu_{f;e}^{\condV}(\sigma_x=\sigma_y=1)&\propto w_j+w_{j+1}=1.
\end{align*}
Also, by symmetry, 
$\mu_{f;e}^{\condV}(\sigma_x=0,\sigma_y=1) =
\mu_{f;e}^{\condV}(\sigma_x=1,\sigma_y=0)$.
(We will use similar symmetry arguments in the rest of this proof without pointing them out explicitly).
So , from Lemma~\ref{hardness}, for all sufficiently large $\Delta$, there exists $c>1$ such that $\Hyper2Spinf$ is $\NP$-hard.

Suppose next that $j=1$. Pin $x_{4},\hdots,x_k$ to 0 and set $x:=x_{1}$, $y:=x_{2}$ (since $k\geq 3$, note that vertex $x_{3}$ is ``free"). We have:
\begin{align*}
\mu_{f;e}^{\condV}(\sigma_x=\sigma_y=0)&\propto w_{0}+w_1=1,\\
\mu_{f;e}^{\condV}(\sigma_x=0,\sigma_y=1)&\propto w_{1}+w_2=1, \\
\mu_{f;e}^{\condV}(\sigma_x=\sigma_y=1)&\propto w_{2}+w_3=0,
\end{align*}
so, from Lemma~\ref{hardness}, for all sufficiently large $\Delta$, there exists $c>1$ such that $\Hyper2Spinf$ is $\NP$-hard.

\vskip 0.2cm \textbf{Case Ib.} Suppose next that at least two of $w_0,\hdots,w_k$ are equal to 1. Let $i,j$ be two indices with $i<j$ such that $w_{i}=w_{j}=1$ and $w_{i+1}=\hdots=w_{j-1}=0$. We will first prove that for any two such indices, it holds that $j=i+2$ (otherwise, we will show that $\Hyper2Spinf$ is $\NP$-hard). Since $f\neq \eqf^{(k)}$, we may assume that either $i>0$ or $j<k$. Without loss of generality, we assume that $j<k$ (otherwise we may swap the spins 0 and 1).

\textbf{1.} If $j>i+2$, we consider cases whether $w_{j+1}=0$ or 1. If $w_{j+1}=0$, pin $x_1,\hdots,x_{j-2}$ to 1, pin $x_{j+2},\hdots,x_k$ to 0 and set $x:=x_{j-1}$, $y:=x_{j}$ (since $j<k$, note that vertex $x_{j+1}$ is ``free"). We have:
\begin{align*}
\mu_{f;e}^{\condV}(\sigma_x=\sigma_y=0)&\propto w_{j-2}+w_{j-1}=0,\\
\mu_{f;e}^{\condV}(\sigma_x=0,\sigma_y=1)&\propto w_{j-1}+w_j=1, \\
\mu_{f;e}^{\condV}(\sigma_x=\sigma_y=1)&\propto w_j+w_{j+1}=1,
\end{align*}
so, from Lemma~\ref{hardness}, for all sufficiently large $\Delta$, there exists $c>1$ such that $\Hyper2Spinf$ is $\NP$-hard.  If $w_{j+1}=1$, pin $x_1,\hdots,x_{j-1}$ to 1, pin $x_{j+2},\hdots,x_k$ to 0 and set $x:=x_{j}$, $y:=x_{j+1}$. We have:
\begin{align*}
\mu_{f;e}^{\condV}(\sigma_x=\sigma_y=0)&\propto w_{j-1}=0,\\
\mu_{f;e}^{\condV}(\sigma_x=0,\sigma_y=1)&\propto w_j=1, \\
\mu_{f;e}^{\condV}(\sigma_x=\sigma_y=1)&\propto w_{j+1}=1,
\end{align*}
so, from Lemma~\ref{hardness}, for all sufficiently large $\Delta$, there exists $c>1$ such that $\Hyper2Spinf$ is $\NP$-hard.

\textbf{2.} Assume now that $j=i+1$. Suppose there exists $j'$  such that $w_{i}=w_{i+1}=\hdots=w_{j'}=1$ and $w_{j'+1}=0$. Pin $x_1,\hdots,x_{j'-1}$ to 1, pin $x_{j'+2},\hdots,x_k$ to 0 and set $x:=x_{j'}$, $y:=x_{j'+1}$. We have:
\begin{align*}
\mu_{f;e}^{\condV}(\sigma_x=\sigma_y=0)&\propto w_{j'-1}=1,\\ 
\mu_{f;e}^{\condV}(\sigma_x=0,\sigma_y=1)&\propto w_{j'}=1,\\ 
\mu_{f;e}^{\condV}(\sigma_x=\sigma_y=1)&\propto w_{j'+1}=0,
\end{align*}
so, from Lemma~\ref{hardness}, for all sufficiently large $\Delta$, there exists $c>1$ such that $\Hyper2Spinf$ is $\NP$-hard. 

If such a $j'$ does not exist, then it holds that $w_{i}=w_{i+1}=\hdots=w_k=1$,  so there exists $i'>0$ such that $w_{i'}=\hdots=w_{k}=1$ and $w_{i'-1}=0$ (otherwise $f=\onef^{(k)}$). Pin $x_{i'+2},\hdots,x_k$ to 0, pin $x_{1},\hdots,x_{i'-1}$ to 1 and set $x:=x_{i'}$, $y:=x_{i'+1}$. We have:
\begin{align*}
\mu_{f;e}^{\condV}(\sigma_x=\sigma_y=0)&\propto w_{i'-1}=0,\\
\mu_{f;e}^{\condV}(\sigma_x=0,\sigma_y=1)&\propto w_{i'}=1,\\
 \mu_{f;e}^{\condV}(\sigma_x=\sigma_y=1)&\propto w_{i'+1}=1,
\end{align*}
so again  from Lemma~\ref{hardness}, for all sufficiently large $\Delta$, there exists $c>1$ such that $\Hyper2Spinf$ is $\NP$-hard.

It follows that for every two indices with $i<j$ such that $w_{i}=w_{j}=1$ and $w_{i+1}=\hdots=w_{j-1}=0$, it holds that $j=i+2$. Let $i'$ be the minimum integer such that $w_{i'}=1$. We have that $w_0=w_1=\hdots=w_{i'-1}=0$. Let $j'$ be the maximum integer such that  $w_{i'}=w_{i'+2}=\hdots=w_{i'+2j'}=1$. By assumption, at least two $w_i$'s are equal to 1, so we have that $j'\geq 1$. We also have that $w_{i'+1}=w_{i'+3}=\hdots=w_{i'+2j'-1}=0$, $w_{i'+2j'+1}=\hdots=w_k=0$. 

We may assume that either $i'\notin\{0,1\}$ or $i'+2j'\notin \{k-1,k\}$ (otherwise either $f= \evenf^{(k)}$ or $f=\oddf^{(k)}$). Let us assume first that $i'+2j'\notin \{k-1,k\}$, i.e., $i'+2j'\leq k-2$, so that $w_{i'+2j'+2}=0$. Pin $x_1,\hdots,x_{i'}$ to 1, set 
$x:=x_{i'+1},y:=x_{i'+2}$ and pin $x_{i'+2j'+3}, \hdots, x_k$ to 0. We have:
\begin{align*}
\mu_{f;e}^{\condV}(\sigma_x=\sigma_y=0)&\propto \sum^{2j'}_{\ell=0}\binom{2j'}{\ell} w_{i'+\ell}=2^{2j'-1},\\
 \mu_{f;e}^{\condV}(\sigma_x=0,\sigma_y=1)&\propto \sum^{2j'}_{\ell=0}\binom{2j'}{\ell} w_{i'+1+\ell}=2^{2j'-1},\\
 \mu_{f;e}^{\condV}(\sigma_x=\sigma_y=1)&\propto \sum^{2j'}_{\ell=0}\binom{2j'}{\ell} w_{i'+2+\ell}=2^{2j'-1}-1,
\end{align*}
so from Lemma~\ref{hardness}, for all sufficiently large $\Delta$, there exists $c>1$ such that $\Hyper2Spinf$ is $\NP$-hard.

The case $i'\notin \{0,1\}$ can be covered by an analogous argument, the only difference being that now we pin $x_1,\hdots,x_{i'-1}$ to 1, set $x:=x_{i'},y:=x_{i'+1}$ and pin $x_{i'+2j'+1}, \hdots, x_k$ to 0. Other than that, the previous calculations may be easily modified to obtain 
\begin{align*}
\mu_{f;e}^{\condV}(\sigma_x=\sigma_y=0)&\propto 2^{2j'-1}-1,\\ 
 \mu_{f;e}^{\condV}(\sigma_x=0,\sigma_y=1)&\propto 2^{2j'-1},\\ 
 \mu_{f;e}^{\condV}(\sigma_x=\sigma_y=1)&\propto 2^{2j'-1},
\end{align*}
yielding, by Lemma~\ref{hardness}, that for all sufficiently large $\Delta$, there exists $c>1$ such that $\Hyper2Spinf$ is $\NP$-hard. This concludes the proof in the case where $f$ supports both pinning-to-0 and pinning-to-1.

\subsection{Case~II}
\label{sec:CaseII}

In Case~II we assume that $k\geq 3$ and that $f$ is a symmetric arity-$k$ Boolean function 
that is not in $\EASYk$. The function~$f$
supports $2$-equality
but does not support pinning-to-0 or pinning-to-1. 
 
By Lemma~\ref{lem:selfdual}
we conclude that $f$ is self-dual, meaning that $w_\ell = w_{k-\ell}$ for all $\ell \in \{0,\ldots,k\}$.
By Lemma~\ref{lem:equalst}, we conclude that $f$ supports $t$-equality for all $t\geq 2$.
Our goal is to show that for all sufficiently large $\Delta$, there exists $c>1$ such that 
the approximation problem $\Hyper2Spinf$ is NP-hard. We 
prove this by considering two cases, depending on whether  $w_0=0$ or $w_0=1$. 
We will use the following lemma in both cases.

\begin{lemma}\label{lem:qwer}
Let $f$ be an arity-$k$ symmetric Boolean formula 
that is self-dual.   Let $H$ be a hypergraph with vertex set $V$ and denote by $\Sigma:=\{\sigma\mid \sigma:V\rightarrow\{0,1\}\}$ the set of all $\{0,1\}$ assignments on $V$.  Let $Q:\Sigma\rightarrow \Sigma$ be the map which maps an assignment $\sigma$ to its complement $\bar{\sigma}$, i.e., $\bar{\sigma}$ is defined by $\bar{\sigma}_v=1-\sigma_v$ for all $v\in V$. Then for every $\Sigma'\subseteq \Sigma$, it holds that $\mu_{f;H}(\Sigma')=\mu_{f;H}(Q(\Sigma'))$.
\end{lemma}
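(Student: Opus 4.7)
The plan is to reduce the identity to the hyperedge-level observation that self-duality makes each hyperedge weight invariant under complementation of its arguments, and then to exploit the fact that $Q$ is a bijection on $\Sigma$.

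First, I will show that for every configuration $\sigma\in\Sigma$ we have $w_{f;H}(\sigma) = w_{f;H}(\bar\sigma)$. Fix any hyperedge $e=\{v_1,\ldots,v_k\}\in\mathcal{F}$ and let $\ell$ be the number of indices $i$ with $\sigma(v_i)=1$. Then under $\bar\sigma$ exactly $k-\ell$ of the $v_i$'s carry spin $1$. Since $f$ is symmetric, $f(\sigma(v_1),\ldots,\sigma(v_k))=w_\ell$ and $f(\bar\sigma(v_1),\ldots,\bar\sigma(v_k))=w_{k-\ell}$; by self-duality ($w_\ell=w_{k-\ell}$ for all $\ell$), these are equal. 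Taking the product over all hyperedges yields $w_{f;H}(\sigma)=w_{f;H}(\bar\sigma)$.

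Next, I will observe that $Q$ is an involution (so, in particular, a bijection) on $\Sigma$: $Q(Q(\sigma))=\sigma$ for every $\sigma$. Consequently, for any $\Sigma'\subseteq\Sigma$, the change of variables $\tau=\bar\sigma$ gives
\[
\sum_{\sigma\in\Sigma'}w_{f;H}(\sigma)=\sum_{\sigma\in\Sigma'}w_{f;H}(\bar\sigma)=\sum_{\tau\in Q(\Sigma')}w_{f;H}(\tau).
\]
Taking $\Sigma'=\Sigma$ in particular shows that the normalising constant $Z_{f;H}$ is unchanged; dividing through by $Z_{f;H}$ then yields $\mu_{f;H}(\Sigma')=\mu_{f;H}(Q(\Sigma'))$, as desired.

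There is essentially no hard step here; the only point to be careful about is the correct use of symmetry and self-duality at the level of a single hyperedge, which is what makes the weight of each configuration coincide with the weight of its complement. Everything else is a formal manipulation using that $Q$ is a bijection.
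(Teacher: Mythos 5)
Your proof is correct and follows essentially the same route as the paper: establish $w_{f;H}(\sigma)=w_{f;H}(\bar\sigma)$ from self-duality and then sum over $\Sigma'$. You simply spell out the hyperedge-level justification and the bijection/normalisation details that the paper leaves implicit.
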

\begin{proof}
For every $\sigma\in \Sigma$, self-duality gives that $w_{f;H}(\sigma)=w_{f;H}(\bar{\sigma})$. Summing this equality over all assignments $\sigma$ in the subset $\Sigma'$ yields the result.
\end{proof}

We will split the analysis into two cases -- the case where $w_0=0$ (Section~\ref{sec:w00}) and
the case where $w_0=1$ (Section~\ref{sec:Case2w0one}).
Before these two sections, we make a digression into Constraint Satistfaction Problems (CSP).
The digression will introduce and prove a lemma that we will need for the $w_0=0$ case.
In addition, it will provide some missing detail  which we used in the Introduction to
explain the context of existing work.

\subsubsection{A digression regarding Constraint Satisfaction Problems}
\label{sec:qwe}

Recall the  CSP definitions from Section~\ref{sec:CSP}.
Let $\CSP{\Gamma}$ be the problem of determining whether the partition function $Z_{\Gamma,I}$
is non-zero, given an instance~$I$ of a CSP in which all constraints are from the set~$\Gamma$.

We will use the following CSP terminology.
Let $f$ be an arity-$k$ Boolean function.
For some positive integer~$m$,
let $g$ be a function~$g: \{0,1\}^m \rightarrow \{0,1\}$.   
Suppose that $\overline{x}_1,\ldots, \overline{x}_m$
are $m$ Boolean $k$-tuples
so for $i \in \{1,\ldots,m\}$ we can write $\overline{x}_i$ as 
a tuple $\overline{x}_i = (x_{i,1},\ldots,x_{i,k})$ in $\{0,1\}^k$. 
We will let
$\overline{y}_g(\overline{x}_1,\ldots,\overline{x}_m) = (y_1,\ldots,y_k)$
be the Boolean $k$-tuple constructed from~$g$
and from $\overline{x}_1,\ldots, \overline{x}_m$ as follows.
For each $j \in \{1,\ldots,k\}$, 
$y_j$ is obtained by applying $g$ to $x_{1,j}, \ldots,x_{m,j}$
so $y_{j} = g(x_{1,j},\ldots,x_{m,j})$. 
The function~$g$ is said  to be a \emph{polymorphism} of~$f$
if, for any choice of $m$ tuples 
$\overline{x}_1,\ldots,\overline{x}_m$
satisfying $f(\overline{x}_1) = \cdots = f(\overline{x}_m)=1$,
we  also have
$f( \overline{y}_g(\overline{x}_1,\ldots,\overline{x}_m)) = 1$.

We will use the following algebraic formulation of Chen~\cite[Theorem 3.21]{Hubie}
of Schaefer's famous dichotomy theorem~\cite{Schaefer}.
\begin{theorem}(Schaefer)\label{lem:Schaefer}
Let $\Gamma$ be a finite Boolean constraint language. The problem $\CSP{\Gamma}$ is polynomial-tractable
if one of the following six functions is a polymorphism of every function $f\in \Gamma$.
\begin{enumerate}
\item $g$ is the unary function $g_0$ with $g_0(0)=g_0(1)=0$. \label{sch:one}
\item $g$ is the unary function $g_1$ with $g_1(0)=g_1(1)=1$. \label{sch:two}
\item $g$ is the arity-2 Boolean function $\Band$. \label{sch:three}
\item $g$ is the arity-2 Boolean function $\Bor$. \label{sch:four}
\item $g$ is the ternary  majority function~$\maj$ defined by 
$\maj(a,b,c) = (a\wedge b) \vee (a \wedge c) \vee (b \wedge c)$.\label{sch:five}
\item $g$ is the  ternary minority function~$\minority$ defined
by $\minority(a,b,c) = a \oplus b \oplus c$. \label{sch:six}
\end{enumerate}
Otherwise, $\CSP{\Gamma}$ is $\NP$-complete.
\end{theorem}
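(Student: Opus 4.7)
The plan is to split the dichotomy into the tractable direction (one of the six polymorphisms exists implies a polynomial-time algorithm) and the hardness direction (otherwise $\CSP{\Gamma}$ is $\NP$-complete), with most of the difficulty concentrated in the latter.

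For the tractable direction, I would handle each of the six cases by exhibiting an algorithm that exploits the structural constraint that the polymorphism imposes on the set of satisfying assignments of each relation. If $g_0$ is a polymorphism of every $f\in\Gamma$, then applying $g_0$ to any satisfying assignment of $f$ shows that the all-zero tuple is satisfying whenever $f$ has any satisfying assignment; hence every constraint of the instance is satisfied by the all-zero assignment, so the algorithm just outputs ``yes''. The case of $g_1$ is symmetric. If $\Band$ is a polymorphism, then the relation defined by each $f$ is closed under coordinate-wise $\wedge$, which (by a short combinatorial argument, or by invoking the standard characterisation of meet-closed Boolean relations) means each constraint is equivalent to a conjunction of Horn clauses; the whole instance reduces to Horn-SAT, solvable by unit propagation. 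The case of $\Bor$ is dual and gives dual-Horn-SAT. If $\maj$ is a polymorphism, each constraint is bijunctive (equivalent to a conjunction of $2$-clauses), and the instance reduces to $2$-SAT. Finally, if $\minority$ is a polymorphism, each constraint defines an affine subspace of $\mathbb{F}_2^k$, and the instance reduces to Gaussian elimination over $\mathbb{F}_2$.

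For the hardness direction, the plan is to show that if none of the six functions is a polymorphism of every $f\in\Gamma$, then one can primitively-positively define (using existential quantification, conjunction, and equality) a relation encoding an already-known $\NP$-hard CSP, such as positive $1$-in-$3$-SAT or monotone NAE-SAT. Concretely, I would take six constraints $f_0,g_1,f_{\Band},f_{\Bor},f_{\maj},f_{\minority}\in\Gamma$ witnessing the failure of each polymorphism, i.e.\ for each of the six functions $g$ there exist satisfying tuples of the corresponding $f$ whose coordinate-wise application of $g$ violates $f$. From the failures of $g_0$ and $g_1$ I would conclude that $\Gamma$ can express a non-constant unary relation, hence eventually the pinning relations $\{0\}$ and $\{1\}$ (up to adding auxiliary variables); from the failures of $\Band$ and $\Bor$ I would get a non-monotone relation; from the failures of $\maj$ and $\minority$ I would get that $\Gamma$ is neither bijunctive nor affine. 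Combining these gadgets carefully (using the expressive power already gained from earlier failures to clean up later ones), I would construct a pp-definition of $1$-in-$3$-SAT or $3$-SAT, yielding $\NP$-hardness.

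The main obstacle is the hardness direction, and within it the coherent combination of the six polymorphism failures into a single hard relation. The temptation is to invoke the full machinery of Post's lattice --- these six polymorphisms are precisely the generators of the six coatoms of the lattice of Boolean clones above the clone of projections --- which would immediately give the result by clone theory. However, a self-contained Schaefer-style proof requires bookkeeping: each failure supplies a gadget, but these gadgets may use variables inconsistently and need to be composed with equality (freely available as a pp-quantifier) and sometimes iterated to amplify into the standard hard relations. I would structure this composition by first using the $g_0,g_1$ failures to obtain pinning, reducing to the problem with pinning available, and then use the remaining four failures to force the pp-closure to contain $1$-in-$3$-SAT, which is $\NP$-complete.
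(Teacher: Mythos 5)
The paper does not prove this statement at all: it is Schaefer's dichotomy theorem, quoted in its algebraic formulation from Chen's survey and used as a black box (the paper only ever applies it, in Lemma~\ref{lem:applyschaefer}, by verifying that none of the six polymorphisms is present). So there is no internal proof to compare yours against, and the question is only whether your sketch would stand on its own. The tractable direction of your sketch is essentially correct and standard: closure under $g_0$/$g_1$ makes the all-zero/all-one assignment satisfying for every non-empty relation (with the small caveat that a constraint using an empty relation forces the answer ``no''), and closure under $\Band$, $\Bor$, $\maj$, $\minority$ characterises Horn, dual-Horn, bijunctive and affine relations respectively, reducing to Horn-SAT, dual-Horn-SAT, $2$-SAT and Gaussian elimination.

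The hardness direction, however, is a plan rather than a proof, and the gap sits exactly where the theorem is hard. Two concrete points. First, the failure of $g_0$ and $g_1$ as polymorphisms does not let you pp-define the constant relations $\{0\}$ and $\{1\}$ in general: the argument must split on whether complementation $\neg$ is a polymorphism of $\Gamma$ (it is one of the seven minimal Boolean clones alongside your six, and it is conspicuously absent from Schaefer's tractable list). If $\neg$ is a polymorphism --- e.g.\ $\Gamma$ consisting of the not-all-equal relation --- the constants are not pp-definable, and one must instead work with the disequality relation and target NAE-SAT rather than $1$-in-$3$-SAT; your outline mentions NAE-SAT only in passing and then routes everything through pinning. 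Second, ``combining these gadgets carefully'' is the entire content of the hardness half: one needs either the Galois correspondence together with the fact (from Post's lattice) that a clone containing none of the six operations consists of essentially unary operations, whence every relation (or every complement-closed relation) is pp-definable from $\Gamma$, or else Schaefer's original explicit constructions. As written, your argument establishes the easy half and gestures at the hard half, so it would not be accepted as a self-contained proof --- though for the purposes of this paper a citation, which is what the authors give, is entirely adequate.
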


We start by showing 
that the decision problem $\CSP{\{f\}}$ is $\NP$-hard
when $f$ is a non-trivial arity-$k$ symmetric Boolean formula 
that is self-dual and satisfies $w_0=0$.
In Section~\ref{sec:w00} we will use this fact and the fact $f$ supports equality 
to show that $\Hyper2Spinf$ is NP-hard.

\begin{lemma}\label{lem:applyschaefer}
Suppose $k>2$.
Let $f\neq \zerof^{(k)},\oddf^{(k)}$ be an arity-$k$ symmetric Boolean formula 
that is self-dual and satisfies $w_0=0$. Then
$\CSP{\{f\}}$ is $\NP$-hard.
\end{lemma}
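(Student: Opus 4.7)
The plan is to invoke Schaefer's dichotomy theorem (Theorem~\ref{lem:Schaefer}) applied to the constraint language $\Gamma=\{f\}$, so it suffices to show that none of the six polymorphisms listed in that theorem is a polymorphism of $f$. Throughout, let $\ell^*$ denote the smallest index with $w_{\ell^*}=1$; since $w_0=0$ and $f\neq\zerof^{(k)}$ we have $1\le\ell^*\le k-1$, and self-duality forces $w_{k-\ell^*}=1$, hence $2\ell^*\le k$.

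Four of the six polymorphisms are easy to rule out. The constants $g_0$ and $g_1$ fail because $w_0=w_k=0$ (the second equality by self-duality) while the support of $f$ is nonempty. For $\Band$, the two weight-$\ell^*$ vectors supported respectively on $\{1,\dots,\ell^*\}$ and $\{\ell^*+1,\dots,2\ell^*\}$ are both in the support of $f$, but their coordinate-wise AND is $\zeros$; for $\Bor$, one applies the complementary construction (equivalently, self-duality turns the AND counterexample into one for OR). For $\minority$, the support of $f$ must be an $S_k$-invariant affine subspace of $\mathrm{GF}(2)^k$. A short classification shows that the $S_k$-invariant linear subspaces are $\{\zeros\}$, $\{\zeros,\ones\}$, $\{x:|x|\text{ even}\}$, and $\mathrm{GF}(2)^k$, and hence that the $S_k$-invariant affine subspaces are $\emptyset$, $\{\zeros\}$, $\{\ones\}$, $\{\zeros,\ones\}$, $\{x:|x|\text{ even}\}$, $\{x:|x|\text{ odd}\}$, and $\mathrm{GF}(2)^k$. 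The condition $w_0=0$ rules out any subspace containing $\zeros$; self-duality rules out $\{\ones\}$; nonemptiness rules out $\emptyset$; and the hypothesis $f\neq\oddf^{(k)}$ rules out $\{x:|x|\text{ odd}\}$.

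The main obstacle is showing that $\maj$ is not a polymorphism, which I handle by a case split on $\ell^*$. If $\ell^*=1$, then (using $k\ge 3$) the unit vectors $e_1,e_2,e_3$ all lie in the support but their majority is $\zeros$. If $\ell^*\ge 2$ and $k>2\ell^*$, let $a_i$ count the number of $1$s at coordinate $i$ across three chosen weight-$\ell^*$ vectors, so $\sum_i a_i=3\ell^*$. A simple pigeonhole argument shows that one can select three such vectors with $|\{i:a_i\ge 2\}|=\max(0,3\ell^*-k)<\ell^*$, giving a majority of weight strictly less than $\ell^*$ and hence outside the support. The delicate subcase is $\ell^*\ge 2$, $k=2\ell^*$: self-duality and the minimality of $\ell^*$ then force the support to consist of exactly the weight-$\ell^*$ vectors, so $\maj$ of any three vectors in the support would have to be of weight $\ell^*$ if $\maj$ were a polymorphism. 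I produce an explicit counterexample: the triple
\[
x=e_1+e_2+\cdots+e_{\ell^*},\quad y=e_1+\cdots+e_{\ell^*-1}+e_{\ell^*+1},\quad z=e_2+e_3+\cdots+e_{\ell^*+1}
\]
consists of three weight-$\ell^*$ vectors in the support, but a direct computation shows that their coordinate-wise majority is supported on $\{1,2,\dots,\ell^*+1\}$ and hence has weight $\ell^*+1$, which is not in the support. Having eliminated all six polymorphisms, Schaefer's theorem delivers $\NP$-hardness of $\CSP{\{f\}}$.
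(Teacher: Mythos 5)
Your proof is correct and follows the same top-level strategy as the paper: invoke Schaefer's dichotomy and rule out all six polymorphisms, with the constant polymorphisms and $\Band$/$\Bor$ handled essentially as in the paper (your observation that self-duality plus minimality of $\ell^*$ forces $2\ell^*\leq k$, so that two disjoint weight-$\ell^*$ tuples always exist, lets you skip the paper's extra subcase for $\Band$). The two hardest cases are where you genuinely diverge. For $\maj$, the paper writes $k=6r+3a+b$ and painstakingly shows $w_\ell=0$ for all $\ell\leq 3r+a$, cornering the support near $k/2$ before exhibiting counterexamples; your decomposition into $k>2\ell^*$ (three weight-$\ell^*$ sets arranged so the doubly-covered part has size $\max(0,3\ell^*-k)<\ell^*$) versus $k=2\ell^*$ (where the support is exactly the weight-$\ell^*$ slice and an explicit triple has majority of weight $\ell^*+1$) is shorter and avoids the arithmetic bookkeeping; I checked that the overlap pattern you need is realizable (e.g.\ make the third set disjoint from the first two and let the first two intersect in exactly $\max(0,3\ell^*-k)$ coordinates). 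For $\minority$, your argument is structurally different and cleaner: closure under $x\oplus y\oplus z$ makes the support an affine subspace, symmetry makes it $S_k$-invariant, and the classification of $S_k$-invariant subspaces of $\mathrm{GF}(2)^k$ (only $\{\zeros\}$, $\{\zeros,\ones\}$, the even-weight code, and the full space, hence only seven invariant affine sets) immediately yields the contradiction; the paper instead chases implications among the $w_\ell$'s to force $f=\oddf^{(k)}$ directly. The module-theoretic route buys brevity and reusability at the cost of needing the (standard, but unproved in your sketch) classification of invariant subspaces; if this were written out in full you should include that short argument, since it is the one load-bearing claim you assert without proof.
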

\begin{proof}
We 
will show that~$f$ does not satisfy any of the tractable cases in Schaefer's dichotomy theorem
(Lemma~\ref{lem:Schaefer}). 
Since $f$ is not $\zerof^{(k)}$, it is not
identically zero. Thus, there is a $j$ in the range $1\leq j \leq \lceil k/2\rceil $ such
that  $w_j=1$. We will use this value~$j$ in the cases below.
 
\noindent {\bf Cases~\ref{sch:one} and~\ref{sch:two}}: 
We  first show that $g_0$ is not a polymorphism of~$f$.
 To see this, let $\overline{x}_1$ be any $k$-tuple with $j$ ones
so that $f(\overline{x}_1)=1$. 
Applying the function $g_0$  position-wise, we get
$\overline{y}_{g_0}(\overline{x}_1,\overline{x}_2)
= (y_1,\ldots,y_k)=   (g_0(x_{1,1}),\ldots,g_0(x_{1,k})) = (0,\ldots,0)$.
But since $w_0=0$,
$f(0,\dots,0)=0$, contrary to the fact that
$f(y_1,\ldots,y_k)$ would have to be~$1$ if $g_0$ were a polymorphism of~$f$. Similarly, 
 applying $g_1$ componentwise to $\overline{x}_1$ we get
$ \overline{y}_{g_1}(\overline{x_1},\overline{x_2})
=(1,\ldots,1)$. Since $w_k=0$ (by self-duality),
$f(1,\ldots,1)=0$ so $g_1$ is not a polymorphism of~$f$.

\noindent {\bf Cases~~\ref{sch:three} and \ref{sch:four}}:
Instead of doing both cases, we first use self-duality to argue that 
if $\Bor$ is a polymorphism of~$f$ then so is~$\Band$.
For this, suppose that $\Bor$ is a polymorphism of~$f$.
Let $\overline{x}_1$ and $\overline{x}_2$ be two tuples 
with $f(\overline{x}_1)=f(\overline{x}_2)=1$.
For $i\in \{1,2\}$, let $\neg\overline{x}_i$ be the position-wise Boolean complement of 
$\overline{x}_i$.
By self-duality, 
$f(\neg \overline{x}_1)=f(\neg \overline{x}_2)=1$ so
$f(y_{\Bor}(\neg\overline{x}_1,\neg\overline{x}_2))=1$. 
But $y_{\Bor}(\neg\overline{x}_1,\neg\overline{x}_2)$
is the position-wise Boolean complement of
$y_{\Band}(\overline{x}_1,\overline{x}_2)$,
so by self-duality, we also have
$f(y_{\Band}(\overline{x}_1,\overline{x}_2))=1$, establishing that $\Band$ is also a polymorphism of~$f$.
Thus, we can complete both cases by just showing that $\Band$ is actually not a polymorphism of~$f$.

Recall the value~$j \leq \lceil k/2 \rceil$ from above.
We first deal with the simplest case where 
there is a $j\leq k/2$ with $w_j=1$.
 Consider two tuples 
$\overline{x}_1$ and $\overline{x}_2$, each with $j$ ones, 
chosen so
that there is no position $\ell$ with $x_{1,\ell} = x_{2,\ell} = 1$.
This is possible since $j \leq k/2$.
Then $f(\overline{x}_1) = f(\overline{x}_2)=1$ but
$\overline{y}_{\Band}(\overline{x}_1,\overline{x}_2) = (0,\ldots,0)$. Now
  $w_0=0$, so 
$f(\overline{y}_{\Band}(\overline{x}_1,\overline{x}_2))=0$ and we have shown that
$\Band$ is not a polymorphism of~$f$.

We now deal with the remaining case. We have
$j= \lceil k/2 \rceil$ and $w_{j}=1$ and every   $\ell\neq j$ has $w_\ell=0$.
In this case we can consider any distinct tuples $\overline{x}_1$ and $\overline{x}_2$
with exactly  $j$ ones.
Then $\overline{y}_{\Band}(\overline{x}_1,\overline{x}_2)$
  has fewer than $j$ ones
  so 
       $f(\overline{y}_{\Band}(\overline{x}_1,\overline{x}_2))=0$, so 
$\Band$ is not a polymorphism of~$f$.

\noindent {\bf Case~\ref{sch:five}}:   
Suppose that $\maj$ is a polymorphism of~$f$. 
We will derive a contradiction. In order to simplify the tedious 
special cases arising from floors and ceilings we write $k$ as $k= 6 r + 3 a + b$
where $r$ is a non-negative integer, $a \in \{0,1\}$ and $b \in \{0,1,2\}$.

First, the fact that $\maj$ is a polymorphism of~$f$ implies that for every
$\ell \leq 2r+a$, we have $w_\ell=0$.  
To see this, let $\overline{x}_1$, $\overline{x}_2$ and $\overline{x}_3$  be three $k$-tuples, each with $\ell$ ones, 
such that there is no  position 
$p$ with more than  a single one amongst $x_{1,p}$, $x_{2,p}$ and $x_{3,p}$.
This is possible since $\ell \leq k/3$. Then $\overline{y}_{\maj}(\overline{x}_1,\overline{x}_2,\overline{x}_3) = 
  (0,\ldots,0)$.
So if $\maj$ is a polymorphism of~$f$
we must have that one of $f(\overline{x}_1)$,
$f(\overline{x}_2)$ and $f(\overline{x}_3)$ is~$0$
(which means, by symmetry of~$f$, that all of them are~$0$), so $w_\ell=0$.
 
Now consider 
any integer~$\ell$ in the range $2r+a < \ell \leq  3r+a$. 
Specifically, for an integer~$s$ in the range $ 1\leq s \leq   r$, let
$\ell = s+ 2r+a$.
  Consider three $k$-tuples 
$\overline{x}_1$, $\overline{x}_2$ and $\overline{x}_3$, each 
  with $\ell$ ones, such that $s$  positions have ones in all three tuples, 
  $s$  positions have ones in tuples 
$\overline{x}_1$ and $\overline{x}_2$,  
  and the remaining  positions have a one in exactly one tuple.  
  Then $\overline{y}_{\maj}(\overline{x}_1,\overline{x}_2,\overline{x}_3)$ 
   has $2s$ ones.   Since 
$2s\leq  2r$, we have $w_{2s}=0$.
So if $\maj$ is a polymorphism we must have 
$f(\overline{x}_1) = f(\overline{x}_2) = f(\overline{z}_3)=0$ so $w_\ell=0$.

So the only possible values of~$j$
where we could have $w_j=1$  satisfy
$j> 3r+a$.
By self-duality, they also satisfy
$k-j > 3r+a$
so $ 3r+a + 1 \leq j < 3r+2a+b$. 
Since $a+1 < 2a+b$,
the pair $(a,b)$ is  in the set $\{(0,2),(1,1),(1,2)\}$.
 So the three possibilities are
\begin{itemize}
\item $(a,b)=(0,2)$ so $k = 6r+2$ is even and
the only $j$ with $w_j=1$ satisfies 
$3r+1 \leq j < 3r+2$ so
$j=3r+1=k/2$.
\item $(a,b)=(1,1)$ so $k= 6r+3+1$ is even and
the only $j$ with $w_j=1$ satisfies $3r+2 \leq j < 3r+3$
so $j=3r+2=k/2$.
\item $(a,b)=(1,2)$ so $k=6r+3+2$ is odd and
the only $j$ with $w_j=1$ satisfy
$3r+2 \leq j < 3r+4$ so by self-duality,  there exactly two values $w_j$
that are non-zero, and these
are $j=3r+2 = \lfloor k/2 \rfloor$ and $j=3r+3 = \lceil k/2 \rceil$.
\end{itemize}

There must be a $j$ with $w_j\neq 0$ since $f$ is not the constant zero function.
We show that in all three cases $\maj$ is not a polymorphism.
We take the first two cases together, so suppose that $k$ is even,
and that there is exactly one positive $w_j$ which is $w_{k/2}$.
Since $k$ is even and greater than~$2$, it is at least~$4$.
Choose $\overline{x}_1$ with $1$'s in positions $1,\ldots,k/2$
and $\overline{x}_2$ with $1$'s in positions $2,\ldots,k/2+1$.
Choose $\overline{x}_3$ with $1$'s in positions $k/2+1,\ldots,k-1$ and $1$.
Then $\overline{y}_{\maj}(\overline{x}_1,\overline{x}_2,\overline{x}_3) $
has   $(k/2)+1$ ones (in positions $1,\ldots,k/2+1$) so $f(\overline{y}_{\maj}(\overline{x}_1,\overline{x}_2,\overline{x}_3))=0$ and $\maj$ is not a polymorphism.

The final case is similar.
Suppose that $k= 2 t + 1$ and
that there are exactly two positive $w_j$'s which are $w_t$ and $w_{t+1}$.
Choose $\overline{x}_1$ with $1$'s in positions $1,\ldots,t+1$.
Choose $\overline{x}_2$ with $1$'s in positions $1,\ldots,t$ and $t+2$.
Choose $\overline{x}_3$ with $1$'s in positions $t+1,\ldots,2t+1$.
Then $\overline{y}_{\maj}(\overline{x}_1,\overline{x}_2,\overline{x}_3) $
has   $t+2$ ones (in positions $1,\ldots,t+2$) so $f(\overline{y}_{\maj}(\overline{x}_1,\overline{x}_2,\overline{x}_3))=0$ and $\maj$ is not a polymorphism.

\noindent {\bf Case~\ref{sch:six}}:  
Suppose that $\minority$ is a polymorphism of~$f$. We will derive 
some consequences about the $w_\ell$ values which will give us a contradiction.

First,  
there is no index $\ell$ such that $w_\ell=1$ and $w_{\ell-2}=0$.
Clearly this is not the case for $\ell=k$ since $w_k=0$.
Suppose for contradiction that it is true for some $2 \leq \ell<k$.
Construct tuples $\overline{x}_1$, $\overline{x}_2$ and $\overline{x}_3$ each
with $\ell$ ones,  
such that the first 
$\ell-2$ positions have ones in all three tuples,
and in each of the next three positions there is exactly one zero.
Any remaining positions are all zero.
This is possible since $\ell+1 \leq k$.
 Then $\overline{y}_{\minority}(\overline{x}_1,\overline{x}_2,\overline{x}_3)$ has $\ell-2$ ones
 so if $\minority$ is a polymorphism of~$f$ then 
 $f(\overline{y}_{\minority}(\overline{x}_1,\overline{x}_2,\overline{x}_3))=1$ so $w_{\ell-2}=1$,
 contradicting the assumption.
 
Next, 
there is no odd index $3\leq j\leq k/2$
such that $w_j=0$ and $w_{j-2}=1$. 
Suppose for contradiction that this is true for some $j=2r+1$.
Construct tuples $\overline{x}_1$, $\overline{x}_2$ and $\overline{x}_3$ as follows.
\begin{itemize}
\item
The first $r-1$ positions have ones in tuples $\overline{x}_1$ and $\overline{x}_3$.
\item
The next $r-1$ positions have ones in tuples $\overline{x}_2$ and $\overline{x}_3$.
\item The next $r$ positions have ones in tuple $\overline{x}_1$ only.
\item The next $r$ positions have ones in tuple $\overline{x}_2$ only.
\item The next position has a one in tuple $\overline{x}_3$ only.
\item Any remaining positions are all zero.
\end{itemize}
Each tuple has $2r-1=j-2$ ones. The construction is possible since
$2(r-1)+2r +1 = 4r-1 \leq 4r+2 = 2j \leq k$.
Then $\overline{y}_{\minority}(\overline{x}_1,\overline{x}_2,\overline{x}_3)$ has $2r+1=j $ ones
 so and if $\minority$ is a polymorphism of~$f$ then 
 $f(\overline{y}_{\minority}(\overline{x}_1,\overline{x}_2,\overline{x}_3))=1$ so $w_{j}=1$,
 contradicting the assumption.
 
The first fact rules out the possibility that there is an even index $w_\ell$ with $w_\ell=1$.
This is ruled out because $w_0=0$ and we can derive a contradiction by considering the smallest
even~$\ell$ such that $w_\ell=1$.
This also tells us that $k$ is even.
This follows because there has to be some $j$ with $w_j=1$ 
and from the above, $j$ has to be odd.
But if $k$ is odd then  $k-j$ is even, yet self-duality would imply that $w_{k-j}=1$.

So since $f$ is not the trivial all-zero function~$\zerof^{(k)}$,
there is some (odd) $j \leq k/2$ with $w_j=1$.
Take $j$ as large as possible. 
The first fact tells us that for all odd $\ell<j$, $w_\ell=1$.
The second fact tells us that for all odd $\ell$ between $j$ and $k/2$, $w_\ell=1$.
Thus, all odd~$\ell$ have $w_\ell=1$.
 This implies that  $f=\oddf^{(k)}$, contrary to the statement of the lemma.
\end{proof}

Before returning to our main proof, we present one more lemma
that uses the language of polymorphisms, and we use this lemma to prove Observation~\ref{obs:CSP}, which
supports 
our interpretation of existing literature in the introduction to this paper.

\begin{lemma}
\label{lem:notIM2}
Suppose that $k\geq 3$. Let $f$ be a symmetric  $k$-ary Boolean 
function that is not in $\EASYk$.
Then either the arity-2 Boolean function $\Band$ is not a polymorphism of~$f$
or the arity-2 Boolean function $\Bor$ is not a polymorphism of~$f$ (or both).
\end{lemma}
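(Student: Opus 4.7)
The plan is to prove the contrapositive: assuming both $\Band$ and $\Bor$ are polymorphisms of~$f$, show that $f \in \EASYk$. Write $S = \{\ell \in \{0,\ldots,k\} : w_\ell = 1\}$ for the weight set on which $f$ takes value~$1$. Since $f$ is symmetric, the polymorphism conditions translate into closure properties of~$S$: given $a, b \in S$, any integer $c$ with $\max(0,a+b-k) \leq c \leq \min(a,b)$ arises as $|\overline{x}_1 \wedge \overline{x}_2|$ for some Boolean $k$-tuples $\overline{x}_1, \overline{x}_2$ of weights $a, b$ (choose $c$ common-$1$ positions, $a-c$ positions with only $\overline{x}_1$ equal to $1$, and $b-c$ positions with only $\overline{x}_2$ equal to $1$), so if $\Band$ is a polymorphism then every such~$c$ lies in~$S$; analogously, closure under $\Bor$ gives that for all $a,b \in S$ and every $c$ with $\max(a,b) \leq c \leq \min(k, a+b)$, $c \in S$.

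Assuming both closures, I will show $S$ must be one of $\emptyset$, $\{0\}$, $\{k\}$, $\{0,k\}$, or $\{0,1,\ldots,k\}$, corresponding respectively to $\zerof^{(k)}, \allzerof^{(k)}, \allonef^{(k)}, \eqf^{(k)}, \onef^{(k)}$, all in $\EASYk$. The case $S = \emptyset$ gives $\zerof^{(k)}$. Otherwise, let $s_{\min} = \min S$ and $s_{\max} = \max S$. Applying $\Band$-closure to $a = b = s_{\min}$ forces $\max(0, 2 s_{\min} - k) \in S$; minimality of $s_{\min}$ then forces $s_{\min} = 0$ or $s_{\min} = k$, since $2 s_{\min} - k < s_{\min}$ whenever $0 < s_{\min} < k$. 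Symmetrically, the $\Bor$-closure on $a = b = s_{\max}$ forces $s_{\max} \in \{0,k\}$. If $s_{\min} = s_{\max}$ we obtain $S \in \{\{0\}, \{k\}\}$, i.e., $f \in \{\allzerof^{(k)}, \allonef^{(k)}\}$; otherwise $\{0,k\} \subseteq S$, and $S = \{0,k\}$ yields $f = \eqf^{(k)}$.

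The remaining case is that $0, k \in S$ and there exists $t \in S$ with $0 < t < k$; take $t$ to be the smallest such element. The $\Band$-closure applied to $a = b = t$ shows the whole interval $[\max(0, 2t-k),\, t]$ lies in~$S$. If $2t > k$, then $0 < 2t - k < t$ (using $t < k$), producing a positive element of $S$ strictly smaller than $t$, contradicting minimality. Hence $2t \leq k$, so $[0, t] \subseteq S$ and in particular $1 \in S$, forcing $t = 1$. A direct induction now finishes: if $j \in S$ with $1 \leq j < k$, applying $\Bor$-closure to $a = j$, $b = 1$ gives $j + 1 \in S$, so by iteration $\{0, 1, \ldots, k\} \subseteq S$ and $f = \onef^{(k)}$. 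All cases place $f$ in $\EASYk$, completing the argument.

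The proof is essentially a careful case analysis on the structure of~$S$, and the only mildly delicate step is the subcase $2t > k$ when isolating the smallest intermediate element~$t$; once $2t - k$ is identified as a smaller positive element of~$S$, every step reduces to an elementary inequality, so I do not anticipate any genuine obstacle.
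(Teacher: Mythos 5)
Your proof is correct. You prove the contrapositive by translating the two polymorphism conditions into interval-closure properties of the weight set $S=\{\ell: w_\ell=1\}$ (for $a,b\in S$, the $\Band$-closure gives $[\max(0,a+b-k),\min(a,b)]\subseteq S$ and the $\Bor$-closure gives $[\max(a,b),\min(k,a+b)]\subseteq S$), and then you classify all sets closed under both operations as $\emptyset$, $\{0\}$, $\{k\}$, $\{0,k\}$, or $\{0,\ldots,k\}$. Each step checks out: the self-applications $a=b=s_{\min}$ and $a=b=s_{\max}$ correctly force the extremes into $\{0,k\}$, the minimality argument for the smallest interior element $t$ correctly rules out $2t>k$ and pins $t=1$, and the $\Bor$-induction with $b=1$ fills in the rest.

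The paper's proof goes the other way: it argues directly that if $f\notin\EASYk$ then the weight sequence must contain a ``step,'' i.e.\ an index $j$ with $w_j=1,w_{j+1}=0$ (in which case two weight-$j$ tuples overlapping in all but one position have an $\Bor$-image of weight $j+1$, so $\Bor$ fails) or an index $j$ with $w_j=0,w_{j+1}=1$ (two weight-$(j+1)$ tuples overlapping in all but one position have a $\Band$-image of weight $j$, so $\Band$ fails). The underlying combinatorial fact is the same in both arguments --- shifting equal-weight tuples against each other moves the weight of the $\Band$/$\Bor$ image --- but the decompositions differ. The paper's version is shorter, using only adjacent weights and a single explicit violating pair per case. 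Yours is longer but buys a complete characterisation of the symmetric functions admitting both polymorphisms (exactly the five listed), and in particular it explicitly recovers the $\eqf^{(k)}$ case, which the paper's closing sentence omits from its list of surviving functions (harmlessly, since $\eqf^{(k)}\in\EASYk$, but your classification is the tidier statement). No gaps.
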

\begin{proof}
We break the analysis into two cases.
\begin{description}
\item {\bf Case 1. There is an index $j$ in the range $1 \leq j \leq k-1$ such that $w_j = 1$ and $w_{j+1}=0$:}\quad
Let $\overline{x}_1$ have  ones in the first $j$ positions (and only in those positions) and
let $\overline{x}_2$ have ones in positions $2,\ldots,j+1$ (only).
Then $f(\overline{x}_1) = f(\overline{x}_2)=1$.
But $\overline{y}_{\Bor}(\overline{x}_1,\overline{x}_2))$ has $j+1$ ones so
$f(\overline{y}_{\Bor}(\overline{x}_1,\overline{x}_2))=0$ and $\Bor$ is not a polymorphism of~$f$.

\item {\bf Case 2. There is an index $j$ in the range $0 \leq j \leq k-2$ such that $w_j=0$ and $w_{j+1}=1$:}\quad
Let $\overline{x}_1$ have  ones in the first $j+1$ positions (and only in those positions) and
let $\overline{x}_2$ have ones in positions $2,\ldots,j+2$ (only).
Then $f(\overline{x}_1) = f(\overline{x}_2)=1$.
But $\overline{y}_{\Band}(\overline{x}_1,\overline{x}_2))$ has $j$ ones so
$f(\overline{y}_{\Band}(\overline{x}_1,\overline{x}_2))=0$ and $\Band$ is not a polymorphism of~$f$.
\end{description}
If neither  Case~1 nor Case~2 applies then $f$ must be one of the four 
functions $\zerof^{(k)}$, $\onef^{(k)}$, $\allzerof^{(k)}$ and $\allonef^{(k)}$  defined by
 $\zerof^{(k)}(x_1,\ldots,x_k)=0$,
$\onef^{(k)}(x_1,\ldots,x_k)=1$, 
$$\allzerof^{(k)}(x_1,\ldots,x_k)=\mathbf{1}\{x_1=\hdots=x_k=0\},$$ and
$$\allonef^{(k)}(x_1,\ldots,x_k)=\mathbf{1}\{x_1=\hdots=x_k=1\}.$$
All four of these functions are in~$\EASYk$.
\end{proof}

\begin{observation}\label{obs:CSP}
Suppose that $\Delta \geq 6$ and $k\geq 3$ and that $f$ is a symmetric $k$-ary Boolean 
function that is not in $\EASYk$. Then there is no FPRAS for $\nCSP{\{f,\delta_0,\delta_1\}}$ unless $\NP=\RP$.\end{observation}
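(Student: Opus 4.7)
The plan is to reduce the problem to Theorem~24 of \cite{DGJR}, using the polymorphism-based criterion to show that the constraint language $\Gamma = \{f,\delta_0,\delta_1\}$ is hard to approximate in the bounded-degree setting.

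First I would recall the conservative framework. The presence of both pinning functions $\delta_0$ and $\delta_1$ in $\Gamma$ means that any polymorphism $g$ of $\Gamma$ must simultaneously be a polymorphism of $\delta_0$ and $\delta_1$; this forces $g(0,\ldots,0)=0$ and $g(1,\ldots,1)=1$, so $g$ must be idempotent. Consequently the unary constants $g_0$ and $g_1$ from Schaefer's theorem (Lemma~\ref{lem:Schaefer}) are automatically ruled out as polymorphisms of $\Gamma$. Moreover, since $\delta_0$ and $\delta_1$ trivially admit $\Band$, $\Bor$, $\maj$ and $\minority$ as polymorphisms, a polymorphism of $\Gamma$ from these four is the same as such a polymorphism of $f$ alone. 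Thus the only possible ``tractable'' polymorphisms for $\Gamma$ from Schaefer's list are $\Band$, $\Bor$, $\maj$ and $\minority$.

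The key step invokes Lemma~\ref{lem:notIM2}: because $f$ is a symmetric $k$-ary Boolean function with $k\geq 3$ and $f\notin \EASYk$, at least one of $\Band$ and $\Bor$ fails to be a polymorphism of $f$. This immediately rules out that polymorphism from being a polymorphism of $\Gamma$. Theorem~24 of \cite{DGJR} is stated in terms of the absence of certain polymorphism classes (specifically, the ``IM$_2$'' class, which corresponds precisely to having $\Band$ or $\Bor$ as a polymorphism in the conservative setting): it asserts that whenever these survive neither in the constraint language, bounded-degree approximate counting becomes intractable at $\Delta\geq 6$. I would cite this theorem verbatim and check the hypothesis against the polymorphism information extracted from Lemma~\ref{lem:notIM2}; the conclusion is precisely that $\nCSP{\{f,\delta_0,\delta_1\}}$ admits no FPRAS unless $\NP=\RP$.

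The only non-routine part is ensuring that the formulation of Theorem~24 of \cite{DGJR} matches exactly how we wish to apply it: in particular, that the absence of $\Band$ (or $\Bor$) as a polymorphism of $f$, combined with the conservative pinnings $\delta_0,\delta_1$, is enough to trigger the inapproximability conclusion at $\Delta\geq 6$. I expect this to be immediate once the theorem is restated in our notation, since the conservative pinnings are precisely what lets one transplant the hardness reduction from the unrestricted CSP setting into the bounded-degree setting by ``freezing'' arguments as needed. No further gadgetry is required beyond what is already supplied by $\delta_0$ and $\delta_1$, so the observation follows.
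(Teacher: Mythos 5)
Your overall strategy matches the paper's: cite Theorem~24 of \cite{DGJR} and use Lemma~\ref{lem:notIM2} to discharge its hypotheses. But as written there are two genuine gaps.

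First, you never address the \emph{affine} exception. Theorem~24 of \cite{DGJR} leaves open (indeed, tractable) not only the IM-conj case but also the case where $f$ is affine over $\mathrm{GF}_2$; if $f$ were affine (e.g.\ a parity constraint) the claimed hardness would simply be false, since the partition function is then exactly computable. The paper disposes of this by observing that every symmetric affine function is already in $\EASYk$ (constants, equality, and the parity functions $\evenf^{(k)},\oddf^{(k)}$), so the hypothesis $f\notin\EASYk$ rules it out. Your proposal lists $\minority$ among the ``possible tractable polymorphisms'' and then never returns to it; your discussion of idempotency and of $g_0,g_1,\maj$ is aimed at Schaefer's decision dichotomy rather than at the actual exceptional cases of Theorem~24, and the one exceptional case that genuinely needs an argument is missing.

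Second, you characterise IM$_2$ as ``having $\Band$ \emph{or} $\Bor$ as a polymorphism'' and then state that hardness follows ``whenever these survive neither.'' That is the wrong characterisation, and with it your invocation of Lemma~\ref{lem:notIM2} would not suffice: that lemma only shows that \emph{at least one} of $\Band,\Bor$ fails to be a polymorphism of $f$, not that both fail. Indeed both conclusions of the lemma cannot in general be strengthened --- for the weak independent set function ($f=1$ iff some argument is $0$), $\Band$ \emph{is} a polymorphism. The correct statement, via the Galois correspondence cited in the paper, is that $f\in\mathrm{IM}_2$ iff $f$ has \emph{both} $\Band$ and $\Bor$ as polymorphisms; with that reading, Lemma~\ref{lem:notIM2} is exactly what is needed to exclude membership in IM-conj. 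So the second gap is repairable by correcting the characterisation, but the first requires an additional argument that you have not supplied.
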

\begin{proof}
Theorem~24 of \cite{DGJR}   gives the result unless
(i) $f$ is affine (given by a linear equation over GF$_2$), or
(ii) $f$ is in a set of functions called IM-conj.
All symmetric affine functions are in $\EASYk$.
IM-conj is the same as the class IM$_2$ studied in~\cite{BoolBlocks, CKZ}. 
As described in these works, the Galois correspondence between Post's lattice and  its
dual 
shows that $f$ is in IM$_2$ if and only if it has both $\Band$ and $\Bor$ as polymorhisms.
Thus, by Lemma~\ref{lem:notIM2} this case does not arise.
 \end{proof}

\subsubsection{The case $w_0=0$}
\label{sec:w00}
 
We now return to our proof.
By assumption,  $k\geq 3$ and  $f$ is a symmetric arity-$k$ Boolean function 
that is not in $\EASYk$. Since we are in Case~II, the function~$f$
supports $2$-equality
but it does not support pinning-to-0 or pinning-to-1.  
By Lemma~\ref{lem:selfdual}, we know that $f$ is self-dual.
We are interested in the case $w_0=0$ and we know from Lemma~\ref{lem:applyschaefer}
that a related decision CSP problem is $\NP$-hard.
We wish to use the hardness of the CSP decision problem to
show hardness of our bounded-degree counting problem.
We will use the fact that $f$ supports equality 
to introduce degree bounds and also to move to the more restricted
hypergraph 2-spin model where repeated variables are not allowed.
   The following technical lemma is inspired by   techniques from \cite{SATbounded}.

\begin{lemma}\label{lem:exactequality}
Suppose $k>2$.  
Let $f\neq \zerof^{(k)},\oddf^{(k)}$ be an arity-$k$ symmetric Boolean formula 
that is self-dual and satisfies $w_0=0$. Then
 there is a hypergraph $H$  
 with $Z_{f;H}>0$  
 which contains vertices $x$ and $y$   such that
for any configuration $\sigma:V(H) \rightarrow \{0,1\}$ with
 $w_{f;H}(\sigma)>0$, we have
  $\sigma(x)=\sigma(y)$.
\end{lemma}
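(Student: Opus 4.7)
The plan is to leverage the NP-hardness of $\CSP{\{f\}}$ established in Lemma~\ref{lem:applyschaefer}, together with the self-duality of $f$ (via Lemma~\ref{lem:qwer}), and to follow the expander-style gadget strategy suggested in~\cite{SATbounded}. My starting observation is that by Lemma~\ref{lem:qwer} the set of satisfying assignments of any $f$-hypergraph is closed under global complementation, so the projection onto any pair of distinguished vertices $(x, y)$ is a complement-closed subset of $\{0,1\}^2$ --- one of $\emptyset$, $\{(0,0),(1,1)\}$, $\{(0,1),(1,0)\}$, or the full set. My aim is to exhibit a hypergraph whose projection is exactly $\{(0,0),(1,1)\}$.

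By Lemma~\ref{lem:applyschaefer} the decision problem $\CSP{\{f\}}$ is $\NP$-hard, so $\{f\}$ is rich enough, via Schaefer's dichotomy (Lemma~\ref{lem:Schaefer}) and the Pol--Inv Galois correspondence, to pp-define the equality relation $\{(0,0),(1,1)\}$ (which, being preserved by every function, trivially lies in the co-clone of $\{f\}$). I would write this pp-definition as a concrete $\{f\}$-CSP instance with distinguished variables $x, y$ whose projection onto $(x,y)$ is exactly~$\{(0,0),(1,1)\}$, at the cost of allowing $f$-atoms in which some variables occur in more than one position --- something a CSP instance permits but a hypergraph does not.

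The heart of the argument, and the step I expect to be hardest, is converting this CSP instance into a genuine hypergraph in which each hyperedge is a set of distinct vertices. My plan is to split each repeated occurrence of a variable within an $f$-atom into a fresh copy and then to ``glue'' the copies back together by iterated use of the approximate 2-equality gadget of Lemma~\ref{lem:classify} (available because we are in Case~II), in the spirit of the expander-based substitutions of~\cite{SATbounded}. The crucial feature that lets me upgrade from approximate to \emph{exact} equality is that $f$ is Boolean: each configuration has weight $0$ or $1$, so an offending configuration with $\sigma(x) \neq \sigma(y)$ either survives in the final hypergraph or does not, with no ``rare but non-zero'' middle ground. Verifying that after sufficiently many iterations no such offending configuration remains --- while at least one configuration with $\sigma(x) = \sigma(y)$ still does --- is the main technical obstacle; once this is achieved, $Z_{f;H} > 0$ and the forcing $\sigma(x) = \sigma(y)$ both follow from the underlying pp-definition together with Lemma~\ref{lem:qwer}.
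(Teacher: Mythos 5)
There is a genuine gap, and it is fatal to the proposed route. Your plan to ``upgrade from approximate to exact equality'' because $f$ is Boolean gets the logic backwards. An $\epsilon$-realisation of $2$-equality (Definition~\ref{def:equality}, Lemma~\ref{lem:gadgets}) only guarantees $\mu_{f;H}(\sigma_x\neq\sigma_y)\leq\epsilon$; with $0/1$ weights this means the \emph{fraction} of satisfying configurations with $\sigma_x\neq\sigma_y$ is small, not that it is zero. In the powering construction of Lemma~\ref{lem:gadgets}, the unequal configurations contribute weight proportional to $q^r$ where $q=\mu(\sigma_x=0,\sigma_y=1)\mu(\sigma_x=1,\sigma_y=0)$; if $q>0$ then $q^r>0$ for every $r$, so offending configurations with $w_{f;H}(\sigma)>0$ and $\sigma(x)\neq\sigma(y)$ survive every iteration. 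No amount of iterated gluing with these gadgets produces the \emph{exact} forcing that the lemma demands. A second, independent gap is the pp-definability step: the fact that the equality relation is invariant under all polymorphisms of $\{f\}$ places it in the co-clone $\langle\{f\}\rangle$, but the co-clone is by definition closed under pp-definitions that may use equality atoms as primitives, so invoking it to \emph{define} equality from $f$-atoms alone is circular. Whether equality is definable by an $\{f\}$-formula without equality atoms (and then, further, without repeated variables inside an atom) is essentially the content of the lemma itself, and the NP-hardness of $\CSP{\{f\}}$ from Lemma~\ref{lem:applyschaefer} does not hand it to you.

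The paper's proof sidesteps both problems with a direct combinatorial argument that uses only $w_0=w_k=0$ and $f\neq\zerof^{(k)}$: the complete $k$-uniform hypergraph on $2k-1$ vertices is unsatisfiable (by pigeonhole some hyperedge is monochromatic), so one can pass to a hyperedge-minimal unsatisfiable hypergraph $H'$; picking a hyperedge $e$ of $H'$ and replacing its shared vertices one at a time by fresh copies, there is a first step $j$ at which the partition function becomes positive, and at that step every satisfying configuration must have $\sigma(u_j)\neq\sigma(v_j)$ (else it would restrict to a satisfying configuration of the previous, unsatisfiable hypergraph). Duplicating that step with a second fresh vertex $u_j'$ forces $\sigma(u_j)=\sigma(u_j')$ exactly. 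If you want to keep the spirit of your plan, this ``minimal unsatisfiable core'' device is the missing idea that converts approximate information into an exact constraint.
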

\begin{proof} 
We first prove that there exists a hypergraph $H_0=(V_0,\mathcal{F}_0)$ such that $Z_{f;H_0}=0$. To see this, consider the complete $k$-uniform hypergraph on $2k-1$ vertices, i.e., $V_0=\{1,\hdots,2k-1\}$ and $\mathcal{F}_0$ is the set of all $k$-element subsets of $V_0$. Consider an arbitrary assignment $\sigma:V_0\rightarrow \{0,1\}$. Under $\sigma$, there exist $k$ vertices which have the same spin, w.l.o.g. assume that these vertices are $1,\hdots,k$. Since $w_0=w_k=0$ (note that $w_k=0$ by self-duality), it follows that the hyperedge $\{1,\hdots,k\}\in \mathcal{F}_0$ is not satisfied under $\sigma$. Since $\sigma$ was arbitrary, this proves that $Z_{f;H_0}=0$.
By removing hyperedges of $H_0$ successively, we obtain a hypergraph $H'=(V',\mathcal{F}')$ such that $Z_{f;H'}=0$ and for every $e\in \mathcal{F}'$, it holds that $Z_{f;H'\setminus e}>0$. 
Since $f$ is not the all-zero function $\zerof^{(k)}$, we can conclude that $H'$ has at least one hyperedge.

Choose $e\in \mathcal{F}'$  and let $S\subseteq e$ be the set of vertices of~$e$ that have non-zero degree in $H'\setminus e$. 
By the minimality of $H'$, we have $S\neq \emptyset$. 
Let $i = |S|$. Denote the vertices in~$S$ by $v_1,\ldots,v_i$ and the vertices in $e\setminus S$ by $v_{i+1},\ldots,v_k$.
    Consider $i$ new vertices $u_1,\hdots,u_{i} \notin V'$ 
and for each $t\in\{0,\ldots,i\}$,  let $e_t=\{u_1,\hdots,u_t,v_{t+1},\hdots,v_{i}\}\cup (e\setminus S)$. Further, consider the hypergraphs $H_t=(V_t,\mathcal{F}_t)$ where $V_t=V'\cup \{u_1,\hdots,u_t\}$ and $\mathcal{F}_t=(\mathcal{F}'\setminus \{e\})\cup \{e_t\}$. 
Note that $e_0=e$ and $e_i$ has no vertices that are in other hyperedges of~$H'$.
By the minimality of $H'$,  we can conclude that $Z_{f;H_i}>0$. Let $j$ be the smallest integer such that $Z_{f;H_j}>0$.  Then $1\leq j\leq k$.  Also, for every $\sigma:V_j\rightarrow\{0,1\}$ with $w_{f;H_j}(\sigma)>0$, it must hold that $\sigma(u_j)\neq \sigma(v_j)$ (otherwise, 
we would have $w_{f;H_{j-1}}(\sigma|_{V_{j-1}})>0$, contradicting
the fact that $Z_{f;H_{j-1}}=0$).

Let $H=(V_j\cup \{u_j'\},\mathcal{F}_j\cup \{e_j'\})$,  where $e_j'=\{u_1,\hdots,u_{j-1},u_j',v_{j+1},\hdots,v_{i}\}\cup (e\setminus S)$. As above, we conclude that 
$Z_{f;H}>0$. Also,
for every 
$\sigma:V_j\cup \{u_j'\}\rightarrow\{0,1\}$ with $w_{f;H}(\sigma)>0$ it must hold that $\sigma(u_j)\neq \sigma(v_j)$ and $\sigma(u_j')\neq \sigma(v_j)$, so $\sigma(u_j)=\sigma(u_j')$.
So the vertices $x$ and $y$ in the statement of the lemma can be taken to be $u_j$ and $u'_j$.
\end{proof}

We will now 
 combine Lemmas~\ref{lem:applyschaefer} and~\ref{lem:exactequality} to conclude the following.
\begin{lemma}\label{lem:w0goal}
Suppose $k>2$.  
Let $f\neq \zerof^{(k)},\oddf^{(k)}$ be an arity-$k$ symmetric Boolean formula 
that is self-dual and satisfies $w_0=0$.
Then there is a $\Delta_0$ such that for every $\Delta \geq \Delta_0$, there exists $c>1$ such that 
$\Hyper2Spinf$ is $\NP$-hard. 
\end{lemma}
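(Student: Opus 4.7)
The plan is to reduce the decision problem $\CSP{\{f\}}$, which is $\NP$-hard by Lemma~\ref{lem:applyschaefer}, to the approximation problem $\Hyper2Spinf$. The bridge is the exact equality gadget supplied by Lemma~\ref{lem:exactequality}: a finite hypergraph $H_{\eq}$ with $Z_{f;H_{\eq}}>0$ and two distinguished vertices $x,y$ such that every configuration with positive weight has $\sigma(x)=\sigma(y)$. Let $d_x$ and $d_y$ denote the degrees of $x$ and $y$ in $H_{\eq}$, and set $\Delta_0:=1+d_x+d_y$.

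Given any CSP instance $I=(V,\mathcal{S})$ with arity-$k$ constraints (where a variable may occur several times within a constraint), I will construct a $k$-uniform hypergraph $H'$ of maximum degree at most $\Delta_0$ as follows. For each variable $v\in V$, let $d_v$ denote the total number of occurrences of $v$ in $\mathcal{S}$ (counted with multiplicity within each constraint). Introduce $d_v$ fresh ``copy'' vertices $v_1,\ldots,v_{d_v}$. For each constraint $C=(u_1,\ldots,u_k,f)\in\mathcal{S}$, use the next unused copy of each $u_j$ at position~$j$ and add the corresponding $k$-element hyperedge. Finally, for each variable $v$, chain the copies $v_1,\ldots,v_{d_v}$ using $d_v-1$ disjoint copies of $H_{\eq}$, where in the $i$-th copy the terminal $x$ is identified with $v_i$ and $y$ with $v_{i+1}$. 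Each copy vertex receives degree $1$ from its constraint hyperedge plus at most $d_x+d_y$ from the (at most two) equality gadgets incident to it, so the degree bound holds; internal vertices of equality gadgets only have degree at most the maximum degree of $H_{\eq}$, which is also at most $\Delta_0$.

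The crux is the correspondence $Z_{f;H'}>0\iff I$ is satisfiable. If some $\sigma$ has $w_{f;H'}(\sigma)>0$, then within every equality gadget the two terminals carry the same spin, so each chain assigns a single common value $\tau(v)\in\{0,1\}$ to all copies of $v$; this $\tau$ then satisfies every constraint of~$I$. Conversely, if $\tau:V\to\{0,1\}$ is satisfying, I set each $v_i:=\tau(v)$; it remains to extend this to the interior of each equality-gadget copy. This is where self-duality (guaranteed by Lemma~\ref{lem:selfdual}, since we are in Case~II) enters: Lemma~\ref{lem:qwer} implies that if $H_{\eq}$ has a satisfying configuration with $\sigma(x)=\sigma(y)=0$ it also has one with both endpoints $1$, so a satisfying extension exists for both possible values of $\tau(v)$, giving $w_{f;H'}(\sigma)>0$.

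Because a $c^n$-approximation $\widehat Z$ must be positive exactly when $Z_{f;H'}$ is positive, distinguishing the two outputs solves $\CSP{\{f\}}$ in polynomial time (the size of $H'$ is polynomial in $|I|$, since $H_{\eq}$ has constant size depending only on $f$). Thus for every $c>1$ the approximation problem $\Hyper2Spinf$ is $\NP$-hard for $\Delta=\Delta_0$, and monotonicity in $\Delta$ (any instance with degree bound $\Delta_0$ also satisfies the larger bound $\Delta$) extends the conclusion to all $\Delta\geq\Delta_0$. The only delicate point to watch is the handling of the degree bound under chained equality gadgets and the use of self-duality to ensure both polarities of the terminals are realisable; neither is a serious obstacle once Lemmas~\ref{lem:applyschaefer}, \ref{lem:exactequality}, \ref{lem:selfdual} and~\ref{lem:qwer} are in hand.
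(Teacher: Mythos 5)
Your proposal is correct and follows essentially the same route as the paper: reduce from $\CSP{\{f\}}$ (hard by Lemma~\ref{lem:applyschaefer}), split variable occurrences into copies chained by the exact-equality gadget of Lemma~\ref{lem:exactequality}, and use Lemma~\ref{lem:qwer} to see that both spin values are realisable at the terminals so that satisfiability is preserved. The only slip is your choice $\Delta_0=1+d_x+d_y$, since internal vertices of $H_{\eq}$ may have degree exceeding this; taking $\Delta_0$ to be (say) $2\Delta''+1$ with $\Delta''$ the maximum degree of $H_{\eq}$, as the paper does, fixes it.
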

\begin{proof}
From Lemma~\ref{lem:exactequality} we know there is a hypergraph $H''$ with 
$Z_{f;H''}>0$ which has
vertices 
$x$ and $y$ such that for any configuration $\sigma: V(H'') \rightarrow \{0,1\}$ with $w_{f;H''}(\sigma)>0$,
we have $\sigma(x)= \sigma(y)$.
Let $Z_0$ be the contribution to $Z_{f;H''}$ from configurations with $\sigma(x)=\sigma(y)=0$
and let $Z_1$ be the contribution to $Z_{f;H''}$ from configurations with $\sigma(x)=\sigma(y)=1$.
By  Lemma~\ref{lem:qwer}, we have $Z_0=Z_1$ so, since $Z_0+Z_1 = Z_{f;H''},$ which is positive,
both~$Z_0$ and~$Z_1$ are positive.
Let $\Delta''$ be the maximum degree of $H''$ and let $\Delta_0 =  2\Delta'' + 1$.

Let $\Gamma=\{f\}$.
We know from Lemma~\ref{lem:applyschaefer}
that it is $\NP$-hard to determine whether $Z_{\Gamma,I}=0$
given a CSP instance~$I$.

Consider a CSP instance~$I$ with variable set~$V$
and constraint set $\mathcal{S}$.
We will show how to (efficiently) construct a $k$-uniform hypergraph~$H'$
with degree at most~$\Delta_0$ so that
$Z_{\Gamma,I} = 0$ if and only if $Z_{f;H'}=0$.
This will imply that determining whether $Z_{f;H'}=0$ is $\NP$-hard given a $k$-uniform
hypergraph~$H'$  with degree at most~$\Delta_0$. Hence, 
for every $\Delta \geq \Delta_0$,
 $\Hyper2Spinf$ is also $\NP$-hard. 

The construction is straightforward, apart from the notation.
For each $v\in V$, let $n(v)$ be the number of times that
variable~$v$ is used (taking all of the constraints in $\mathcal{S}$ together).
Let $V' = \{(v,j) \mid v \in V, 1 \leq j \leq n(v) \}$.
Let $\mathcal{S'}$ be a set of constraints that is identical to
$\mathcal{S}$ except that,
for each $v\in V$ and $1\leq j \leq n(v)$,
the $j$'th use of vertex $v$ 
is replaced with $(v,j)$.
Note that the constraints in $\mathcal{S}'$ use each variable in~$V'$ exactly once.
Thus, they can be viewed as hyperedges of a $k$-uniform hypergraph.

To build $H'$ we will take the vertices in $V'$ and
the hyperedges in $\mathcal{S}'$
but we will add some additional
vertices and hyperedges.
In particular, for each $v\in V$ and $j \in \{1,\ldots,n(v)-1\}$
we will take a new copy $H''_{v,j}$ of $H''$.
We will identify the vertex $x$ of $H''_{v,j}$ with
$(v,j)$ and the vertex $y$ of $H''_{v,j}$ with $(v,j+1)$.
This completes the construction of the hypergraph~$H'$. 
Note that the degree of each vertex of $H'$ is at most $2 \Delta'' + 1 = \Delta_0$.
Also, $H'$ can be efficiently constructed given~$I$.
 
Now, by the properties of
$H''$, every configuration  
$\sigma':V(H') \rightarrow \{0,1\}$ with $w_{f;H'}(\sigma)>0$ 
has $\sigma((v,j)) = \sigma((v,j'))$
for every $v\in V$ and every $1\leq j,j' \leq n(v)$.
Also, $\sigma'$ induces a configuration 
$\sigma:V \rightarrow \{0,1\}$ by   taking $\sigma(v) = \sigma'(v,1) = \cdots = \sigma'(v,n(v))$.
Now note that
$$Z_{f;H'} =  
 \sum_{\sigma:V\rightarrow\{0,1\}}
 \left(\prod_{(v_1,\hdots,v_k,f)\in\mathcal{S}} 
f(\sigma(v_1),\hdots,\sigma(v_k))\right)
\left(\prod_{v \in V} Z_{\sigma(v)}^{n(v)-1}
\right).$$ 
This is identical to the partition function $Z_{f;I}$ apart from the factors of~$Z_0$ and~$Z_1$, which 
are both positive. We conclude that
$Z_{f;H'}$ is positive 
if and only if $Z_{f;I}$ is positive. \end{proof}

\subsubsection{The case $w_0=1$}
\label{sec:Case2w0one}

We now consider the case where $w_0=1$.  We will use the following corollary of Lemma~\ref{hardness} (tailored to the case where $f$ is self-dual).
\begin{corollary}\label{hardnessself}
Let $f$ be an arity-$k$ symmetric Boolean formula 
that is self-dual. Let $H$ be a hypergraph, let $\mathcal{V}$ be admissible for $H$ (with $V_0=V_1=\emptyset$) and 
let~$x$ and~$y$ be   vertices  of~$H$. Suppose that 
\begin{equation}\label{eq:selfself}
0<\mu_{f;H}^{\condV}(\sigma_x=\sigma_y=0)<\mu_{f;H}^{\condV}(\sigma_x=0,\sigma_y=1).
\end{equation}
Then there is a $\Delta_0$ such that for every $\Delta \geq \Delta_0$, there exists $c>1$ such that 
$\Hyper2Spinf$ is $\NP$-hard. 
\end{corollary}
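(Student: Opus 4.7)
The plan is to derive Corollary~\ref{hardnessself} as a direct application of Lemma~\ref{hardness}, using self-duality to pin down the exact form of the conditional marginals. First I would observe that because $V_0 = V_1 = \emptyset$, the conditioning event defining $\mu_{f;H}^{\condV}$ is exactly $E := \bigcap_{i=2}^{r} \{\sigma^{\eq}_{V_i}\}$, and this event is invariant under the complementation map $Q$ of Lemma~\ref{lem:qwer}: if all spins on $V_i$ agree in $\sigma$, they also agree in $\bar\sigma$, so $Q(E) = E$.

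Next I would combine this with self-duality. For any event $A \subseteq \Sigma$, Lemma~\ref{lem:qwer} gives $\mu_{f;H}(A \cap E) = \mu_{f;H}(Q(A) \cap Q(E)) = \mu_{f;H}(Q(A) \cap E)$, hence
\[
\mu_{f;H}^{\condV}(A) \;=\; \mu_{f;H}^{\condV}(Q(A)).
\]
Applying this to $A = \{\sigma_x = \sigma_y = 0\}$ (whose image under $Q$ is $\{\sigma_x = \sigma_y = 1\}$) and to $A = \{\sigma_x = 0, \sigma_y = 1\}$ (whose image is $\{\sigma_x = 1, \sigma_y = 0\}$) yields
\[
\Zmu_{00} = \Zmu_{11} \quad\text{and}\quad \Zmu_{01} = \Zmu_{10}.
\]

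Finally I would verify the hypotheses of Lemma~\ref{hardness}. By the symmetries just established, $\sqrt{\Zmu_{01}\Zmu_{10}} = \Zmu_{01}$ and $\min\{\Zmu_{00},\Zmu_{11}\} = \max\{\Zmu_{00},\Zmu_{11}\} = \Zmu_{00}$. The assumption $0 < \Zmu_{00}$ gives $\Zmu_{00} + \Zmu_{11} = 2\Zmu_{00} > 0$, and the strict inequality $\Zmu_{00} < \Zmu_{01}$ from~\eqref{eq:selfself} supplies both $\min\{\Zmu_{00},\Zmu_{11}\} < \sqrt{\Zmu_{01}\Zmu_{10}}$ and $\max\{\Zmu_{00},\Zmu_{11}\} \leq \sqrt{\Zmu_{01}\Zmu_{10}}$. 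Lemma~\ref{hardness} then produces the desired $\Delta_0$ and $c > 1$, completing the proof. There is no real obstacle here beyond bookkeeping: the whole content is translating the asymmetric hypothesis~\eqref{eq:selfself} into the symmetric inequalities required by Lemma~\ref{hardness} via complementation-invariance.
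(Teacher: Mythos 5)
Your proposal is correct and follows essentially the same route as the paper: use the self-duality symmetry (Lemma~\ref{lem:qwer}) to deduce $\Zmu_{00}=\Zmu_{11}$ and $\Zmu_{01}=\Zmu_{10}$, observe that condition~\eqref{eq:selfself} then becomes equivalent to the hypotheses of Lemma~\ref{hardness}, and apply that lemma. Your explicit check that the conditioning event $\bigcap_{i\geq 2}\{\sigma^{\eq}_{V_i}\}$ is invariant under complementation (needed to transfer the symmetry of Lemma~\ref{lem:qwer} from $\mu_{f;H}$ to $\mu_{f;H}^{\condV}$) is a small but welcome addition of rigor that the paper leaves implicit.
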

\begin{proof}
For $i,j\in\{0,1\}$, let $\Zmu_{ij}=\Zmu_{f;H}^{\condV}(\sigma_x=i,\sigma_y=j)$. By Lemma~\ref{lem:qwer}, we have   $\Zmu_{11}=\Zmu_{00}$ and $\Zmu_{01}=\Zmu_{10}$. It follows that condition \eqref{eq:antiferro} is equivalent to condition \eqref{eq:selfself}, so the corollary follows by applying Lemma~\ref{hardness}.
\end{proof}

We use Corollary~\ref{hardnessself} to show the following lemma.
\begin{lemma}\label{lem:mainselfdual}
Let $f:\{0,1\}^k\rightarrow\{0,1\}$ be a self-dual symmetric function.  Let $t_1,t_2$ be integers such that $t_1\geq 1,t_2\geq 0,2t_1+t_2\leq k$. Suppose that
\begin{equation}\label{eq:condition}
0<w_0+w_{t_2}+w_{2t_1}+w_{2t_1+t_2}<2(w_{t_1}+w_{t_1+t_2}).
\end{equation}
Then there exists $\Delta_0$ such that for every $\Delta \geq \Delta_0$, there exists $c>1$ such that $\Hyper2Spinf$ is $\NP$-hard. 
\end{lemma}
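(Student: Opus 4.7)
The plan is to apply Corollary~\ref{hardnessself}. I will take $H$ to be the single-edge hypergraph on vertex set $\{x_1,\dots,x_k\}$ with edge $e=\{x_1,\dots,x_k\}$, and force equality within carefully chosen blocks of vertices so that the induced conditional marginal on two designated vertices $x,y$ realises the combinatorial expressions appearing in \eqref{eq:condition}.

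Set $d:=k-2t_1-t_2\geq 0$ and partition the vertices of $e$ into four consecutive blocks $A,B,C,D$ of sizes $t_1,t_1,t_2,d$. Choose $x\in A$, $y\in B$, and take $\mathcal{V}=(\emptyset,\emptyset,A,B,C,D)$; singleton or empty blocks contribute only trivial equality constraints. Self-duality combined with Lemma~\ref{lem:qwer} yields $\mu_{f;H'}(\sigma_v=0)=1/2$ for every vertex $v$ in every hypergraph $H'$ with $Z_{f;H'}>0$, so $f$ supports neither pinning-to-$0$ nor pinning-to-$1$; on the other hand \eqref{eq:condition} forces $f$ to be non-constant, so Lemma~\ref{lem:classify} combined with Lemma~\ref{lem:equalst} gives that $f$ supports $t$-equality for every $t\geq 2$. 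Hence $V_0=V_1=\emptyset$ is forced and non-empty equality blocks are allowed, matching the shape of our $\mathcal{V}$.

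Under $\mu_{f;H}^{\condV}$, the admissible configurations of $e$ are indexed by $(s_A,s_B,s_C,s_D)\in\{0,1\}^4$ (with variables for empty blocks omitted), and each such configuration carries weight $w_{t_1 s_A+t_1 s_B+t_2 s_C+d s_D}$. Summing over $(s_C,s_D)$ with $(s_A,s_B)=(0,0)$ and applying self-duality $w_\ell=w_{k-\ell}$ to rewrite $w_d=w_{2t_1+t_2}$ and $w_{t_2+d}=w_{2t_1}$ will give
\[
\mu_{f;H}^{\condV}(\sigma_x=\sigma_y=0)\ \propto\ w_0+w_{t_2}+w_{2t_1}+w_{2t_1+t_2}.
\]
The analogous sum with $(s_A,s_B)=(0,1)$, using $w_{t_1+d}=w_{t_1+t_2}$ and $w_{t_1+t_2+d}=w_{t_1}$, yields
\[
\mu_{f;H}^{\condV}(\sigma_x=0,\sigma_y=1)\ \propto\ 2\bigl(w_{t_1}+w_{t_1+t_2}\bigr),
\]
with the same proportionality constant. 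The first strict inequality of \eqref{eq:condition} then certifies that the equality-respecting partition function is positive, so $\mathcal{V}$ is admissible; the two inequalities of \eqref{eq:condition} together deliver exactly the hypothesis of Corollary~\ref{hardnessself}, which produces the claimed $\NP$-hardness.

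The only place requiring genuine care is the bookkeeping in the degenerate regimes $t_1=1$, $t_2=0$, or $d=0$, where one or more of the blocks becomes a singleton or empty. In each such regime some of the self-duality substitutions above collapse to identities (for example, when $d=0$ one has $w_{2t_1+t_2}=w_k=w_0$ and $w_{2t_1}=w_{t_2}$), pairing summands on both sides of the two displays in a way that mirrors the corresponding pairwise collapses in \eqref{eq:condition}; the strict inequalities therefore persist in every case. This case analysis is routine but is the only part of the argument requiring verification beyond the main computation.
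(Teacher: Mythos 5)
Your proposal is correct and follows essentially the same route as the paper: the single-edge gadget with equality forced on blocks of sizes $t_1,t_1,t_2,k-2t_1-t_2$, the terminals $x,y$ in the two $t_1$-blocks, the self-duality substitutions $w_\ell=w_{k-\ell}$ to reduce the conditional marginals to the two sides of \eqref{eq:condition}, and an appeal to Corollary~\ref{hardnessself}. The paper handles the degenerate regimes $t_2=0$ and/or $t_3=0$ by the same pairwise-collapse observation you sketch (picking up harmless factors of $\tfrac12$ or $\tfrac14$), so your remaining bookkeeping goes through exactly as you anticipate.
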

\begin{proof} 
We will apply Corollary~\ref{hardnessself}.  Let $e=\{x_1,\hdots,x_k\}$. Let $\mathcal{V}$ force equality on the sets  of vertices $\{x_{1},\hdots,x_{t_1}\}$, $\{x_{t_1+1},\hdots,x_{2t_1}\}$,  $\{x_{2t_1+1},\hdots,x_{2t_1+t_2}\}$ and, whenever $2t_1+t_2<k$, on $\{x_{2t_1+t_2+1},\hdots,x_{k}\}$ (see Figure~\ref{fig:somehypergraph}).
\begin{figure}[t]
\begin{center}
\scalebox{0.56}[0.56]{\input{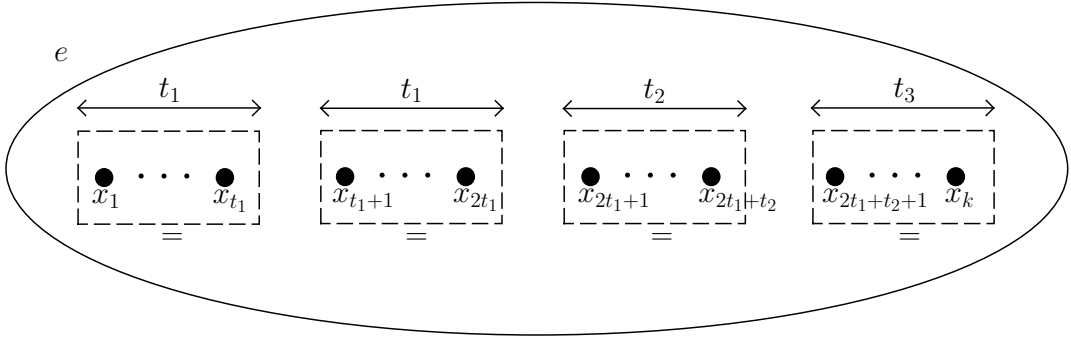}}
\end{center}
\caption{The hypergraph used in the proof of Lemma~\ref{lem:mainselfdual}. The hypergraph has just one hyperedge $e=\{x_1,x_2,\hdots,x_k\}$. We partition the hyperedge into sets of sizes $t_1,t_1,t_2,t_3$ and look at the conditional distribution where the spins of the vertices in each set are equal.}\label{fig:somehypergraph}
\end{figure}
Let $t_3:=k-2t_1-t_2$ and set $x:=x_1$ and $y:=x_{t_1+1}$. Note that the condition $2t_1+t_2<k$ is equivalent to $t_3>0$. By the assumptions, we also have that $t_3\geq 0$.
Now define 
 \begin{align*}
 Z^{\mathrm{ex}}_{00}&= w_0+w_{t_2}+w_{2t_1}+w_{2t_1+t_2},  \\
 Z^{\mathrm{ex}}_{01}& =2(w_{t_1}+w_{t_1+t_2}).
\end{align*} 

First, suppose $t_2>0$ and $t_3>0$.
In this case, we have (using self-duality in the first equality in each line)
\begin{align*}
\mu_{f;e}^{\condV}(\sigma_x=\sigma_y=0)&\propto  w_0+w_{t_2}+w_{t_3}+w_{t_2+t_3}
 = w_0 + w_{t_2} + w_{2 t_1+t_2} + w_{2 t_1}
 = Z^{\mathrm{ex}}_{00},\\
\mu_{f;e}^{\condV}(\sigma_x=0,\sigma_y=1)&\propto  w_{t_1}+w_{t_1+t_2}+w_{t_1+t_3}+w_{t_1+t_2+t_3} =
w_{t_1}+w_{t_1+t_2}+w_{t_1+t_2}+w_{t_1}= Z^{\mathrm{ex}}_{01}.
\end{align*}
 It is now immediate  that inequality~\eqref{eq:condition} is equivalent to condition~\eqref{eq:selfself} in Corollary~\ref{hardnessself}, from which the result follows. 
 
The proofs for the remaining cases for the values of $t_2,t_3$ are completely analogous. If $t_2>0$ and $t_3=0$ then $w_0 = w_{2 t_1+t_2}$ and $w_{t_2} = w_{2 t_1}$ so we   have
 \begin{align*}
\mu_{f;e}^{\condV}(\sigma_x=\sigma_y=0)&\propto  w_0+w_{t_2} =\tfrac12 Z^{\mathrm{ex}}_{00},\\
\mu_{f;e}^{\condV}(\sigma_x=0,\sigma_y=1)&\propto  w_{t_1}+w_{t_1+t_2} = \tfrac12 Z^{\mathrm{ex}}_{01}.
\end{align*}
If $t_2=t_3=0$ then also  $w_{t_2}=w_0$ and $w_{t_1+t_2}=w_{t_1}$, so
\begin{align*}
\mu^{\condV}_{f;e}(\sigma_x=\sigma_y=0)&\propto  w_0  =\tfrac14 Z^{\mathrm{ex}}_{00},\\
\mu^{\condV}_{f;e}(\sigma_x=0,\sigma_y=1)&\propto  w_{t_1}  = \tfrac14 Z^{\mathrm{ex}}_{01}.
\end{align*}

Finally, if $t_2=0$ and $t_3>0$ then  
\begin{align*}
\mu^{\condV}_{f;e}(\sigma_x=\sigma_y=0)&\propto   w_0+w_{t_3} = w_0 + w_{2 t_1}=\tfrac12 Z^{\mathrm{ex}}_{00}
 ,\\
\mu^{\condV}_{f;e}(\sigma_x=0,\sigma_y=1)&\propto   w_{t_1}+w_{t_1+t_3} = 2 w_{t_1} = \tfrac12 Z^{\mathrm{ex}}_{01}.
\end{align*}  
Thus, in each of the above cases, the result follows by Corollary~\ref{hardnessself}.
\end{proof}

We will also use the following inequality for binomial coefficients (which is slightly stronger than the well-known log-concavity property $\binom{n}{i-1}\binom{n}{i+1}\leq \binom{n}{i}^2$).   
\begin{lemma}\label{cl:ineq}
For all $n\geq i\geq  2$, it holds that 
\[\left[1+\binom{n}{i}\right]\left[1+\binom{n}{i-2}\right]\leq \binom{n}{i-1}^2.\]
Equality holds iff $n=2,i=2$.
\end{lemma}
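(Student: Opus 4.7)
The plan is to reduce the inequality to a polynomial inequality in a few small quantities and then dispose of it by a short case analysis. As a first step, I would use the identities $\binom{n}{i}/\binom{n}{i-1} = (n-i+1)/i$ and $\binom{n}{i-2}/\binom{n}{i-1} = (i-1)/(n-i+2)$ to derive the quantitative log-concavity identity
\[\binom{n}{i-1}^{\!2} - \binom{n}{i}\binom{n}{i-2} \;=\; \binom{n}{i-1}^{\!2}\cdot\frac{n+1}{i(n-i+2)},\]
so that the desired inequality $(1+\binom{n}{i})(1+\binom{n}{i-2})\le \binom{n}{i-1}^2$ is equivalent to
\[\binom{n}{i-1}^{\!2}\cdot\frac{n+1}{i(n-i+2)} \;\ge\; 1+\binom{n}{i}+\binom{n}{i-2}.\]

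Next I would change variables, setting $u=i-1$, $v=n-i+1$ (so $u+v=n$, $u,v\ge 1$) and $a:=\binom{n}{u}=\binom{n}{i-1}$. Using $\binom{n}{i}=av/(u+1)$ and $\binom{n}{i-2}=au/(v+1)$, clearing the denominator $(u+1)(v+1)=n+1+uv$, and simplifying with $u^2+v^2=n^2-2uv$, the inequality becomes equivalent to
\[a\bigl[(n+1)(a-n)+2uv\bigr] \;\ge\; uv+n+1.\]

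I would then split into cases based on the value of $a$, noting that $u\in\{1,\dots,n-1\}$ implies $a\ge n$. The first case is $u\in\{1,n-1\}$, corresponding to $i\in\{2,n\}$; here $a=n$ and $uv=n-1$, so the inequality reduces to $2n(n-1)\ge 2n$, i.e.\ $n(n-2)\ge 0$, which holds for all $n\ge 2$ with equality iff $n=2$ (which matches the claimed equality case $n=i=2$). The second case is $u\in\{2,\dots,n-2\}$, which forces $n\ge 4$; here $a\ge \binom{n}{2}=n(n-1)/2\ge n+1$, so $(n+1)(a-n)\ge n+1$ and I can bound
\[a\bigl[(n+1)(a-n)+2uv\bigr] \;\ge\; (n+1)\bigl[(n+1)+2uv\bigr] \;=\; (n+1)^2+2uv(n+1),\]
which is strictly larger than $uv+n+1$ since $(n+1)^2>n+1$ and $2uv(n+1)>uv$.

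The only real obstacle is the algebraic reduction: writing the log-concavity gap in the clean form $a^2(n+1)/[i(n-i+2)]$ is what makes the rest of the argument transparent, and keeping track of when each step is tight is what pins down the equality case $n=i=2$. Once the reformulation is in place, the case analysis is routine because $\binom{n}{u}$ jumps from $n$ to at least $n+1$ as soon as $u$ moves one step into the interior of $\{1,\dots,n-1\}$.
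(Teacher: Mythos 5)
Your proof is correct, and it takes a genuinely different route from the paper's. The paper handles the boundary cases $i\in\{2,n\}$ and $i\in\{3,n-1\}$ by direct computation (the latter via an explicit quartic in $n$), and for the interior range $n-2\geq i\geq 4$ it splits the target inequality into a sum of two sub-inequalities, one of which is a strict log-concavity statement $\tfrac{i}{i-1}\binom{n}{i}\binom{n}{i-2}<\binom{n}{i-1}^2$ and the other of which is verified using the crude bound $\binom{n}{j}\geq 2(j+1)$. You instead compute the log-concavity gap exactly — your identity $\binom{n}{i-1}^2-\binom{n}{i}\binom{n}{i-2}=\binom{n}{i-1}^2\cdot\frac{n+1}{i(n-i+2)}$ checks out, since with $u=i-1$, $v=n-i+1$ one has $i(n-i+2)-(n-i+1)(i-1)=(u+1)(v+1)-uv=n+1$ — and after clearing the positive denominator $(u+1)(v+1)$ the whole statement collapses to the single polynomial inequality $a[(n+1)(a-n)+2uv]\geq uv+n+1$ with $a=\binom{n}{i-1}$. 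I verified the reduction: the right-hand side after clearing denominators is $uv+n+1+a(u^2+v^2+n)$, and substituting $u^2+v^2=n^2-2uv$ gives exactly your form. The two-case analysis ($a=n$, $uv=n-1$ on the boundary, giving $2n(n-2)\geq 0$ with equality precisely at $n=2$; and $a\geq\binom{n}{2}\geq n+1$ for $2\leq u\leq n-2$, $n\geq 4$, giving strict inequality) is complete and correctly pins down the equality case $n=i=2$. Your approach buys uniformity — only two cases, no hand-checked quartic, and the equality case falls out of a single line — at the cost of a slightly heavier initial reformulation; the paper's approach avoids the change of variables but pays for it with more boundary cases and a two-part splitting whose pieces are individually less transparent. (The one step you leave implicit is unimodality of $\binom{n}{u}$ over $u\in\{2,\dots,n-2\}$ to get $a\geq\binom{n}{2}$, which is standard and harmless.)
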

\begin{proof} 
For $i=2$ the inequality becomes $n^2-n+2\leq n^2$ which holds for all $n\geq 2$ (equality only if $n=2$). The same argument applies for $i=n$. Thus, we may assume $n-1\geq i\geq 3$. For $i=3$ the inequality becomes 
\[\left(1+\frac{n(n-1)(n-2)}{6}\right)(n+1)\leq \frac{n^2(n-1)^2}{4}\Leftrightarrow n^4-2n^3+5n^2-16n-12\geq 0.\] 
The latter can easily be verified that it  holds strictly for all $n\geq 4$. The same argument applies for $i=n-1$. Thus, we may assume $n-2\geq i\geq 4$.

For all such values of $n,i$, the inequality follows by summing 
\begin{gather}
1+\binom{n}{i}+\binom{n}{i-2}<\frac{1}{i-1}\binom{n}{i}\binom{n}{i-2},\label{eq:simpleone}\\
\frac{i}{i-1}\binom{n}{i}\binom{n}{i-2}< \binom{n}{i-1}^2\label{eq:simpletwo}.
\end{gather}
Inequality \eqref{eq:simpletwo} is equivalent to $n-i+1<n-i+2$ which is trivially true. To see \eqref{eq:simpleone}, we first rewrite it in the equivalent form
\[\left[\binom{n}{i}-(i-1)\right]\left[\binom{n}{i-2}-(i-1)\right]> i(i-1),\]
which follows from the inequalities $\binom{n}{i}\geq 2(i+1)$ and $\binom{n}{i-2}\geq 2(i-1)$ (both are special cases of $\binom{n}{j}\geq 2(j+1)$ which holds for $j\geq 2$ and $n\geq j+2$).
\end{proof}

We can now do the proof for  this case.
Recall   that $k\geq 3$ and that $f$ is a symmetric self-dual arity-$k$ Boolean function 
that is not in $\EASYk$. The function~$f$
supports $2$-equality
but does not support pinning-to-0 or pinning-to-1. 
We are assuming that $w_0=1$. 
Our goal is to show that 
for all sufficiently large $\Delta$, there exists $c>1$ such that 
the approximation problem $\Hyper2Spinf$ is NP-hard.

Let $0<i\leq k/2$ be the smallest 
positive index with $w_i=1$. Clearly, we may assume that such an index $i$ exists (otherwise, by the self-duality of $f$, we have $f=\eqf^{(k)}$).  

\begin{claim}\label{cl:one}
If there is a positive integer $r$ with $r i \leq k$ and $w_{ri}=0$ then,
for all sufficiently large $\Delta$, there exists $c>1$ such that $\Hyper2Spinf$ is $\NP$-hard.
\end{claim}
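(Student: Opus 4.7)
The plan is to reduce directly to Lemma~\ref{lem:mainselfdual} by exhibiting an appropriate choice of the parameters $t_1,t_2$. Under the running assumptions of this subsection we have $w_0=1$ and, by the definition of $i$, also $w_i=1$. Consequently, any positive integer $r$ with $ri\leq k$ and $w_{ri}=0$ must satisfy $r\geq 2$. I would begin by letting $r$ be the \emph{smallest} such integer, so that by minimality $w_{si}=1$ for every $0\leq s\leq r-1$, while $w_{ri}=0$.

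The key choice is $t_1=i$ and $t_2=(r-2)i$. This is admissible for Lemma~\ref{lem:mainselfdual} since $t_1\geq 1$, $t_2\geq 0$, and $2t_1+t_2=ri\leq k$. The next step is to verify the inequality~\eqref{eq:condition}. By construction one has $w_0=w_{t_1}=w_{t_1+t_2}=w_{(r-1)i}=1$ and $w_{2t_1+t_2}=w_{ri}=0$, so the right-hand side $2(w_{t_1}+w_{t_1+t_2})$ equals $4$. For $r\geq 3$ one additionally has $w_{t_2}=w_{(r-2)i}=1$ and $w_{2t_1}=w_{2i}=1$ (since $2\leq r-1$), giving left-hand side $1+1+1+0=3$; for $r=2$ one has $t_2=0$, so $w_{t_2}=w_0=1$ and $w_{2t_1}=w_{2i}=0$, giving left-hand side $1+1+0+0=2$. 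In both cases the strict inequality $0<w_0+w_{t_2}+w_{2t_1}+w_{2t_1+t_2}<2(w_{t_1}+w_{t_1+t_2})$ holds, and Lemma~\ref{lem:mainselfdual} yields the desired hardness conclusion.

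I do not anticipate any genuine obstacle here; the real work, namely turning the combinatorial information about the $w_\ell$'s into an admissible collection whose induced binary system is antiferromagnetic and sits in the non-uniqueness regime for all sufficiently large $\Delta$, has been encapsulated in Lemma~\ref{lem:mainselfdual}. The only content of this claim is finding $t_1,t_2$ so that $2t_1+t_2$ lands exactly on the vanishing index $ri$, while the three remaining indices $0,\,t_2,\,2t_1$ all land on multiples of $i$ that are strictly less than $ri$ (hence on indices where $w=1$), and symmetrically so do the indices $t_1$ and $t_1+t_2$ on the right-hand side. The choice $t_1=i$, $t_2=(r-2)i$ is essentially forced by these constraints, and the small case analysis $r=2$ versus $r\geq 3$ absorbs the only minor subtlety.
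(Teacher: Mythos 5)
Your proposal is correct and matches the paper's proof essentially verbatim: the same choice $t_1=i$, $t_2=(r-2)i$ with $r$ minimal, the same verification of~\eqref{eq:condition}, and the same appeal to Lemma~\ref{lem:mainselfdual}. The only cosmetic difference is that you evaluate the left-hand side exactly via a case split on $r=2$ versus $r\geq 3$, whereas the paper just bounds it above by $3$; both suffice.
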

\begin{proof}
Let $r$ be the smallest positive integer with $ri\leq k$ and $w_{ri}=0$. 
Since $w_i=1$, we have $r\geq 2$. 
 We next check that the conditions of Lemma~\ref{lem:mainselfdual} are satisfied with $t_1=i$ and $t_2=(r-2)i$. First, we clearly have $t_1\geq 1$, $t_2\geq 0$, $2t_1+t_2=ri\leq k$. Moreover, the left-most inequality in \eqref{eq:condition} is true since $w_0=1$. Finally, the right-most inequality in \eqref{eq:condition} also holds, since 
\begin{align*}
&w_{t_1}+w_{t_1+t_2}=w_i+w_{(r-1)i}=2,\\
&w_{0}+w_{t_2}+w_{2t_1}+w_{2t_1+t_2}=w_0+w_{(r-2)i}+w_{2i}+w_{ri}\leq 3,
\end{align*}
where we used that $w_{(r-1)i}=1$ and $w_{ri}=0$ (by the choice of $r$).
Applying Corollary~\ref{hardnessself} yields the claim. 
\end{proof}

\begin{claim}\label{cl:two}
If there is a positive integer $r<k$ 
that is not divisible by~$i$ and has $w_r=1$   then,
for all sufficiently large $\Delta$, there exists $c>1$ such that $\Hyper2Spinf$ is $\NP$-hard. 
\end{claim}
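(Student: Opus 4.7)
The plan is to apply Lemma~\ref{lem:mainselfdual} to a carefully chosen pair $(t_1,t_2)$, where the choice will be controlled by taking $r$ to be minimal. Specifically, I would let $r$ denote the \emph{smallest} positive integer that is not divisible by $i$ and satisfies $w_r=1$; the hypothesis of the claim guarantees that such an $r$ exists. Note that since the smallest positive index with $w_r=1$ is $i$, we must have $r>i$ (any $r<i$ would give $w_r=0$, and $r=i$ is divisible by~$i$). I then set
\[
t_1 := i, \qquad t_2 := r - i,
\]
so that $t_1\geq 1$ and $t_2>0$, and aim to verify all the hypotheses of Lemma~\ref{lem:mainselfdual} for this choice.

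The first thing to verify is the arity bound $2t_1+t_2 = r+i\leq k$. The hard part here is that $r$ is only given to satisfy $r<k$, so in principle $r+i$ could exceed~$k$. The key observation, which I would highlight as the main (small) subtlety, is the following self-duality argument: if $r+i>k$, then $0< k-r<i$; but self-duality gives $w_{k-r}=w_r=1$, which would contradict the minimality of $i$ as the smallest positive index with $w_i=1$. Hence $r+i\leq k$ automatically.

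Next I would compute the two sides of the inequality~\eqref{eq:condition} in Lemma~\ref{lem:mainselfdual}. For the right-hand side,
\[
2(w_{t_1}+w_{t_1+t_2}) \;=\; 2(w_i + w_r) \;=\; 2(1+1) \;=\; 4,
\]
using $w_i=1$ (definition of $i$) and $w_r=1$ (hypothesis). For the left-hand side, note first that $r-i$ is a positive integer that is not divisible by $i$ (since $r$ isn't) and that is strictly smaller than $r$, so by the minimality of $r$ we have $w_{r-i}=0$. Together with $w_0=1$ this yields
\[
w_0 + w_{t_2} + w_{2t_1} + w_{2t_1+t_2} \;=\; 1 + 0 + w_{2i} + w_{r+i} \;\in\; \{1,2,3\},
\]
since each $w_\ell\in\{0,1\}$. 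This quantity is strictly positive and strictly less than $4$, so condition~\eqref{eq:condition} is satisfied.

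Applying Lemma~\ref{lem:mainselfdual} then immediately produces a $\Delta_0$ and a $c>1$ for every $\Delta\geq \Delta_0$ such that $\Hyper2Spinf$ is $\NP$-hard, completing the proof. I do not anticipate a serious obstacle: the only non-trivial point is the self-duality argument ruling out $r+i>k$, and Lemma~\ref{cl:ineq} is not required for this particular claim (it is presumably reserved for a subsequent claim handling the case where $w_r=0$ for every $r$ not divisible by $i$, where a finer counting argument involving binomial coefficients is needed).
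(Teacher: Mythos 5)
Your proof is correct and follows essentially the same route as the paper: the same choice $t_1=i$, $t_2=r-i$ with $r$ minimal, the same self-duality argument giving $r\leq k-i$, and the same use of minimality of $r$ to get $w_{r-i}=0$ before invoking Lemma~\ref{lem:mainselfdual}. No gaps.
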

\begin{proof}
Let $r$ be the smallest positive integer  that is 
less than~$k$, is not divisible by~$i$  and has $w_r=1$. 
By the choice of~$i$, we have $r>i$. 
By self-duality, $r$ is not in $\{k-i+1,\ldots,k-1\}$ so $r\leq  k-i$.
We next check that the conditions of Lemma~\ref{lem:mainselfdual} are satisfied with $t_1=i$ and $t_2=r-i$. 
We clearly have $t_1\geq 1$, $t_2> 0$, $2t_1+t_2=r+i \leq k$. 
Moreover, the left-most inequality in \eqref{eq:condition} is true since $w_0=1$. Finally, the right-most inequality in \eqref{eq:condition} also holds, since 
\begin{align*}
&w_{t_1}+w_{t_1+t_2}=w_i+w_{r}=2,\\
&w_{0}+w_{t_2}+w_{2t_1}+w_{2t_1+t_2}=w_0+w_{r-i}+w_{2i}+w_{r+i}\leq 3,
\end{align*}
where we used that $w_{r}=1$ and $w_{r-i}=0$ (by the choice of $r$).
Applying Corollary~\ref{hardnessself} yields the claim.
\end{proof}

The remaining cases that we have to deal with 
are now quite constrained, satisfying the following properties.
\begin{itemize}
\item $w_0=w_k=1$. 
(We know that this is true because this is the case that we are dealing with in 
the current section, Section~\ref{sec:Case2w0one}).
\item The positive integers $\ell \in \{1,\ldots,k-1\}$ 
with $w_{\ell}=1$ are precisely the multiples of~$i$. This follows from Claims~\ref{cl:one} and~\ref{cl:two}.
\item $k$ is a multiple of~$i$. Suppose instead that $k = m i + u$ for some non-negative integer~$m$ and some integer $u\in \{1,\ldots,i-1\}$.
Then $w_{m i}=1$ since $m i $ is  either~$0$ (if $m=0$) or it is a positive multiple of~$i$ which is less than~$k$.
Now $k-m i = u$ so by self-duality $w_u=1$.
But this contradicts the choice of~$i$.
\item $i>2$. If $i=1$ then $f$ is the all-one function $f=\onef^{(k)}$.
 If $i=2$ then $k$ is even since it is a multiple of~$i$.
 Then $f$ is the easy function $f=\evenf^{(k)}$.
 
 \item $2i \leq k$.  We know that $k$ is a multiple of~$i$, but if $k$ is equal to~$i$, then 
 $f$ is the equality function $f=\eqf^{(k)}$.
\end{itemize}
  
To finish the proof, we consider the hypergraph with a single edge $e=\{x_1,\hdots,x_k\}$.
Let $x$ be $x_{2i-1}$ and let $y$ be $x_{2i}$. Let $\mathcal{V}$ force equality among the vertices 
in $S=\{x_{2i+1},\ldots,x_k\}$.  
Suppose first that $k>2i$ (so that $|S|\geq 1$).
We will use $\ell$ to denote the number of spin-one vertices in $x_1,\ldots,x_{2i-1}$. Then, 
since the assignment to vertices in $S$ can be either the $\zeros$ or $\ones$ assignment, we get
\begin{equation}\label{eq:sdfcvb}
\begin{aligned}
\mu_{f;e}^{\condV}(\sigma_x=\sigma_y=0)&\propto  
\sum_{\ell=0}^{2i-2} \binom{2i-2}{\ell} (w_\ell+ w_{\ell+k-2i}),\\
\mu_{f;e}^{\condV}(\sigma_x=0,\sigma_y=1)&\propto  
\sum_{\ell=0}^{2i-2} \binom{2i-2}{\ell} (w_{\ell+1}+ w_{\ell+k-2i+1}). 
\end{aligned}
\end{equation}

But in the range $0\leq \ell \leq 2i-2$, 
$w_\ell$ is only positive if $\ell \in \{0,i\}$.
Similarly, by self-duality $w_{\ell+k-2i} = w_{2i-\ell}$, which is only
positive if $\ell\in\{0,i\}$.
Similarly, 
$w_{\ell+1}$ is only positive if $\ell = i-1$
and $w_{\ell+k-2i+1} = w_{2i-\ell-1}$ which is only positive if $\ell=i-1$.
So, \eqref{eq:sdfcvb} becomes 
\begin{align*}
\mu_{f;e}^{\condV}(\sigma_x=\sigma_y=0)&\propto  
2 +   2\binom{2i-2}{i},\\
\mu_{f;e}^{\condV}(\sigma_x=0,\sigma_y=1)&\propto  
2  \binom{2i-2}{i-1}. 
\end{align*} 
If $k= 2i$ then we get the same equations (apart from a factor of~$2$, which makes no difference).
  
To finish the argument,
we need only show that   $1+\binom{2i-2}{i}<\binom{2i-2}{i-1}$, so that Corollary~\ref{hardnessself} yields that  for all sufficiently large $\Delta$, there exists $c>1$ such that 
$\Hyper2Spinf$ is $\NP$-hard. 
To see that $1+\binom{2i-2}{i}<\binom{2i-2}{i-1}$, 
let $n=2i-2$. Note that $2<i<n$. Then   Lemma~\ref{cl:ineq} gives 
\[\left[1+\binom{2i-2}{i}\right]\left[1+\binom{2i-2}{i-2}\right]<\binom{2i-2}{i-1}^2.\]
The desired inequality follows after observing that $\binom{2i-2}{i-2}=\binom{2i-2}{i}$ and simplifying.

\subsection{Case III}\label{sec:caseIII}
 
Throughout this section, we will assume that $f$ supports pinning-to-0 (the case that $f$ supports pinning-to-1 is analogous by swapping the spins 0 and 1). 
Our goal is to show that there exists $\Delta_0$ such that for all  $\Delta\geq \Delta_0$, there exists $c>1$ such that $\Hyper2Spinf$ is $\NP$-hard.

We have the following analogue of Lemma~\ref{lem:gadgets} (tailored to the case where $f$ supports pinning-to-0).
\begin{lemma}\label{lem:zeropinning}
Assume that $f$ supports pinning-to-0. Let $H=(V,\mathcal{F})$ and  let $V_0\subseteq V$ be admissible for the hypergraph $H$, i.e., $\mu_{f;H}(\sigma_{V_0}=\zeros)>0$. With $\mathcal{V}=(V_0)$, recall that $\mu^{\condV}_{f;H}(\cdot)=\mu_{f;H}(\cdot\mid \sigma_{V_0}=\zeros)$. 

\begin{enumerate}
\item \label{it:bnm} If there exists a vertex $v$ in $H$ such that $\mu^{\condV}_{f;H}(\sigma_v=1)>\mu^{\condV}_{f;H}(\sigma_v=0)$, then $f$ supports pinning-to-1.
\item \label{it:bnm2} If there exists a subset  $S$ of $V$ such that $\mu^{\condV}_{f;H}(\sigma_S=\ones)=\mu^{\condV}_{f;H}(\sigma_S=\zeros)=1/2$, then $f$ supports equality.
\end{enumerate}
\end{lemma}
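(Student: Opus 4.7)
The strategy for both items is to use Lemma~\ref{lem:generalpin} with the admissible collection $\mathcal{V}=(V_0)$ (which is admissible by hypothesis, since $f$ supports pinning-to-$0$ and $\mu_{f;H}(\sigma_{V_0}=\zeros)>0$) to realise the conditional distribution $\mu^{\condV}_{f;H}$ restricted to a target set of vertices by an actual hypergraph, with arbitrary additive accuracy $\epsilon>0$.

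\emph{Item 1.} Given any $\epsilon>0$, apply Lemma~\ref{lem:generalpin} with $S=\{v\}$ to produce a hypergraph $H'$ containing $H$ as a sub-hypergraph and satisfying
$$\bigl|\mu_{f;H'}(\sigma_v=s)-\mu^{\condV}_{f;H}(\sigma_v=s)\bigr|\leq \epsilon\quad\text{for }s\in\{0,1\}.$$
The hypothesis provides a strict positive gap $g:=\mu^{\condV}_{f;H}(\sigma_v=1)-\mu^{\condV}_{f;H}(\sigma_v=0)>0$. Choosing $\epsilon<g/2$ ensures $\mu_{f;H'}(\sigma_v=1)>\mu_{f;H'}(\sigma_v=0)$, and Lemma~\ref{lem:gadgets}(\ref{it:pinning1}) then yields that $f$ supports pinning-to-$1$.

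\emph{Item 2.} Pick any two distinct vertices $x,y\in S$ (this is the case of interest; the hypothesis with $|S|\leq 1$ is vacuous or impossible). Because $\mu^{\condV}_{f;H}(\sigma_S=\zeros)+\mu^{\condV}_{f;H}(\sigma_S=\ones)=1$, no mass remains for any other event on $S$, so
$$\mu^{\condV}_{f;H}(\sigma_x=\sigma_y=0)=\mu^{\condV}_{f;H}(\sigma_x=\sigma_y=1)=\tfrac{1}{2},\quad \mu^{\condV}_{f;H}(\sigma_x\neq\sigma_y)=0.$$
Given any target $\epsilon>0$, apply Lemma~\ref{lem:generalpin} with $S=\{x,y\}$ and error parameter $\epsilon/2$ to obtain a hypergraph $H'$ satisfying $\mu_{f;H'}(\sigma_x=\sigma_y=s)\geq (1-\epsilon)/2$ for each $s\in\{0,1\}$. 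This is exactly the condition of Definition~\ref{def:equality} with $t=2$ and terminals $x,y$, so $H'$ is an $\epsilon$-realisation of $2$-equality. Since such a hypergraph exists for every $\epsilon>0$, $f$ supports $2$-equality, and therefore supports equality by Definition~\ref{def:boundedsupport2b}.

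\emph{Expected difficulty.} The argument is essentially a direct application of Lemma~\ref{lem:generalpin}, so no serious obstacle is anticipated. The only conceptual point is recognising that the slack already built into Definitions~\ref{def:equality} and~\ref{def:boundedsupport2b} is matched by the approximation quality provided by Lemma~\ref{lem:generalpin}: the conditional distribution puts mass exactly $1/2$ on each of $\sigma_S=\zeros$ and $\sigma_S=\ones$, so approximation to within $\epsilon/2$ is automatically an $\epsilon$-realisation of $2$-equality, with no further symmetrisation or appeal to Lemma~\ref{lem:gadgets}(\ref{it:equality}) required.
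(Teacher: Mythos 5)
Your proof is correct and follows essentially the same route as the paper: both items are direct applications of Lemma~\ref{lem:generalpin} (with the admissible collection $\mathcal{V}=(V_0)$), combined with Lemma~\ref{lem:gadgets} for Item~1. The only cosmetic difference is in Item~2, where you restrict to two vertices of $S$ and obtain an $\epsilon$-realisation of $2$-equality directly, while the paper realises $|S|$-equality and then invokes Lemma~\ref{lem:equalst}; both arguments implicitly use $|S|\geq 2$, which holds in all applications.
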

\begin{proof}
For Item~\ref{it:bnm}, choose $\epsilon$ in the range $0<\epsilon<|\mu^{\condV}_{f;H}(\sigma_v=1)-\mu^{\condV}_{f;H}(\sigma_v=0)|/2$. By Lemma~\ref{lem:generalpin}, there exists a hypergraph $H'$ with vertex set $V'\supseteq V$, such that for $s\in \{0,1\}$, it holds that 
\[|\mu_{f;H'}(\sigma_v=s)-\mu^{\condV}_{f;H}(\sigma_v=s)\leq \epsilon.\]
It follows that $\mu_{f;H'}(\sigma_v=1)>\mu_{f;H'}(\sigma_v=0)$, so by Lemma~\ref{lem:gadgets}, $f$ supports pinning-to-1.

For Item~\ref{it:bnm2}, for $\epsilon>0$, apply Lemma~\ref{lem:generalpin} to conclude that there exists a hypergraph $H'$  which $\epsilon$-realises $|S|$-equality. By Lemma~\ref{lem:equalst}, we obtain that $f$ supports equality.
\end{proof}

We remark here that, as in the analysis for Case I in Section~\ref{sec:caseI}, rather than formally defining $V_0$, in each of the subcases which we consider, we will typically say, e.g., pin vertices $x_1,x_2$ to 0 instead of specifying $V_0$ as $V_0=\{x_1,x_2\}$. Also, unless otherwise stated, we will have $\mathcal{V}=(V_0)$.

Our first application of Lemma~\ref{lem:zeropinning} is to show that if $w_0=0$, then $f$ also supports pinning-to-1. So  assume that $w_0=0$. Let $i$ be the smallest index such that $w_i\neq 0$ (if such an index does not exist, then $f=\zerof^{(k)}$), so that $1\leq i\leq k-1$ (if $i=k$ then $f=\allonef^{(k)}$). Let $e=\{x_1,x_2,\hdots,x_k\}$, pin $x_{i+1},\hdots,x_{k}$ to 0 and set $x:=x_1$. 
\begin{equation*}
\mu_{f;e}^{\condV}(\sigma_x=0)\propto \sum^{i-1}_{\ell=0}\binom{i-1}{\ell}w_\ell=0,\quad
\mu_{f;e}^{\condV}(\sigma_x=1)\propto \sum^{i-1}_{\ell=0}\binom{i-1}{\ell}w_{\ell+1}=w_i=1.
\end{equation*}
Thus, if $w_0\neq 1$, by Lemma~\ref{lem:zeropinning}, we obtain that $f$ also supports pinning-to-1, in which case we fall back in Case I.

Thus for the rest of the proof in this section we will assume that $w_0=1$ (otherwise, as we showed above, we fall back in Case I, where $f$ supports both pinning-to-0 and pinning-to-1). Let $i>0$ be the minimum index $i$ such that $w_i=1$ (we may assume that such an index exists, otherwise $f=\allzerof^{(k)}$). We consider the cases $i=1$ and $i\geq 2$ separately.

\subsubsection{The case $i\geq 2$}
The setting of this section is $w_0=w_i=1$ and $w_1=\hdots=w_{i-1}=0$ (and $f$ supports pinning-to-0). We may assume that $k>i$ (otherwise $f=\eqf^{(k)}$). 
\begin{lemma}\label{lem:igeq2}
If $k>i\geq 2$, then the function $f$ supports equality.
\end{lemma}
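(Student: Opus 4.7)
The plan is to apply Item~\ref{it:bnm2} of Lemma~\ref{lem:zeropinning}, which reduces the task of showing that $f$ supports equality to exhibiting a hypergraph $H$, an admissible pinning set $V_0 \subseteq V(H)$, and a subset $S \subseteq V(H)$ for which $\mu_{f;H}^{\mathrm{cond}(V_0)}(\sigma_S = \zeros) = \mu_{f;H}^{\mathrm{cond}(V_0)}(\sigma_S = \ones) = 1/2$. The hypotheses $w_0 = w_i = 1$ and $w_1 = \cdots = w_{i-1} = 0$ single out precisely two weight-maximising configurations of an edge (the all-zero and all-$i$-ones patterns among any designated block of $i$ vertices), so this reduction is a natural fit.

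First I would take $H$ to be the single hyperedge $e = \{x_1,\ldots,x_k\}$ and, using the fact that $f$ supports pinning-to-0, pin the vertices in $V_0 := \{x_{i+1},\ldots,x_k\}$ to spin $0$. Note that $V_0$ is non-empty since $k > i$, and admissibility holds because the all-zero assignment to $e$ has weight $w_0 = 1 > 0$. I would then set $S := \{x_1,\ldots,x_i\}$.

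Next I would compute the conditioned distribution. Any configuration with $\sigma_{V_0} = \zeros$ contributes to the partition function a single weight $w_\ell$, where $\ell = |\sigma^{-1}(1) \cap S|$ is the number of spin-1 vertices in $S$. By hypothesis, $w_\ell > 0$ if and only if $\ell \in \{0,i\}$, so the only surviving configurations are $\sigma_S = \zeros$ (weight $w_0 = 1$) and $\sigma_S = \ones$ (weight $w_i = 1$). Since these two events each contribute weight exactly $1$ to the conditional partition function and are the only events of positive measure, it follows that
\[
\mu_{f;e}^{\mathrm{cond}(V_0)}(\sigma_S = \zeros) \;=\; \mu_{f;e}^{\mathrm{cond}(V_0)}(\sigma_S = \ones) \;=\; \tfrac{1}{2}.
\]
Applying Item~\ref{it:bnm2} of Lemma~\ref{lem:zeropinning} then yields that $f$ supports equality.

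I do not expect any substantial obstacle here: once one notices that pinning the last $k-i$ vertices of a single edge to $0$ leaves only the two extreme configurations of the remaining $i$ vertices (because $w_\ell$ vanishes strictly between $\ell = 0$ and $\ell = i$), the conclusion is immediate from the earlier machinery. The case $k = i$ is excluded (it would force $f = \eqf^{(k)}$), so $V_0$ is automatically non-empty, and no boundary case requires separate handling.
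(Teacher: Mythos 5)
Your proof is correct and is essentially identical to the paper's: the paper likewise takes the single hyperedge $e$, pins $x_{i+1},\ldots,x_k$ to $0$, sets $S=\{x_1,\ldots,x_i\}$, observes that only $\sigma_S=\zeros$ and $\sigma_S=\ones$ have positive weight (each weight $1$), and concludes via Item~\ref{it:bnm2} of Lemma~\ref{lem:zeropinning}. Your explicit check of admissibility is a small bonus the paper leaves implicit.
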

\begin{proof}
We show that $f$ supports $i$-equality, so by Lemma~\ref{lem:equalst}, we obtain that $f$ supports equality. 

Let $e=\{x_1,\hdots,x_{k}\}$ and pin vertices $x_{i+1},\hdots,x_k$ to 0. Denote $S:=\{x_1,\hdots,x_i\}$. We have
\[\mu_{f;H}^{\condV}(\sigma_S=\zeros)\propto w_0=1,\quad \mu_{f;H}^{\condV}(\sigma_S=\ones)\propto w_i=1, \quad \mu_{f;H}^{\condV}(\neg \sigma_S^{\eq})\propto 0.\]
It follows that  $\mu_{f;H}^{\condV}(\sigma_x=0)=\mu_{f;H}^{\condV}(\sigma_x=1)=1/2$, so the result follows from Lemma~\ref{lem:zeropinning}.
\end{proof}
 
\vskip 0.2cm \noindent \textbf{Case IIIa} There exists $j>i$ such that $w_j=1$. 

Let $j$ be the minimum such index. We have the following lemma.
\begin{lemma}\label{lem:j2i}
If $j\neq 2i$, the function $f$ supports also pinning-to-1.
\end{lemma}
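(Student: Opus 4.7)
The plan is to construct, using a single hyperedge $e=\{x_1,\ldots,x_k\}$, an admissible collection $\mathcal{V}$ that combines pinning-to-$0$ (available by assumption) with equality (available by Lemma~\ref{lem:igeq2}), so that under $\mu_{f;e}^{\condV}$ some vertex $v$ is biased toward spin $1$. Applying Lemma~\ref{lem:generalpin} with sufficiently small $\epsilon$ will then produce a hypergraph $H'$ containing a vertex $v$ with $\mu_{f;H'}(\sigma_v=1)>\mu_{f;H'}(\sigma_v=0)$, and Item~\ref{it:pinning1} of Lemma~\ref{lem:gadgets} will yield that $f$ supports pinning-to-$1$.

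Concretely, I would take $B_1=\{x_1,\ldots,x_i\}$ and $B_2=\{x_{i+1},\ldots,x_j\}$, and define $\mathcal{V}$ by forcing equality on $B_1$, equality on $B_2$ (if $|B_2|=j-i\geq 2$; when $j-i=1$ this equality constraint is vacuous and $B_2$ just contains the single free vertex $x_{i+1}$), and pinning $x_{j+1},\ldots,x_k$ to $0$. Admissibility follows from the all-zero configuration, which has weight $w_0=1>0$. Letting $s_1,s_2\in\{0,1\}$ denote the common spin of $B_1$ and the spin in $B_2$ respectively, the total number of $1$s in $e$ is $is_1+(j-i)s_2$, so the four possible contributions to the conditioned partition function are $w_0$, $w_{j-i}$, $w_i$, $w_j$ corresponding to $(s_1,s_2)\in\{(0,0),(0,1),(1,0),(1,1)\}$. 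Taking $v=x_1\in B_1$, so that $\sigma_v=s_1$, the plan gives
\[
\mu_{f;e}^{\condV}(\sigma_v=0)\propto w_0+w_{j-i},\qquad \mu_{f;e}^{\condV}(\sigma_v=1)\propto w_i+w_j=2.
\]

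The crucial step is to verify that $w_{j-i}=0$, which is precisely where the hypothesis $j\neq 2i$ enters. The case analysis I would carry out is: (i) if $1\leq j-i<i$ then $w_{j-i}=0$ because $w_1=\cdots=w_{i-1}=0$ (the setting of Section~\ref{sec:caseIII}); (ii) if $i<j-i<j$ (equivalently $j>2i$) then $w_{j-i}=0$ by the minimality of $j$ as the smallest index above $i$ with $w_j=1$; (iii) the remaining possibility $j-i=i$ is excluded by the hypothesis $j\neq 2i$. Hence $w_{j-i}=0$ in all permitted cases, yielding $\mu_{f;e}^{\condV}(\sigma_v=1)=2/3>1/3=\mu_{f;e}^{\condV}(\sigma_v=0)$. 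I would then choose $\epsilon<1/6$ in Lemma~\ref{lem:generalpin} applied with $S=\{v\}$ to transfer this strict inequality to some honest hypergraph $H'$ and invoke Lemma~\ref{lem:gadgets}, Item~\ref{it:pinning1}.

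The main (and essentially only) obstacle is the verification $w_{j-i}=0$: it is exactly the identity $w_{j-i}=w_i=1$ that would make the gadget break at $j=2i$, the very case that requires a different argument later in the proof. A minor subtlety is the degenerate case $j-i=1$, where forcing equality on a singleton is not permitted by Definition~\ref{def:equality}; however, the constraint is vacuous, so one simply treats $x_{i+1}$ as a free variable and the same weight bookkeeping applies unchanged.
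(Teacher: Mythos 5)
Your proof is correct, and for the harder half of the statement it is genuinely simpler than the paper's. The paper splits into two cases. For $j<2i$ it uses a single hyperedge with only pinning-to-$0$ (no equality), reducing the comparison to $1+\binom{j-1}{i}$ versus $1+\binom{j-1}{i-1}$ and invoking the unimodality of binomial coefficients. For $j>2i$ it builds a three-hyperedge gadget ($e_X,e_Y,e_Z$ sharing blocks of vertices) with four equality sets, and argues by bounding the total weight of the surviving configurations ($\le 2$ when $\sigma_x=0$, $\ge 4$ when $\sigma_x=1$). Your single-edge gadget with the two equality blocks $\{x_1,\ldots,x_i\}$ and $\{x_{i+1},\ldots,x_j\}$ handles both cases uniformly: the four surviving configurations carry weights $w_0,w_{j-i},w_i,w_j$, and the whole lemma reduces to the observation that $w_{j-i}=0$ exactly when $j\neq 2i$ (by $w_1=\cdots=w_{i-1}=0$ when $j<2i$, and by minimality of $j$ when $j>2i$). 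This also makes the obstruction at $j=2i$ completely transparent: there $w_{j-i}=w_i=1$ and the gadget ties at $2$ versus $2$. Your bookkeeping of the degenerate subcase $j-i=1$ (drop the vacuous singleton equality block and leave $x_{i+1}$ free) and the transfer via Lemma~\ref{lem:generalpin} with $\epsilon<1/6$ followed by Lemma~\ref{lem:gadgets} are both sound — this is exactly the mechanism the paper packages as Lemma~\ref{lem:zeropinning}. The only thing the paper's $j<2i$ branch buys over yours is that it does not use equality at all, but since equality is available in this setting by Lemma~\ref{lem:igeq2}, nothing is lost.
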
 
\begin{proof}
Suppose first that $j< 2i$. Let $e=\{x_1,\hdots,x_k\}$, pin $x_{j+1},\hdots,x_k$ to zero and set $x:=x_1$. We have:
\begin{align*}
\mu_{f;e}^{\condV}(\sigma_x=0)&\propto \sum^{j-1}_{\ell=0}\binom{j-1}{\ell}w_{\ell}=w_0+\binom{j-1}{i}w_i=1+\binom{j-1}{i},\\
\mu^{\condV}_{f;e}(\sigma_x=1)&\propto \sum^{j-1}_{\ell=0}\binom{j-1}{\ell}w_{\ell+1}=\binom{j-1}{i-1}w_i+w_{j}=1+\binom{j-1}{i-1}.
\end{align*}
From $2i>j$, we have $i-1\geq (j-1)/2$, so that $\mu_{f;e}^{\condV}(\sigma_x=1)>\mu_{f;e}^{\condV}(\sigma_x=0)$, so Lemma~\ref{lem:zeropinning}  yields that $f$ also supports pinning-to-1.

We next consider the more difficult case $j>2i$. Let $H$ be the hypergraph with vertex set 
\[\{x_1,\hdots,x_k\}\cup \{y_{i+1},\hdots,y_k\}\cup \{z_{i+1},\hdots,z_k\}\] 
and hyperedges $\{e_X,e_Y,e_Z\}$ with 
\begin{gather*}
e_X=\{x_1,\hdots,x_j,\hdots,x_k\},\ e_Y=\{x_1,..,x_i, y_{i+1},\hdots,y_j,\hdots,y_k\},\\
 e_Z=\{x_{i+1},\hdots, x_{2i}, z_{i+1},\hdots,z_{j},\hdots,z_k\}.
\end{gather*}
We will  pin to zero the vertices in $V_0:=\{x_{j+1},\hdots,x_{k}\}\cup \{y_{j+1},\hdots,y_k\}\cup \{z_{j+1},\hdots,z_k\}$ (thus, in the conditional distribution, each of the hyperedges $e_X,e_Y,e_Z$ has $j$ vertices). Also, we will force equality among the sets $V_2:=\{x_1,x_2,...,x_{2i}\}$,  $V_3:=\{x_{2i+1},\hdots,x_{j}\}$, $V_4:=\{y_{i+1},\hdots, y_j\}$ and $V_5:=\{z_{i+1},\hdots, z_j\}$. Let $x:=x_1$.

Consider $\mu^{\condV}_{f;H}(\cdot)=\mu_{f;H}(\cdot\mid \sigma_{V_0}=\ones,\sigma_{V_2}^{\eq},\sigma_{V_3}^{\eq},\sigma_{V_4}^{\eq},\sigma_{V_5}^{\eq})$. We will show that 
\begin{equation}\label{eq:edc}
\mu^{\condV}_{f;H}(\sigma_x=1)>\mu^{\condV}_{f;H}(\sigma_x=0),
\end{equation}
so the result will follow  by Lemma~\ref{lem:zeropinning}. To see \eqref{eq:edc}, assume first that $\sigma_{x}=0$. In the conditional distribution $\mu^{\condV}_{f;H}(\cdot)$, it then holds that $\sigma_{V_2}=\zeros$. From $j>2i$, there is only one way to satisfy $e_Y, e_Z$ by setting $\sigma_{V_4}=\sigma_{V_5}=\zeros$. Further, there are at most two ways to satisfy $e_X$ (by setting $\sigma_{V_3}$ to be $\zeros$ and possibly $\ones$, if $w_{j-2i}=1$). Thus,  the total weight of configurations with $\sigma_{x}=0$ is at most two. On the other hand, if $\sigma_x=1$ and hence $\sigma_{V_2}=\ones$,  from $w_j=1$, there is at least one way to satisfy to $e_X$ (by setting $\sigma_{V_3}=\ones$). Further, from $w_i=w_j=1$, there are two ways to satisfy each of $e_Y,e_Z$ (any combination of $\{\sigma_{V_4},\sigma_{V_5}\}\in\{\zeros,\ones\}$ works). Thus, the total weight of configurations with $\sigma_{x}=1$ is at least 4. We thus obtain that $\mu^{\condV}_{f;H}(\sigma_x=1)>\mu^{\condV}_{f;H}(\sigma_x=0)$, as wanted.

This concludes the proof.
\end{proof}

By Lemma~\ref{lem:j2i}, we may thus assume that $j=2i$ (otherwise we fall back into Case I, since $f$ supports both pinning-to-0 and pinning-to-1). That is, we have $w_0=w_i=w_{2i}$ and for $j\leq 2i$ with $j\neq 0,i,2i$, it holds that $w_j=0$.

If $i>2$, by pinning $k-2i$ variables to 0, we (approximately) get a non-trivial self-dual constraint with arity $2i$ (since $w_0=w_i=w_{2i}=1$) and hence the proof may be completed analogously to Case II. In particular, let  $e=\{x_1,\hdots,x_k\}$ and pin $x_{2i+1},\hdots,x_{k}$ to 0. Denote $x:=x_{2i-1}$, $y:=x_{2i}$. Note that if $\sigma$ is such that $\sigma_x=\sigma_y=0$, $\sigma$ has non-zero weight iff exactly $0$ or $i$ vertices from $x_{1},\hdots,x_{2i-2}$ have spin 1 under $\sigma$. Similarly, if $\sigma$ is such that $\sigma_x=\sigma_y=1$, $\sigma$ has non-zero weight iff exactly $i-2$ or $2i-2$ vertices from $x_{1},\hdots,x_{2i-2}$ have spin 1 under $\sigma$. On the other hand, if $\sigma$ is such that  $\sigma_x=0$ and $\sigma_y=1$ (or vice versa), $\sigma$ has non-zero weight iff exactly $i-1$ vertices from $x_{1},\hdots,x_{2i-2}$ have spin 1. It follows that for $\{s_1,s_2\}\in\{0,1\}$, we have $\mu^{\condV}_{f;e}(\sigma_x=s_1, \sigma_y=s_2)\propto Z_{s_1s_2}$ where 
\[Z_{00}=\binom{2i-2}{0}+\binom{2i-2}{i},\quad Z_{11}=\binom{2i-2}{i-2}+\binom{2i-2}{2i-2}, \quad Z_{01}=Z_{10}=\binom{2i-2}{i-1}.\]
Note that $Z_{00}=Z_{11}$ and $Z_{00}Z_{11}<Z_{01}Z_{10}$ by Lemma~\ref{cl:ineq} (since $i>2$), which gives that $Z_{00}=Z_{11}<Z_{01}=Z_{10}$. Lemma~\ref{hardness} thus implies that, for all sufficiently large $\Delta$, there exists $c>1$ such that  $\Hyper2Spinf$ is NP-hard.

Thus, it remains to consider the case $i=2$, i.e., $w_0=w_2=w_4=1$ and $w_1=w_3=0$. Let $j'$ be the largest integer such that $w_{0}=w_{2}=\hdots=w_{2j'}=1$ and $w_1=\hdots=w_{2j'-1}=0$. We may assume that $2j'<k$, otherwise $f=\evenf^{(k)}$. 

Assume first $w_{2j'+1}=1$. Let $e=\{x_1,x_2,\hdots,x_k\}$, pin $x_{2j'+2},\hdots,x_k$ to 0 and set $x:=x_{1}$. We have:
\begin{equation*}
\mu_{f;e}^{\condV}(\sigma_x=0)\propto\sum^{2j'}_{\ell=0}\binom{2j'}{\ell}w_\ell=2^{2j'-1},\quad
\mu_{f;e}^{\condV}(\sigma_x=1)\propto \sum^{2j'}_{\ell=0}\binom{2j'}{\ell}w_{\ell+1}=2^{2j'-1}+1.
\end{equation*}
By Lemma~\ref{lem:zeropinning}, we obtain that $f$ also supports pinning-to-1, in which case we fall back in Case I.

Thus, we may assume that $w_{2j'+1}=0$. We may further assume that $2j'+1<k$, otherwise $f=\evenf^{(k)}$. By the choice of $j'$, it follows that $w_{2j'+2}=0$. As before, let $e=\{x_1,x_2,\hdots,x_k\}$, pin $x_{2j'+3},\hdots,x_k$ to 0,  set $x:=x_{1}$, $y:=x_{2}$. We have:
\begin{align*}
\mu_{f;e}^{\condV}(\sigma_x=\sigma_y=0)&\propto\sum^{2j'}_{\ell=0}\binom{2j'}{\ell}w_\ell=2^{2j'-1},\\
\mu_{f;e}^{\condV}(\sigma_x=0,\sigma_y=1)&\propto\sum^{2j'}_{\ell=0}\binom{2j'}{\ell}w_{\ell+1}=2^{2j'-1}, \\
\mu_{f;e}^{\condV}(\sigma_x=\sigma_y=1)&\propto \sum^{2j'}_{\ell=0}\binom{2j'}{\ell}w_{\ell+2}=2^{2j'-1}-1.
\end{align*}
Once again, we obtain from Lemma~\ref{hardness} that for all sufficiently large $\Delta$, there exists $c>1$ such that $\Hyper2Spinf$ is $\NP$-hard.

\vskip 0.2cm\noindent \textbf{Case IIIb} There does not exist $j>i$ such that $w_j=1$. Then we have $w_0=w_i=1$ and $w_j=0$ for all $j\neq 0,i$ (recall also that $i\geq2$). We may assume that $i<k$, otherwise $f=\eqf^{(k)}$. We consider separately the cases $i=2$ and $i>2$. 

Consider first the case $i>2$. Let $e=\{x_1,x_2,\hdots,x_k\}$, pin $x_{i+2},\hdots,x_k$ to zero, force equality between $x_3,\hdots,x_{i+1}$ and set $x:=x_{1}, y:=x_{2}$. We have:
\begin{align*}
\mu_{f;e}^{\condV}(\sigma_x=0,\sigma_y=0)&\propto w_0+w_{i-1}=1,\\
\mu_{f;e}^{\condV}(\sigma_x=0,\sigma_y=1)&\propto w_1+w_{i}=1,\\
\mu_{f;e}^{\condV}(\sigma_x=1,\sigma_y=1)&\propto w_2+w_{i+1}=0.
\end{align*}
From Lemma~\ref{hardness}, we obtain that for all sufficiently large $\Delta$, there exists $c>1$ such that $\Hyper2Spinf$ is $\NP$-hard.

Finally, we consider the case $i=2$. We may assume that $k\geq 4$, otherwise $f=\evenf^{(3)}$. Let $e=\{x_1,x_2,\hdots,x_k\}$, pin $x_{5},\hdots,x_k$ to zero and set $x:=x_{1}, y:=x_{2}$ (note that $x_3,x_4$ are ``free"). We have:
\begin{align*}
\mu_{f;e}^{\condV}(\sigma_x=0,\sigma_y=0)&\propto \binom{2}{0}w_0+\binom{2}{2}w_2=2,\\
\mu_{f;e}^{\condV}(\sigma_x=0,\sigma_y=1)&\propto \binom{2}{1}w_2=2,\\
\mu_{f;e}^{\condV}(\sigma_x=1,\sigma_y=1)&\propto \binom{2}{0}w_2=1.
\end{align*}
From Lemma~\ref{hardness}, we obtain that for all sufficiently large $\Delta$, there exists $c>1$ such that $\Hyper2Spinf$ is $\NP$-hard.

\subsubsection{The case $i=1$}
In this case, we begin with the assumption that $w_0=w_1=1$ (and $f$ supports pinning-to-0).

Let $j$ be the minimum index $\ell>1$ such that $w_{\ell}=1$. Let $e=\{x_1,x_2,\hdots,x_k\}$. If $j\geq 3$ or such a $j$ does not exist, pin $x_{3},\hdots,x_k$ to 0, and  set $x:=x_{1}$, $y:=x_{2}$. We have:
\begin{align*}
\mu_{f;e}^{\condV}(\sigma_x=\sigma_y=0)&\propto w_0=1,\\
\mu_{f;e}^{\condV}(\sigma_x=0,\sigma_y=1)&\propto w_1=1,\\
\mu_{f;e}^{\condV}(\sigma_x=\sigma_y=1)&\propto w_2=0.
\end{align*}
It follows by Lemma~\ref{hardness} that for all sufficiently large $\Delta$, there exists $c>1$ such that $\Hyper2Spinf$ is $\NP$-hard.

 If $j=2$, let $j'$ be the first index $j'>j$ such that $w_{j'}=0$. We may assume that $j'$ exists otherwise $f=\onef^{(k)}$. We have $j'\geq 3$.  Let $e=\{x_1,x_2,\hdots,x_k\}$. Pin $x_{j'+1},\hdots,x_k$ to 0,  set $x:=x_{1}$, $y:=x_{2}$. We have:
\begin{align*}
\mu_{f;e}^{\condV}(\sigma_x=\sigma_y=0)&\propto\sum^{j'-2}_{\ell=0}\binom{j'-2}{\ell}w_\ell=2^{j'-2},\\
\mu_{f;e}^{\condV}(\sigma_x=0,\sigma_y=1)&\propto\sum^{j'-2}_{\ell=0}\binom{j'-2}{\ell}w_{\ell+1}=2^{j'-2},\\
\mu_{f;e}^{\condV}(\sigma_x=\sigma_y=1)&\propto \sum^{j'-2}_{\ell=0}\binom{j'-2}{\ell}w_{\ell+2}=2^{j'-2}-1.
\end{align*}
From Lemma~\ref{hardness}, for all sufficiently large $\Delta$, there exists $c>1$ such that $\Hyper2Spinf$ is $\NP$-hard.
 
 \bibliographystyle{plain}
 \bibliography{\jobname}

\end{document}